\documentclass[conference]{IEEEtran}
\IEEEoverridecommandlockouts
\usepackage{cite}
\usepackage{amsmath,amssymb,amsfonts,amsthm}
\usepackage{algorithmic}
\usepackage{graphicx}
\usepackage{textcomp}
\usepackage{xcolor}
\usepackage{enumitem}
\usepackage{algorithm}
\usepackage{booktabs}
\usepackage{pifont}
\usepackage{url}
\usepackage{subcaption}
\usepackage{caption}
\usepackage{multirow}
\usepackage{adjustbox}
\usepackage{makecell}   
\usepackage{marvosym}
\usepackage[capitalize,noabbrev]{cleveref}

\newtheorem{theorem}{Theorem}

\newtheorem{definition}[theorem]{Definition}


\def\BibTeX{{\rm B\kern-.05em{\sc i\kern-.025em b}\kern-.08em
    T\kern-.1667em\lower.7ex\hbox{E}\kern-.125emX}}
\begin{document}



\title{
Piecewise Linear Approximation in Learned Index Structures: Theoretical and Empirical Analysis
\thanks{\Letter~Correspondence to Dr.~Qiyu Liu. 
}
}

\author{
\IEEEauthorblockN{Jiayong Qin\IEEEauthorrefmark{1}, Xianyu Zhu\IEEEauthorrefmark{2}, Qiyu Liu\textsuperscript{\Letter}\IEEEauthorrefmark{1}, Guangyi Zhang\IEEEauthorrefmark{3}, Zhigang Cai\IEEEauthorrefmark{1}, Jianwei Liao\IEEEauthorrefmark{1}, \\
Sha Hu\IEEEauthorrefmark{1}, Jingshu Peng\IEEEauthorrefmark{4}, Yingxia Shao\IEEEauthorrefmark{5}, Lei Chen\IEEEauthorrefmark{6}}

\IEEEauthorblockA{
\IEEEauthorrefmark{1}Southwest University, 
\IEEEauthorrefmark{2}RUC, 
\IEEEauthorrefmark{3}SZTU, 
\IEEEauthorrefmark{4}ByteDanace, 
\IEEEauthorrefmark{5}BUPT,
\IEEEauthorrefmark{6}HKUST (GZ)\\
\{jiayongqin1, xianyuzhuruc, qyliu.cs, guangyizhang.jan, liaotoad\}@gmail.com, 
\{czg, husha\}@swu.edu.cn,\\
jingshu.peng@bytedance.com, 
shaoyx@bupt.edu.cn, 
leichen@cse.ust.hk
}

}

\maketitle

\begin{abstract}
A growing trend in the database and system communities is to augment conventional index structures, such as B+-trees, with machine learning (ML) models. 
Among these, error-bounded Piecewise Linear Approximation ($\epsilon$-PLA) has emerged as a popular choice due to its simplicity and effectiveness. 
Despite its central role in many learned indexes, the design and analysis of $\epsilon$-PLA fitting algorithms remain underexplored. 
In this paper, we revisit $\epsilon$-PLA from both \emph{theoretical} and \emph{empirical} perspectives, with a focus on its application in learned index structures. 
We first establish a fundamentally improved \emph{lower bound} of $\Omega(\kappa \cdot \epsilon^2)$ on the expected segment coverage for existing $\epsilon$-PLA fitting algorithms, where $\kappa$ is a data-dependent constant. 
We then present a comprehensive benchmark of state-of-the-art $\epsilon$-PLA algorithms when used in different learned data structures. 
Our results highlight key trade-offs among model accuracy, model size, and query performance, providing actionable guidelines for the principled design of future learned data structures.  
\end{abstract}

\begin{IEEEkeywords}
piecewise linear approximation, learned data structure, benchmark
\end{IEEEkeywords}

\section{Introduction}\label{sec:introduction}

An emerging line of recent research explores augmenting conventional data structures, such as B$^+$-tree and Hash table, with simple machine learning (ML) models, leading to the concept of \emph{learned index}~\cite{kraska2018case,mitzenmacher2018model,qi2020effectively}. 
By exploiting data distribution characteristics, these structures optimize their internal layout, achieving superior performance with both space and time efficiency compared to their conventional counterparts, as demonstrated in recent benchmark studies~\cite{marcus2020benchmarking,wongkham2022updatable}. 

Rather than relying on sophisticated deep learning (DL) models, learned indexes typically favor simple ML approaches, such as linear models~\cite{ferragina2020pgm}. 
This is because DL models often depend on heavy runtimes like Pytorch and Tensorflow, which are costly and less flexible for systems with stringent space and time requirements. 
Moreover, unlike many modern ML tasks~\cite{lecun2015deep}, real-world datasets in learned data structure design are often \emph{easy to learn}~\cite{wongkham2022updatable}, making even simple models sufficiently accurate. 
Among various ML models, the error-bounded piecewise linear approximation model ($\epsilon$-PLA) is a popular choice due to its superior trade-off between its fitting power and inference efficiency~\cite{ferragina2020pgm,ferragina2020learned,galakatos2019fiting,ChenLLD0LZ23,dai2020wisckey,liu2021lhist,liu2022hap,liu2025bittuner,liu2024ldc}. 

Given an array of \emph{sorted} keys $\mathcal{K}=\{k_1,\cdots,k_n\}$, the conventional indexing problem~\cite{bayer1970organization} is to construct a mapping from keys to their corresponding sorting indexes $\mathcal{I}=\{1,\cdots,n\}$, defined as $f: \mathcal{K}\mapsto\mathcal{I}$. 
Intuitively, $f$ can be regarded as the cumulative distribution function (CDF) of $\mathcal{K}$, scaled by a factor of data size $n$. 
As illustrated in~\cref{fig:pla_eg}, existing fitting algorithms, such as SwingFilter~\cite{ORourke81} and GreedyPLA~\cite{XiePZZD14}, can find an $\epsilon$-PLA that approximates $f$ with a maximum error bounded by $\epsilon$. 
Querying an arbitrary $q$ involves two steps: 
\ding{182}~Identify the segment $\ell$ of the trained $\epsilon$-PLA that covers $q$ and compute the predicted index $\ell(q)$; 
\ding{183}~Perform an exact ``last-mile'' search within the range $[\ell(q)-\epsilon, \ell(q)+\epsilon]$ to locate $q$ in the sorted array $\mathcal{K}$. 
Existing learned indexes employ different strategies to efficiently locate the segment $\ell$ that covers a given query key $q$. 
For example, FITing-Tree~\cite{galakatos2019fiting} adopts a conventional B$^+$-tree to index the segments, while PGM-Index~\cite{ferragina2020pgm} recursively constructs $\epsilon$-PLA until a pre-defined segment size threshold is reached. 



\begin{figure}[t]
     \centering
     \begin{subfigure}[b]{0.48\textwidth}
         \centering
         \includegraphics[width=\textwidth]{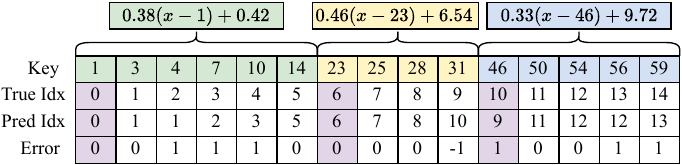}
         \caption{Illustration of a toy $\epsilon$-PLA model trained on 15 keys ($\epsilon=1$). To search for a query key $q=28$, the segment covering $q$ is first found to compute the predicted index for $q$, which is $\lfloor 0.46\cdot(28-23)+6.54\rfloor=8$. 
         As the provided $\epsilon$-PLA model is error-bounded, the true index for $q=28$ lies within the range $[7, 9]$.}
         \label{fig:pla_eg}
     \end{subfigure}
     \begin{subfigure}[b]{0.48\textwidth}
         \centering
         \includegraphics[width=\textwidth]{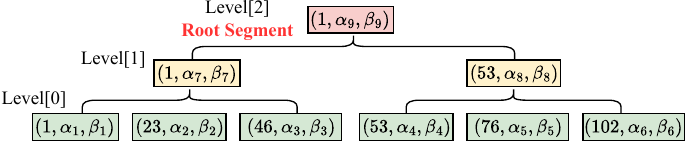}
         \caption{Illustration of a PGM-Index~\cite{ferragina2020pgm} which recursively constructs $\epsilon$-PLA until there is a single model. Note that $(s, \alpha, \beta)$ encodes a linear segment $f(x)=\alpha\cdot(x-s)+\beta$.}
         \label{fig:pgm_eg}
     \end{subfigure}
        \caption{Illustration of the $\epsilon$-PLA model and learned index structure based on $\epsilon$-PLA.}
        \vspace{-3ex}
\end{figure}

\begin{table*}[t]
    \centering
    \caption{Major results and core assumptions of existing theoretical results on the segment coverage $C(\epsilon)$. In summary, we establish a fundamentally improved lower bound under relaxed assumptions with broader applicability. }
    \label{tab:summary_theoretical_results}
    \begin{tabular}{cccc}
    \toprule
        \textbf{Existing Work} & \textbf{Major Result} & \textbf{Core Assumption(s)} & \textbf{Applicable PLA Algorithm(s)} \\\midrule
       VLDB 2020~\cite{ferragina2020pgm} & $C(\epsilon)\geq 2\epsilon$ & N.A. & Optimal $\epsilon$-PLA \\
       ICML 2020~\cite{ferragina2020learned} & $C(\epsilon)=\mu^2\cdot\epsilon^2/\sigma^2$ & i.i.d.~assumptions on \textbf{\emph{gaps}} and $\epsilon\gg\sigma/\mu$ & Optimal $\epsilon$-PLA \\
       \textbf{Ours} & $C(\epsilon)\geq\kappa\epsilon^2$ & i.i.d.~assumptions on \textbf{\emph{keys}} & Both optimal and greedy $\epsilon$-PLA\\
       \bottomrule
    \end{tabular}
    \vspace{-3ex}
\end{table*}

Efficient and effective $\epsilon$-PLA fitting algorithms are essential for building a wide range of learned indexes~\cite{ferragina2020pgm,ferragina2020learned,galakatos2019fiting,ChenLLD0LZ23,dai2020wisckey}. 
The study of $\epsilon$-PLA models dates back to the 1960s~\cite{cameron1966piece}, and has since been extensively explored, particularly in the context of time series data compression~\cite{chiarot2023time}. 
Typical $\epsilon$-PLA fitting algorithms can be classified into two types: 
\ding{182} \textbf{Optimal $\epsilon$-PLA}, which aims to minimize the number of segments under the given error bound $\epsilon$. 
Typical works include ParaOptimal~\cite{ORourke81} and OptimalPLR~\cite{XiePZZD14}. 
\ding{183} \textbf{Greedy $\epsilon$-PLA}, which prioritizes construction speed by extending segments incrementally until the error bound is violated, potentially at the cost of using more segments. 
Typical algorithms are GreedyPLR~\cite{XiePZZD14} and SwingFilter~\cite{elmeleegy2009online}.

Although PLA algorithms have been extensively studied and applied for decades, their performance within the context of learned indexes remains largely unexplored. 
For example, while the PGM-Index~\cite{ferragina2020pgm} employs the optimal SwingFilter algorithm~\cite{ORourke81} to construct its $\epsilon$-PLA segments, it does not investigate how alternative PLA algorithms may impact index performance. 
Motivated by this, in this paper, we comprehensively revisit $\epsilon$-PLA fitting algorithms for learned indexes from both \textbf{theoretical} and \textbf{empirical} perspectives.

From the theoretical aspect, a key research question for an arbitrary $\epsilon$-PLA fitting algorithm is: 
\textbf{\emph{How many line segments (or inversely, points covered by each segment) are required to satisfy the error bound $\epsilon$ for a given dataset?}} 
This directly impacts the performance of PLA-based learned index as more segments (or fewer points covered by each segment) improve accuracy but increase storage overhead, and vice versa.  
The PGM-Index~\cite{ferragina2020pgm} introduces the first lower bound for segment coverage as $2\epsilon$, based on the piecewise constant function (a special case of PLA). 
Ferragina et~al.~\cite{ferragina2020learned} further derives the expected segment coverage as a quadratic function $\mu^2\epsilon^2/\sigma^2$, where $\mu$ and $\sigma^2$ are the mean and variance of the \emph{gaps} between the consecutive keys, respectively. 
A key assumption, used in~\cite{ferragina2020learned} and inherited by subsequent studies~\cite{boffa2022learned,ChenLLD0LZ23}, is that the $i$-th sorted key $k_i$ is a realization of a random process $K_i=K_{i-1} + G_i$, where $G_i$'s, referred to as \emph{gaps}, are \textbf{\emph{i.i.d.}}~non-negative random variables. 

In this work, we present a simple yet nontrivial analytical framework to re-examine the theoretical foundations of $\epsilon$-PLA. 
Rather than impractical \emph{gap}-based assumptions in previous works~\cite{ferragina2020learned,ChenLLD0LZ23,boffa2022learned} (as detailed in \cref{subsec:existing_results}), we directly assume keys are i.i.d.~samples drawn from an \emph{arbitrary} distribution. 
Under relaxed assumptions, we derive a tighter \emph{lower bound} of the expected coverage per segment in an $\epsilon$-PLA, given by $\Omega(\kappa\epsilon^2)$, where $\kappa$ is a data-dependent constant. 
As summarized in \cref{tab:summary_theoretical_results}, to the best of our knowledge, our lower bound on $\epsilon$-PLA is the tightest result for both optimal and greedy $\epsilon$-PLA construction algorithms.

The established lower bound highlights the worst-case behavior of generic $\epsilon$-PLA fitting algorithms. 
Yet, it remains unclear how different $\epsilon$-PLA algorithms impact the practical performance of learned indexes. 
This raises our second research question:
\textbf{\emph{How do different $\epsilon$-PLA fitting algorithms influence construction time, index size, and query efficiency?}}
To answer this, we introduce \textbf{PLABench}\footnote{Our benchmark is made publicly available at \url{https://github.com/bdhxxnix/PLABench}.}, the first comprehensive benchmark designed to systematically evaluate the role of $\epsilon$-PLA fitting in learned indexes. 
PLABench provides unified, optimized implementations of a wide range of representative $\epsilon$-PLA algorithms, covering both \textbf{optimal and greedy} strategies, as well as \textbf{single- and multi-threading} variants, and integrates them into two canonical learned index structures: FITing-Tree~\cite{galakatos2019fiting} and PGM-Index~\cite{ferragina2020pgm}. 
PLABench enables an in-depth analysis of the space-time trade-offs induced by PLA choices across diverse datasets and query workloads. 
Moreover, PLABench can be easily extended to support future PLA-based learned index designs, benefiting model selection and hyperparameter tuning.

To the best of our knowledge, this is the first work to revisit $\epsilon$-PLA fitting algorithms from both theoretical and empirical perspectives in the context of learned index structures. 
Our technical contributions are summarized as follows:
\begin{enumerate}[leftmargin=*,label={\bfseries C\arabic*:}]
    \item We present a simple yet effective analytical framework for modeling the segment coverage of $\epsilon$-PLA, avoiding impractical assumptions commonly made in prior work. 
    \item Based on \textbf{C1}, we derive the \emph{tightest} known lower bound on the expected segment coverage, $\Omega(\kappa \cdot \epsilon^2)$, for any sorted key set sampled from unknown distributions. 
    \item We introduce PLABench, the first comprehensive benchmark for evaluating the impact of different $\epsilon$-PLA algorithms on learned index performance, offering practical guidance for the design of future PLA-based learned index structures. 
\end{enumerate}

The remainder of this paper is structured as follows. 
\cref{sec:preliminaries} overviews the basis of $\epsilon$-PLA and existing fitting algorithms. 
\cref{sec:theory} theoretically analyze the expected segment coverage of existing $\epsilon$-PLA algorithms 
\cref{sec:benchmark_setup} introduces the benchmark setup, and \cref{sec:exp_results} presents and discusses the benchmark results. 
\cref{sec:related_work} surveys related works, and finally, \cref{sec:conclusion} concludes the paper. 
\section{Preliminaries}\label{sec:preliminaries}
In this section, we provide an overview of $\epsilon$-PLA fitting algorithms and discuss existing theoretical results. 
\cref{tab:notations} summarizes the major notations used hereafter. 

\begin{table}[t]
    \centering
    \caption{Summary of major notations.}
    \label{tab:notations}
    \begin{tabular}{c|c}
    \toprule
        \textbf{Notation} & \textbf{Explanation} \\\midrule
        $\mathcal{K}, \mathcal{I}$ & the input sorted key set and the corresponding index set \\
        $K_{(i)}$ & the $i$-th order statistics of $n$ random keys\\
        $\epsilon$ & the error bound of an $\epsilon$-PLA model \\
        $(s_i, \alpha_i, \beta_i)$ & the starting point, slope, and intercept of the $i$-th segment\\
        $\ell(\cdot)$ & a specific linear segment of an $\epsilon$-PLA model \\
        $C(\epsilon)$ & the expected segment coverage given $\epsilon$\\
        $\rho, \gamma, \xi$ & data-dependent constants\\ 
    \bottomrule
    \end{tabular}
\end{table}

\subsection{Error-bounded Piecewise Linear Approximation}
In general, the error-bounded piecewise linear approximation model ($\epsilon$-PLA) is defined as follows.

\begin{definition}[$\epsilon$-PLA]\label{def:pla}
     Given two monotonically increasing lists $\mathcal{X}=\{x_1,\cdots, x_n\}$ and $\mathcal{Y}=\{y_1,\cdots,y_n\}$, 
     an $\epsilon$-PLA of $m$ line segments on $(\mathcal{X}, \mathcal{Y})=\{(x_i,y_i)\}_{i=1,\cdots,n}$ is defined as,
     \begin{equation}\label{eq:pla}
         f(x)=\begin{cases}
             \alpha_1\cdot (x-s_1)+\beta_1&\text{ if } x\in[s_1,s_2)\\
             \alpha_2\cdot (x-s_2)+\beta_2&\text{ if } x\in[s_2,s_3)\\
             \quad\cdots&\qquad\cdots\\
             \alpha_{m}\cdot (x-s_m)+\beta_{m}&\text{ if } x\in[s_m,+\infty)\\
         \end{cases}
     \end{equation}
     such that $|f(x_i)-y_i|\leq\epsilon$ for $\forall i=1,2,\cdots,n$. 
\end{definition}

When used for data indexing, $\mathcal{X}$ is set to a collection of $n$ sorted keys $\{k_1,\cdots,k_n\}$ and $\mathcal{Y}$ is set to the corresponding index set $\{1,\cdots,n\}$. 
Then, a fitted $\epsilon$-PLA model functions as a conventional index structure like B$^+$-tree. 

\subsection{$\epsilon$-PLA Fitting Algorithm}

The first optimal $\epsilon$-PLA fitting algorithm, ParaOptimal, was proposed by O'Rourke in the 1980s~\cite{ORourke81}, aiming to minimize the number of line segments while satisfying a given error bound $\epsilon$.
Subsequent variants, such as SlideFilter~\cite{elmeleegy2009online} and OptimalPLR~\cite{XiePZZD14}, are theoretically \textbf{equivalent} to ParaOptimal in terms of their segmentation output. 
These algorithms typically operate by updating convex hulls in the $\mathcal{X}\times\mathcal{Y}$ space when processing input points in an online manner (i.e., $(x_1,y_1), (x_2,y_2), \cdots$).
All aforementioned optimal methods achieve $O(n)$ time complexity with $O(n)$ auxiliary space during a single pass over the data. 

In addition to optimal algorithms, greedy heuristics have also been explored to further reduce the space overhead from $O(n)$ to $O(1)$. 
Typical greedy algorithms include SwingFilter~\cite{elmeleegy2009online} and GreedyPLR~\cite{XiePZZD14}. 
SwingFilter fits a segment using the first point in the stream as a pivot, then incrementally adjusts the upper and lower slope bounds to fit subsequent points within the error constraint. 
GreedyPLR follows a similar strategy but selects the pivot as the intersection of two maintained extreme lines, offering a more balanced slope initialization. 
\cref{tab:pla_algorithms} summarizes existing $\epsilon$-PLA fitting algorithms.


\subsection{Limitations of Existing Theoretical Results}\label{subsec:existing_results}
Although $\epsilon$-PLA fitting algorithms have been explored for decades in various domains, understanding why these algorithms perform well and how good their solutions truly are remains an underexplored area. 
In particular, we focus on the \textbf{segment coverage}, the average number of keys covered by each segment in a fitted $\epsilon$-PLA model.
As discussed in \cref{sec:introduction}, the PGM-Index~\cite{ferragina2020pgm} establishes a lower bound of $2\epsilon$ on segment coverage based on a piecewise constant model.  
Furthermore, by assuming that the $i$-th sorted key $k_i$ is sampled from a random process $K_i=K_{i-1} +G_i$, where $G_i$'s are \textbf{\emph{i.i.d.}} non-negative random \textbf{\emph{gaps}}, another recent work~\cite{ferragina2020learned} derives the segment coverage (\textbf{not} lower bound) as $\mu^2\epsilon^2/\sigma^2$, where $\mu$ and $\sigma^2$ are the mean and variance of $G_i$. 

\begin{table}[t]
    \centering
    \caption{Summary of existing $\epsilon$-PLA fitting algorithms.}
    \label{tab:pla_algorithms}
    \begin{tabular}{c|c|c|c}
    \toprule
        \multirow{2}{*}{\textbf{Algorithm}} & \multirow{2}{*}{\textbf{\makecell{Time\\Complexity}}} & \multirow{2}{*}{\textbf{\makecell{Space\\Complexity}}} & \multirow{2}{*}{\textbf{\makecell{Is\\Optimal?}}} \\
        & & & \\
    \midrule
        ParaOptimal~\cite{ORourke81} & $O(n)$ & $O(n)$ & Yes \\
        SlideFilter~\cite{elmeleegy2009online} & $O(n)$ & $O(n)$ & Yes \\
        OptimalPLA~\cite{XiePZZD14} & $O(n)$ & $O(n)$ & Yes \\
        SwingFilter~\cite{elmeleegy2009online} & $O(n)$ & $O(1)$ & No \\
        GreedyPLA~\cite{XiePZZD14} & $O(n)$ & $O(1)$ & No \\
    \bottomrule
    \end{tabular}
\end{table}


The above statistical assumptions are overly idealized in practice. 
First, the unbounded assumption on $K_i$ conflicts with real-world systems, where keys are typically fixed-width integers (e.g., {uint\_32} or {uint\_64}). 
Second, the ``i.i.d.'' assumption on gaps $G_i$ does not hold when drawing $n$ random keys from an arbitrary distribution and then sorting them. 
This is because consecutive order statistics $K_{(i-1)}$, $K_{(i)}$, and $K_{(i+1)}$ are correlated, resulting in dependencies between adjacent gaps $G_i=K_{(i)}-K_{(i-1)}$ and $G_{i+1}=K_{(i+1)}-K_{(i)}$. 
Additionally, the results in~\cite{ferragina2020learned} are valid only under the condition $\epsilon\gg\sigma/\mu$, which does not hold when $\epsilon$ is small. 

These limitations reveal a critical theoretical gap, which undermines the practical applicability of this fundamental building block of learned indexes and highlights the need for a more general and realistic theoretical study on $\epsilon$-PLA. 
\section{Lower Bound of Segment Coverage}\label{sec:theory}
In this section, we formally prove the segment coverage of \textbf{\emph{both}} optimal and greedy $\epsilon$-PLA algorithms, avoiding impractical assumptions on gaps. 
The roadmap is given as follows:

\noindent
\ding{182} \cref{subsec:coverage_def} formulates the segment coverage and introduces our improved assumption on input keys.

\noindent
\ding{183} Based on a constructive algorithm, \cref{subsec:uniform} and \cref{subsec:arbitrary} analyze the lower bound of segment coverage for uniformly and arbitrarily distributed keys. 

\noindent
\ding{184} Finally, \cref{subsec:extension} extends the results to existing $\epsilon$-PLA fitting algorithms.

\subsection{Segment Coverage and Key Assumptions}\label{subsec:coverage_def}
Let $\mathcal{K}=\{k_1,\cdots,k_n\}$ denote a set of $n$ sorted integer keys and $\mathcal{I}=\{1,\cdots,n\}$ represent the set of corresponding indexes. 
Rather than relying on previous impractical settings, we directly model the input sorted keys $\mathcal{K}=\{k_1,\cdots,k_n\}$ as the realization of \textbf{order statistics} $K_{(1)}, \cdots, K_{(n)}$ of $n$ i.i.d.~random samples $K_1,\cdots,K_n$, drawn from an \textbf{arbitrary} distribution with a continuous cumulative function (CDF). 

We assume that all $K_i$'s are bounded within the range $[0, \rho]$, where $\rho=\Theta(n)$. 
This assumption is reasonably made to ensure the slope of a line segment remains bounded. 
For example, when indexing $n$ keys drawn from a standard uniform distribution $\mathcal{U}(0, \rho)$, as shown later in \cref{theorem:index_coverage_uniform}, the slope of the segments in $\epsilon$-PLA should be $\alpha=\frac{n+1}{\rho}$. 
Note that, in real-world systems, such as indexes in database management system~\cite{marcus2020benchmarking} and inverted lists in information retrieval systems~\cite{pibiri2020techniques}, keys are typically unsigned integers stored as {uint\_32} (within $[0, 2^{32}-1]$) or {uint\_64} (within $[0, 2^{64}-1]$), whose upper bounds can be safely regarded as large as the data size $n$. 

\begin{definition}[Segment Coverage]\label{def:coverage}
    Given a dataset $(\mathcal{X}, \mathcal{Y})=\{(x_i,y_i)\}_{i=1,\cdots,n}$ and an error constraint $\epsilon$ ($\epsilon\geq1$), let $f=\mathcal{A}(\mathcal{X},\mathcal{Y})$ denote the $\epsilon$-PLA learned by a fitting algorithm $\mathcal{A}$, and let $\ell(x)=\alpha\cdot x+\beta$ represent a segment in $f$. 
    The segment coverage, denoted by $C(\epsilon)$, is defined as:  
    \begin{equation}\label{eq:coverage_def}
        C(\epsilon) = \mathbf{E}[I] = \sum\nolimits_{i\in\mathbb{I}} i\cdot\Pr\left(E_1\wedge E_2\cdots \wedge E_i\right),
    \end{equation}
    where $E_i$ denotes the event of $|\ell(x_i) - y_i|\leq\epsilon$, r.v.~$I$ represents the number of consecutive occurrences of $E_i$, and $\mathbb{I}=\{1,2,\cdots,|\mathbb{I}|\}$ is the set of all possible values of $i$. 
\end{definition}

W.l.o.g., we always consider $\ell$ to be the first segment of $f$. 
We further assume that $|\mathbb{I}|=o(n)$, i.e., $\lim_{n\to\infty}|\mathbb{I}|=\infty$ and $\lim_{n\to\infty}{|\mathbb{I}|}/{n}=0$, implying that the maximum segment coverage is of lower order than the data size $n$. 
Note that, an alternative to the probability term in \cref{eq:coverage_def} is $\Pr(E_1\wedge\cdots\wedge E_i\wedge \overline{E_{i+1}})$. 
Both formulations yield equivalent asymptotic behavior in subsequent theoretical analysis. 
For simplicity, we adopt the formulation given in \cref{eq:coverage_def} throughout the rest of this paper.


\textbf{Remarks.} 
In~\cite{ferragina2020learned}, segment coverage is modeled as the mean exit time (MET) problem of a random process~\cite{gardiner1985handbook,redner2001guide}, based on i.i.d.~assumptions on \textbf{gaps}. 
In contrast, here, we adopt a fundamentally different statistical model. 
To provide an intuitive interpretation, segment coverage can be related to the count of consecutive ``heads'' in a coin-flipping game, where each toss has varying probabilities and correlations.

\begin{algorithm}[t]
   \caption{Fixed Range Segmentation (FRS)}\label{alg:segment}
\begin{algorithmic}
   \STATE {\bfseries Input:} $n$ sorted keys $\{k_1,\cdots,k_n\}$, an error parameter $\epsilon$
   \STATE {\bfseries Output:} an $\epsilon$-PLA
    
    \STATE Initialize an empty segment set $S\gets \{\}$
    \STATE Initialize a segment $\ell(x)=\frac{n+1}{k_n-k_1}\cdot(x-k_1)+1$
   \FOR{$i=1$ {\bfseries to} $n$}
   \IF{$|\lfloor\ell(k_i)\rfloor - i|>\epsilon$}
   \STATE Append current segment $S\gets S\cup \{\ell\}$ 
   \STATE Update the segment $\ell(x)\gets \frac{n+1}{k_n-k_1}\cdot(x-k_i)+i$
   \ENDIF
   \ENDFOR

   {\bfseries return} $S$
\end{algorithmic}
\end{algorithm}

\subsection{Segment Coverage for Uniform Keys}\label{subsec:uniform}
We begin by analyzing the behavior of segment coverage for a simple algorithm constructed just for proof, and then generalize to practical $\epsilon$-PLA fitting algorithms. 
As detailed in \cref{alg:segment}, the Fixed Range Segmentation (FRS) algorithm always fixes the slope of each segment as $\frac{n+1}{\rho}$, where $\rho=k_n-k_1$ is an estimator of the range of keys.

\begin{theorem}[Segment Coverage for Uniform Keys]\label{theorem:index_coverage_uniform}
    Given a set of sorted integers $\mathcal{K}=\{k_1,\cdots,k_n\}$, w.l.o.g., suppose that $k_1,\cdots,k_n$ is a realization of order statistics $K_{(1)},\cdots,K_{(n)}$ of $n$ i.i.d.~random samples $K_1,\cdots,K_n$ drawn from $\mathcal{U}(0, \rho)$. 
    Given an error threshold $\epsilon$, when $n$ is sufficiently large, the expected coverage of a line segment $\ell(x)=\frac{n+1}{\rho}\cdot x$ is given as follows:
    \begin{equation}
    \label{eq:coverage_bound_uniform}
        C(\epsilon) = \Omega(\epsilon^2).
    \end{equation}
\end{theorem}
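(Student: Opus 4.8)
The plan is to lower-bound $C(\epsilon)$ by a single term of its defining sum: it suffices to exhibit an index $i^\star=\Theta(\epsilon^2)$ for which the run $E_1\wedge\cdots\wedge E_{i^\star}$ has probability at least an absolute constant, since then $C(\epsilon)\ge i^\star\cdot\Pr(E_1\wedge\cdots\wedge E_{i^\star})=\Omega(\epsilon^2)$. I would work in the regime $\epsilon=o(\sqrt n)$ (forced by the standing assumption $|\mathbb{I}|=o(n)$); when $\epsilon$ is below an absolute constant the claim is trivial because every segment covers at least one key.

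First I would reduce the events $E_j$ to random-walk events. Writing $M_j:=\ell(K_{(j)})=\tfrac{n+1}{\rho}K_{(j)}$, the event $E_j$ is exactly $|M_j-j|\le\epsilon$, and $\mathbf{E}[M_j]=j$ because $K_{(j)}/\rho\sim\mathrm{Beta}(j,n-j+1)$. Via the Rényi representation of uniform order statistics, $K_{(j)}\overset{d}{=}\rho\,S_j/S_{n+1}$ with $S_j=\xi_1+\cdots+\xi_j$ and $\xi_1,\xi_2,\dots$ i.i.d.\ $\mathrm{Exp}(1)$, so $M_j\overset{d}{=}(n+1)S_j/S_{n+1}$. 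Setting $\delta_n:=S_{n+1}/(n+1)-1$ (mean $0$, variance $1/(n+1)$) gives the decomposition
\[
 M_j-j \;=\; \frac{(S_j-j)-j\,\delta_n}{1+\delta_n},
\]
which separates a centered random walk $S_j-j=\sum_{k\le j}(\xi_k-1)$ (step variance $1$) from a single global correction $j\,\delta_n$ encoding the dependence among the order statistics (the analogue of the Brownian-bridge constraint).

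Second I would keep both pieces small simultaneously on a constant-probability event. Fix a small absolute constant $c$, set $i^\star=\lfloor c\,\epsilon^2\rfloor$ (so $i^\star=o(n)$), and consider $B_1=\{\max_{j\le i^\star}|S_j-j|\le\epsilon/4\}$ and $B_2=\{|\delta_n|\le 1/(4c\epsilon)\}$. Kolmogorov's maximal inequality applied to the martingale $S_j-j$ gives $\Pr(B_1^c)\le i^\star/(\epsilon/4)^2\le 16c$, and Chebyshev gives $\Pr(B_2^c)\le\mathrm{Var}(\delta_n)\,(4c\epsilon)^2=16c^2\epsilon^2/(n+1)\to 0$ since $\epsilon^2=o(n)$. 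On $B_1\cap B_2$, for $\epsilon$ beyond an absolute constant we have $|\delta_n|<1/2$, hence $|1+\delta_n|\ge 1/2$ and, for every $j\le i^\star$,
\[
 |M_j-j| \;\le\; 2\bigl(|S_j-j|+i^\star|\delta_n|\bigr) \;\le\; 2\bigl(\tfrac{\epsilon}{4}+c\epsilon^2\cdot\tfrac{1}{4c\epsilon}\bigr) \;=\; \epsilon,
\]
so all of $E_1,\dots,E_{i^\star}$ hold. Taking, say, $c=1/64$ makes $\Pr(B_1^c)\le 1/4$, and $\Pr(B_2^c)\le 1/4$ for $n$ large, hence $\Pr(E_1\wedge\cdots\wedge E_{i^\star})\ge\Pr(B_1\cap B_2)\ge 1/2$, and therefore $C(\epsilon)\ge i^\star/2=\Omega(\epsilon^2)$.

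The main obstacle is the dependence among $K_{(1)},\dots,K_{(n)}$: unlike a free random walk they are pinned by $M_n\le n+1$, which is exactly what the $\delta_n$ term captures. The Rényi representation confines all of this to one random variable of fluctuation $O(1/\sqrt n)$, after which the argument reduces to a standard ``random walk stays in a strip'' estimate; the only delicate point is splitting the $\epsilon$-budget ($\epsilon/4$ for the local fluctuation, $\epsilon/4$ for $i^\star\delta_n$) against Kolmogorov's bound, which is where the assumption $\epsilon=o(\sqrt n)$ enters to make the global correction negligible.
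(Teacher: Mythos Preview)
Your proposal is correct and takes a genuinely different route from the paper.

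The paper's argument proceeds by claiming the events $E_j$ are positively correlated (justified only informally in a footnote), so that $\Pr(E_1\wedge\cdots\wedge E_i)\ge\prod_{j\le i}\Pr(E_j)$; it then applies Chebyshev to each $\Pr(E_j)$ individually using $\mathbf{Var}[\delta_j]\approx j$, approximates the resulting product by $\exp(-i^2/(2\epsilon^2))$, and lower-bounds the full sum $\sum_i i\exp(-i^2/(2\epsilon^2))$ by its first $\epsilon$ terms. Your approach instead keeps only a single term $i^\star=\Theta(\epsilon^2)$ and controls the joint event $E_1\wedge\cdots\wedge E_{i^\star}$ directly: the R\'enyi representation $K_{(j)}\overset{d}{=}\rho S_j/S_{n+1}$ isolates all the dependence among the order statistics in one global variable $\delta_n$, after which Kolmogorov's maximal inequality handles the free-walk part in one shot. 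This buys you a fully rigorous treatment of the dependence (no appeal to positive association), and the maximal inequality replaces both the per-event Chebyshev and the product-to-exponential approximation. The paper's route is slightly more elementary in the tools it invokes but leans on the unproven correlation claim; yours is cleaner and more self-contained, at the cost of importing the R\'enyi coupling. One minor point: your remark that $\epsilon=o(\sqrt n)$ is ``forced by $|\mathbb I|=o(n)$'' is really the requirement $i^\star\le|\mathbb I|$ needed for the term to appear in the defining sum, which is the same implicit constraint the paper uses when it writes $\epsilon<|\mathbb I|$.
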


\begin{figure}[t]
    \centering
    \includegraphics[width=0.38\textwidth]{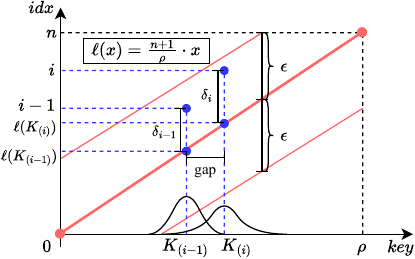}
    \caption{Illustration of segment coverage by constructing the segment as $\ell(x)=\frac{n+1}{\rho}\cdot x$, where $K_{(i-1)}$ and $K_{(i)}$ are two consecutive order statistics, and ``gap'' stands for the difference of $K_{(i)}$ and $K_{(i-1)}$. }
    \label{fig:index_coverage}
    \vspace{-3ex}
\end{figure}

\begin{proof}
    According to~\cite{gentle2009computational}, the $i$-th order statistics $U_{(i)}$ of standard uniform distribution (i.e., $\mathcal{U}(0, 1)$) follows a beta distribution $\mathcal{B}(i, n+1-i)$. 
As $K_{(i)}=\rho\cdot U_{(i)}$, the mean and variance of $K_{(i)}$ can be derived as $\mathbf{E}[K_{(i)}] = \rho\cdot\frac{i}{n+1}$ and $\mathbf{Var}[K_{(i)}] = \rho^2\cdot\frac{i\cdot(n+1-i)}{(n+1)^2\cdot(n+2)}$. 
As illustrated in \cref{fig:index_coverage}, let $\delta_i=\ell(K_{(i)})-i$ denote the \textbf{residual} between the predicted index for $K_{(i)}$ (i.e., $\frac{n+1}{\rho}\cdot K_{(i)}$) and the true index $i$,
we have $\mathbf{E}[\delta_i] = 0$ and $\mathbf{Var}[\delta_i]=\frac{i\cdot(n+1-i)}{n+2}$. 
According to \cref{def:coverage} and \cref{eq:coverage_def}, we have:
\begin{equation}\label{eq:mean_of_first_success}
    C(\epsilon) \geq \sum_{i=1}^{|\mathbb{I}|} i\cdot\prod_{j=1}^{i}\Pr\left(E_j\right), 
\end{equation}
given that events $E_j$ positively correlates\footnote{Intuitively, event $\overline{E_i}$ is more likely to happen when $i$ gets larger, given that the variance of $\delta_i$ is monotonically increasing for $i=1,\cdots,|\mathbb{I}|=o(n)$. 
When $n\to\infty$, $\mathbf{Var}[\delta_i]\to\infty$, leading to $\Pr(E_i)\to 0$. Thus, when $E_i$ occurs, it becomes more likely that $E_{j<i}$ also occurs. A more rigorous proof can be made by evaluating $\Pr(E_{j<i}|E_i)$, which is ignored here for brevity.} with each other, i.e., $\Pr(E_1\wedge\cdots\wedge E_i)\geq\prod_{j=1}^{i}\Pr(E_j)$. 
By applying the Chebyshev's inequality, i.e., $\Pr(\overline{E_j})=\Pr(|\delta_j|>\epsilon)\leq\mathbf{Var}[\delta_j]/\epsilon^2$, it follows that: 
\begin{equation}\label{eq:chebyshev}
    \begin{aligned}
        \prod_{j=1}^{i}\Pr(E_j)&\geq \prod_{j=1}^{i}\left(1-\frac{j\cdot(n-j+1)}{(n+2)\cdot\epsilon^2}\right)\\
        &\approx\exp\left(-\sum_{j=1}^{i}\frac{j}{\epsilon^2}\right)
        \approx\exp\left(-\frac{i^2}{2\epsilon^2}\right).
    \end{aligned}
\end{equation}

Combining \cref{eq:mean_of_first_success} and \cref{eq:chebyshev}, the following lower bound holds:
\begin{equation}\label{eq:temp_sum}
    C(\epsilon) \geq \sum_{i=1}^{|\mathbb{I}|} i\cdot\exp\left(-\frac{i^2}{2\epsilon^2}\right).
\end{equation}

For $i \leq \epsilon$, we have $\frac{i^2}{2\epsilon^2} \leq \frac{1}{2}$, so $\exp(-\frac{i^2}{2\epsilon^2}) \geq e^{-1/2}$. 
As $\epsilon<|\mathbb{I}|$ and $i\cdot\exp({-\frac{i^2}{2\epsilon^2}})$ is non-negative, \cref{eq:temp_sum} can be further lower bounded as follows:
\begin{equation}
    \begin{aligned}
        C(\epsilon) &\geq \sum_{i=1}^{\epsilon} i\cdot \exp\left(-\frac{i^2}{2\epsilon^2}\right) \\
        &\geq e^{-1/2} \sum_{i=1}^{\epsilon} i \\
        &\geq \frac{e^{-1/2}\cdot\epsilon^2}{2} \approx 0.303\cdot\epsilon^2.
    \end{aligned}
\end{equation}

Thus, we complete the proof of \cref{theorem:index_coverage_uniform}.
\end{proof}


\textbf{Remarks.}
The theoretical results in~\cite{ferragina2020learned} rely on an implicit assumption that $\epsilon\gg\sigma/\mu$ to apply central limit theorems, limiting their validity for small $\epsilon$. 
In contrast, our work removes this restriction, only assuming a much weaker condition $n'\gg\epsilon$ and thereby addressing the theoretical gap for small $\epsilon$.

\subsection{Segment Coverage for Arbitrary Keys}\label{subsec:arbitrary}

\begin{theorem}[Segment Coverage for Arbitrary Keys]\label{theorem:index_coverage_general}
    Given a set of sorted keys $\mathcal{K}=\{k_1,\cdots,k_n\}$ and an error threshold $\epsilon$, consider that: 
    \begin{enumerate}[leftmargin=*,label={\bfseries A\arabic*:}]
        \item $\mathcal{K}=\{k_1,\cdots,k_n\}$ is the realization of order statistics $K_{(1)},\cdots,K_{(n)}$ of $n$ i.i.d.~samples $K_1,\cdots,K_i$ drawing from an arbitrary distribution. 
        \item The inverse cumulative function $F^{-1}(\cdot)$ exists and for arbitrary $i\in\{1,\cdots,|\mathbb{I}|\}$, there exists constant $\gamma$ such that $|\frac{n}{\rho}\cdot F^{-1}(\frac{i}{n})- i|\leq \gamma$. 
        \item For arbitrary $i\in\{1,\cdots,|\mathbb{I}|\}$, the density function $f(\cdot)$ is continuous and non-zero at $F^{-1}(\frac{i}{n})$. Moreover, there exists a constant $\xi$ such that $f(F^{-1}(\frac{i}{n}))\geq 1/\xi$. 
    \end{enumerate}
    Under A1, A2, and A3, when $n$ is large enough, the expected coverage of a line segment $\ell(x)=\frac{n}{\rho}\cdot x$ is given by:
    \begin{equation}
    \label{eq:coverage_bound_general}
        C(\epsilon) = \Omega\left(\frac{\sqrt{\rho}\cdot(\epsilon-\gamma)^2}{\xi}\right).
    \end{equation}
\end{theorem}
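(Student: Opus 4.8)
The plan is to mirror the structure of the proof of \cref{theorem:index_coverage_uniform}, replacing the exact beta-distribution moments with estimates obtained from the general order-statistics asymptotics. First I would recall the standard result (see~\cite{gentle2009computational}) that for $n$ i.i.d.~samples from a distribution with CDF $F$ and density $f$, the $i$-th order statistic $K_{(i)}$ is asymptotically normal with mean $\approx F^{-1}(i/n)$ and variance $\approx \frac{i(n-i)/n}{n\,f(F^{-1}(i/n))^2}$. I would then define the residual $\delta_i = \ell(K_{(i)}) - i = \frac{n}{\rho}K_{(i)} - i$ and, using assumption \textbf{A2}, control its mean: $|\mathbf{E}[\delta_i]| \approx |\frac{n}{\rho}F^{-1}(i/n) - i| \le \gamma$. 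Using assumption \textbf{A3}, I would upper bound its variance by $\mathbf{Var}[\delta_i] \approx \frac{n^2}{\rho^2}\cdot\frac{i(n-i)/n}{n f(F^{-1}(i/n))^2} \le \frac{n^2}{\rho^2}\cdot\frac{i\xi^2}{n} = \frac{n}{\rho^2}\,i\,\xi^2$. Since $\rho = \Theta(n)$, this gives $\mathbf{Var}[\delta_i] = O(i\xi^2/\rho)$ up to constants.

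Next I would apply a one-sided argument via Chebyshev. Because $E_i$ is the event $|\ell(K_{(i)}) - i|\le\epsilon$ and $\mathbf{E}[\delta_i]$ may be nonzero (bounded by $\gamma$), I would center at the mean and use $\Pr(\overline{E_i}) = \Pr(|\delta_i|>\epsilon) \le \Pr(|\delta_i - \mathbf{E}[\delta_i]| > \epsilon - \gamma) \le \mathbf{Var}[\delta_i]/(\epsilon-\gamma)^2$, which requires $\epsilon>\gamma$ (this is where the $(\epsilon-\gamma)^2$ in the bound originates, and why the result is only meaningful for $\epsilon$ not too small). Then, invoking the same positive-correlation heuristic used in \cref{theorem:index_coverage_uniform} (so that $\Pr(E_1\wedge\cdots\wedge E_i)\ge\prod_{j\le i}\Pr(E_j)$), I would write $\prod_{j=1}^i \Pr(E_j) \ge \prod_{j=1}^i\bigl(1 - \tfrac{c\,j\,\xi^2}{\rho(\epsilon-\gamma)^2}\bigr) \approx \exp\bigl(-\tfrac{c\,i^2\xi^2}{2\rho(\epsilon-\gamma)^2}\bigr)$ for a universal constant $c$.

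Finally I would plug this into \cref{eq:coverage_def} to get $C(\epsilon)\ge\sum_{i} i\exp\bigl(-\tfrac{c\,i^2\xi^2}{2\rho(\epsilon-\gamma)^2}\bigr)$, and truncate the sum at $i \le T$ where $T = \Theta\bigl(\tfrac{\sqrt{\rho}(\epsilon-\gamma)}{\xi}\bigr)$ is chosen so that the exponent stays bounded by a constant. On that range each exponential term is $\Omega(1)$, so $\sum_{i=1}^{T} i\cdot\Omega(1) = \Omega(T^2) = \Omega\bigl(\tfrac{\rho(\epsilon-\gamma)^2}{\xi^2}\bigr)$ — though I should double-check the exponent: matching the $\sqrt\rho$ in the claimed bound \cref{eq:coverage_bound_general} rather than $\rho$ suggests the intended truncation is $T=\Theta(\rho^{1/4}(\epsilon-\gamma)/\sqrt\xi)$ or that a factor of $\rho$ is absorbed differently; I would reconcile the bookkeeping so the final constant matches $\Omega(\sqrt\rho(\epsilon-\gamma)^2/\xi)$. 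The main obstacle I anticipate is precisely this constant-tracking through the variance estimate and the choice of truncation point, together with making the asymptotic normality of order statistics rigorous enough to justify the variance bound uniformly over $i\in\{1,\dots,|\mathbb{I}|\}$; the positive-correlation step is reused verbatim from the previous theorem and the remaining manipulations are routine.
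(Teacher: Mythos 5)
Your proposal follows essentially the same route as the paper: asymptotic normality of $K_{(i)}$, a triangle-inequality shift by $\gamma$ (the paper phrases this as bounding $C(\epsilon+\gamma)$ rather than centering Chebyshev at the mean, which is equivalent), the same positive-correlation product bound, and the same Gaussian-tail sum. The bookkeeping discrepancy you flag at the end is not yours to reconcile --- the paper evaluates the identical sum $\sum_i i\exp\left(-i^2\xi^2/(2\rho\epsilon^2)\right)$ and simply asserts $\Omega(\sqrt{\rho}\,\epsilon^2/\xi)$ without detail, whereas your truncation at $T=\Theta\left(\sqrt{\rho}\,(\epsilon-\gamma)/\xi\right)$ yielding $\Omega\left(\rho(\epsilon-\gamma)^2/\xi^2\right)$ is the natural evaluation and dominates the stated bound whenever $\sqrt{\rho}\ge\xi$, so your argument establishes (at least) what the theorem claims in that regime.
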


\begin{proof}
    For general distributions, when $n\to\infty$, according to~\cite{cardone2022entropic}, the $i$-th order statistics $K_{(i)}$ asymptotically follows a normal distribution as follows:
\begin{equation}\label{eq:asmp_normal}
    K_{(i)} \xrightarrow{\mathit{d}} \mathcal{N}\left(F^{-1}\left(\frac{i}{n}\right), \frac{i\cdot(n-i)}{n^2\cdot[f(F^{-1}(\frac{i}{n}))]^2}\right). 
\end{equation}
Thus, the $i$-th residual $\delta_i=\ell(K_{(i)})-i$ also exhibits an asymptotic normal distribution:
\begin{equation}\label{eq:error_normal}
    \delta_i \xrightarrow{\mathit{d}} \mathcal{N}\left(m_i, \frac{i\cdot(n-i)}{\rho^2\cdot[f(F^{-1}(\frac{i}{n}))]^2}\right),
\end{equation}
where $m_i=\frac{n}{\rho}\cdot F^{-1}\left(\frac{i}{n}\right)-i$. 
As $m_i\neq 0$, different from the uniform case, $\ell(K_{i})$ is no longer an unbiased estimator of the true index $i$. 
Under A2 and A3 as introduced in \cref{theorem:index_coverage_general}, by employing a similar technique, the segment coverage with a revised error constraint $\epsilon+\gamma$ is given by:
\begin{equation}
\begin{aligned}
    C(\epsilon+\gamma) &\geq\sum_{i=1}^{|\mathbb{I}|}i\cdot\prod_{j=1}^{i}\Pr(|\delta_i-m_i|\leq\epsilon) \\
    &\geq\sum_{i=1}^{|\mathbb{I}|}i\cdot\prod_{j=1}^{i}\left(1-\frac{i\xi^2}{\rho\epsilon^2}\right)\\
    &\approx\sum_{i=1}^{|\mathbb{I}|}i\cdot\exp\left(-\frac{i^2\xi^2}{2\rho\epsilon^2}\right)\\
    &= \Omega\left(\frac{\sqrt{\rho}\cdot\epsilon^2}{\xi}\right).
\end{aligned}
\end{equation}
Thus, we achieve the results in \cref{theorem:index_coverage_general}. 
\end{proof}

\textbf{Remarks.} 
In both \cref{theorem:index_coverage_uniform} and \cref{theorem:index_coverage_general}, the slope of line segments is fixed to $\frac{n+1}{\rho}$, naturally leading to the simple FRS algorithm described in \cref{alg:segment}. 
An interesting observation is that, despite being based on relaxed assumptions, FRS exhibits an interesting mathematical connection to the MET algorithm introduced in~\cite{ferragina2020learned}.
In MET, the slope is fixed to $1/\mu$ and $\mu$ is the mean of \emph{gaps} between consecutive keys, which is conceptually equivalent to FRS as $\rho=k_n-k_1=\sum_{i=2}^{n}(k_i-k_{i-1})$.

\subsection{Extension to Other $\epsilon$-PLA Algorithms}\label{subsec:extension}

The lower bounds in \cref{theorem:index_coverage_uniform} and \cref{theorem:index_coverage_general} naturally extend to the optimal $\epsilon$-PLA fitting algorithms (e.g., ParaOptimal~\cite{ORourke81}, SlideFilter~\cite{elmeleegy2009online}, and OptimalPLA~\cite{XiePZZD14}), as these methods guarantee the minimal number of segments under a given error bound. 
In this section, we further extend our results to cover greedy $\epsilon$-PLA algorithms, such as SwingFilter~\cite{elmeleegy2009online} and GreedyPLA~\cite{XiePZZD14}. 

\begin{figure}[h]
    \centering
    \begin{subfigure}[t]{0.5\columnwidth}
        \includegraphics[width=\linewidth]{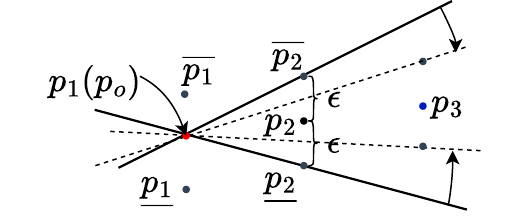}
        \caption{SwingFilter~\cite{elmeleegy2009online}}
    \end{subfigure}
    \begin{subfigure}[t]{0.45\columnwidth}
        \includegraphics[width=\linewidth]{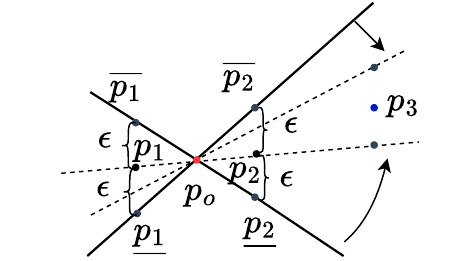}
        \caption{GreedyPLA~\cite{XiePZZD14}}
    \end{subfigure}
    
    \caption{Illustration of two greedy $\epsilon$-PLA fitting algorithms. The point in red ($p_o$) is the selected pivot point.}
    \label{fig:SwingandGreedy}
\end{figure}

As illustrated in \cref{fig:SwingandGreedy}, given an error bound $\epsilon$, SwingFilter picks the first point of each segment as the pivot $p_o$, and initializes the slope range by drawing lines from $p_o$ to the upper and lower bounds of the second point, denoted as $\overline{p_2}$ and $\underline{p_2}$. 
The slope range is then progressively refined by connecting $p_o$ to subsequent extreme points (e.g., $\overline{p_3}$ and $\underline{p_3}$), until violating the error bound $\epsilon$. 
GreedyPLA further improves SwingFilter by picking the midpoint of $p_1$ and $p_2$ as the pivot and updates the slope range similar to SwingFilter.

\begin{theorem}[Superiority of Greedy $\epsilon$-PLA over FRS]\label{theorm:superiority_of_greedy}
    Given a set of sorted keys $\mathcal{K} = \{k_1, \cdots, k_n\}$ and an error bound $\epsilon$, the segment coverage created by SwingFilter and GreedyPLA is always \textbf{larger} than that of FRS (\cref{alg:segment}).  
\end{theorem}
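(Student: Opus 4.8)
The plan is to show that, for \emph{each segment}, the interval of keys that SwingFilter (resp.\ GreedyPLA) can cover contains the interval of keys that FRS covers, by a pointwise comparison of the slope constraints both algorithms must satisfy. The key observation is that FRS makes the rigid choice of slope $\alpha_{\mathrm{FRS}} = \frac{n+1}{\rho}$ for every segment, whereas the greedy algorithms maintain an \emph{admissible slope interval} $[\alpha^-, \alpha^+]$ that is refined only as new points arrive, and they terminate a segment precisely when this interval becomes empty. So the first step is to fix a segment whose first point is $(k_t, t)$ (with $k_1$ and $1$ for the initial segment, w.l.o.g.) and write down, for each subsequent index $i > t$, the error constraint $|\ell(k_i) - i| \le \epsilon$ that FRS imposes on the single fixed line through $(k_t,\,t)$ with slope $\frac{n+1}{\rho}$. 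I would denote by $I_{\mathrm{FRS}}$ the maximal prefix of indices for which this holds, so FRS's coverage on this segment is exactly $|I_{\mathrm{FRS}}|$.

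Next I would translate the same point-by-point error constraints into constraints on the admissible slope interval that SwingFilter maintains. For a pivot $p_o$ (the first point for SwingFilter, or the midpoint of the first two points for GreedyPLA) and an incoming point indexed $i$, the requirement $|\ell(k_i) - i| \le \epsilon$ with $\ell$ passing through $p_o$ is equivalent to the slope lying in an interval $[\underline{s}_i, \overline{s}_i]$ determined by the two extreme points $\underline{p_i}, \overline{p_i}$; the greedy algorithm keeps $[\alpha^-_i, \alpha^+_i] = \bigcap_{j \le i}[\underline{s}_j, \overline{s}_j]$ and stops when this is empty. The crucial claim is: \emph{as long as FRS has not yet violated the error bound on this segment, the value $\frac{n+1}{\rho}$ (or a nearby feasible slope) lies in $[\alpha^-_i, \alpha^+_i]$}, so the intersection is nonempty and the greedy segment is still alive. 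This requires checking that the line FRS uses is itself one of the admissible lines through the greedy pivot up to an $O(1)$ adjustment of intercept absorbed by the $\pm\epsilon$ slack — here the choice of pivot matters: for SwingFilter the pivot coincides with FRS's anchor so the argument is direct; for GreedyPLA the pivot is shifted by half a gap, but since the intercept freedom plus the $\epsilon \ge 1$ slack covers this, the admissible interval still contains a feasible slope whenever FRS survives. Hence the greedy segment does not terminate before FRS's does, giving $|I_{\mathrm{greedy}}| \ge |I_{\mathrm{FRS}}|$ segment-by-segment, and summing / averaging over all segments yields the stated inequality on segment coverage.

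I expect the main obstacle to be the pivot-mismatch bookkeeping for GreedyPLA: because its pivot is the midpoint of $p_1$ and $p_2$ rather than $p_1$ itself, one cannot literally reuse FRS's line, and one must argue that the shift in the anchor is fully compensated by the intercept degree of freedom together with the $\epsilon$-slack (and that this does not consume slack needed later). A secondary subtlety is handling the \emph{floor} operation $\lfloor \ell(k_i) \rfloor$ in FRS versus the continuous constraints in the greedy algorithms, and making sure the comparison of "when a segment dies" is apples-to-apples — i.e., that FRS's discretized test is no weaker than the greedy continuous test on the same points. Once those two technical points are nailed down, the rest is the straightforward monotonicity argument sketched above. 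I would also remark that the inequality is strict in general (e.g.\ whenever the greedy slope interval has positive width at FRS's termination point), which is why the theorem claims the greedy coverage is "always larger."
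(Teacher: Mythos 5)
Your proposal is correct and takes essentially the same route as the paper: the paper's proof is an induction showing that the FRS slope $\alpha_F=\frac{n+1}{k_n-k_1}$ always lies inside SwingFilter's admissible slope envelope $[\underline{\alpha_S},\overline{\alpha_S}]$ as long as FRS's segment survives, which is exactly your interval-containment argument. The two subtleties you flag (the GreedyPLA pivot mismatch and the floor discretization) are also left informal in the paper, which handles only SwingFilter in detail and defers GreedyPLA to "a similar approach."
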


\begin{proof}
We prove the claim by induction. 
For brevity, let $L(i)$ denote the claim: 
\emph{if the first $i$ points are covered by segments in both FRS and SwingFilter, and the $(i+1)$-th point is covered by the FRS segment, then it is also covered by the corresponding SwingFilter segment}.

\noindent
\textbf{Base case:} 
$L(1)$ is trivially true as SwingFilter initializes its segment using the first two points.

\noindent
\textbf{Inductive step:} 
Assuming $L(i)$ holds, we then show that \( L(i+1) \) also holds. 
For SwingFilter, the upper and lower bounds of the segment slope when traversing the $i$-th point are:
\begin{equation}
\begin{aligned}
    \underline{\alpha_S} &= \max_{j=2,\dots,i} \left( \frac{j - \epsilon - 1}{k_j - k_1} \right),\\ 
\overline{\alpha_S} &= \min_{j=2,\dots,i} \left( \frac{j + \epsilon - 1}{k_j - k_1} \right).
\end{aligned}
\label{eq:Swing_bounds}
\end{equation}
For FRS, according to \cref{alg:segment}, a fixed slope of $\alpha_F = \frac{n+1}{k_n - k_1}$ is adopted. 
Thus, we have:
\begin{equation}
    |\alpha_F\cdot (k_j - k_1) + 1 - j| \leq \epsilon, \quad j = 2, \dots, i,
\end{equation}
which implies:
\begin{equation}\label{eq:frs_bounds}
    \frac{j - 1 - \epsilon}{k_j - k_1} \leq \alpha_F \leq \frac{j - 1 + \epsilon}{k_j - k_1}, \quad j = 2, \dots, i.
\end{equation}
Combining \cref{eq:Swing_bounds} with \cref{eq:frs_bounds}, it holds that $\underline{\alpha_S} \leq \alpha_F \leq \overline{\alpha_S}$. 
Then, if the $(i+1)$-th point $(k_{i+1}, i+1)$ can be covered by FRS's segment, i.e., $|\alpha_F\cdot (k_{i+1} - k_1) + 1 - (i+1)| \leq \epsilon$, 
it always holds that $\underline{\alpha_S}\cdot (k_{i+1} - k_1) + 1 - \epsilon \leq i+1 \leq \overline{\alpha_S}\cdot (k_{i+1} - k_1) + 1 + \epsilon$, 
implying that $(k_{i+1}, i+1)$ is also covered by SwingFilter's segment. 
Therefore, $L(i+1)$ holds, and by induction, $L(i)$ holds for all $i$. 

Thus, we formally show that any points covered by a segment in FRS must also be covered by the corresponding segment in SwingFilter, indicating an intrinsically higher coverage. 
Notably, the same results hold for GreedyPLA following a similar approach, which we omit here due to page limits.
\end{proof}

\section{Benchmark Setup}\label{sec:benchmark_setup}

The lower bounds established in \cref{sec:theory} fill the theoretical gap in understanding the asymptotic behavior of various $\epsilon$-PLA algorithms under general data distribution assumptions. 
However, existing learned indexes such as PGM-Index~\cite{ferragina2020pgm} typically adopt the OptimalPLA~\cite{ORourke81} algorithm for index construction, leaving the practical impact of alternative PLA fitting algorithms largely unexplored. 
To address this, we propose \textbf{PLABench}, a comprehensive and flexible $\epsilon$-PLA benchmark tailored for learned index scenarios. 

\subsection{Research Questions}
In particular, we aim to answer the following research questions through our PLABench. 
\begin{enumerate}[leftmargin=*,label={\bfseries RQ\arabic*:}]
    \item Do the lower bounds established in \cref{sec:theory} empirically hold for existing $\epsilon$-PLA fitting algorithms? 
    \item How good are existing $\epsilon$-PLA fitting algorithms in practical multi-threading environments? 
    \item When integrated within learned index structures, what are the trade-offs made by greedy $\epsilon$-PLA algorithms compared to the more widely adopted optimal $\epsilon$-PLA algorithms? 
    \item How does the error bound $\epsilon$ affect the space-time trade-offs for each approach?
\end{enumerate}

\subsection{Baselines and Implementation Details}\label{subsec:impl}
We provide optimized implementations for three representative $\epsilon$-PLA fitting algorithms, OptimalPLA~\cite{XiePZZD14}, GreedyPLA~\cite{XiePZZD14}, and SwingFilter~\cite{elmeleegy2009online}, covering both optimal and greedy approaches. 
Other optimal algorithms, such as ParaOptimal~\cite{ORourke81} and SlideFilter~\cite{elmeleegy2009online}, are excluded since they are theoretically equivalent to OptimalPLA and produce identical $\epsilon$-PLA models given the same input dataset~\cite{XiePZZD14}.

Given that parallel index construction is widely adopted on modern computing platforms~\cite{akhremtsev2016fast}, we evaluate all algorithms under both single-threading and multi-threading settings. 
For the parallel implementation, we follow the strategy used in the PGM-Index implementation~\cite{pgm}, where the input key set $\mathcal{K}$ is divided into consecutive, disjoint chunks, and the $\epsilon$-PLA algorithm is applied independently to each chunk. 
Apparently, such an embarrassingly parallel strategy will break the optimality of the OptimalPLA algorithm, which is rarely discussed in existing work.
Interestingly, our theoretical analysis and experimental results consistently demonstrate that the number of additional segments introduced is upper bounded by the number of threads. 

 \begin{table}[t]
    \centering
    \caption{Summary of benchmark datasets.}
    \label{tab:datasets_stats}
    \begin{tabular}{lccc}
    \toprule
    \textbf{Dataset} & \textbf{Category} & \textbf{\#Keys} &  \textbf{Data Distribution} \\
    \midrule
    \textsf{fb}        & Real      & 200M  & Higly skewed \\
    \textsf{books}     & Real      & 800M  & Mildly skewed \\
    \textsf{osm}       & Real      & 800M & Spatially complex, near-uniform \\
    \textsf{uniform}   & Synthetic & 200M  & Perfectly uniform \\
    \textsf{normal}    & Synthetic & 200M  & Perfectly Gaussian \\
    \textsf{lognormal} & Synthetic & 200M  & Perfectly right-skewed \\
    \bottomrule
    \end{tabular}
\end{table}

We provide three evaluation scenarios in order to answer questions \textbf{RQ1} to \textbf{RQ4}. 

\begin{itemize}[leftmargin=*]
    \item \textbf{Standalone Evaluation:} Each $\epsilon$-PLA fitting algorithm is compared against FRS (\cref{alg:segment}) to validate theoretical results and to provide a detailed analysis of their practical performance on a multi-threading platform. 
    \item \textbf{FIT:} Each $\epsilon$-PLA algorithm is integrated into the FITing-Tree~\cite{galakatos2019fiting}, a B$^+$-tree-like index where the last-mile tree search is replaced with a PLA-guided search. 
    We implement FIT by plugging an $\epsilon$-PLA algorithm into an STX B$^+$-tree~\cite{stx}. 
    \item \textbf{PGM:} Each $\epsilon$-PLA fitting algorithm is integrated into the PGM-Index~\cite{ferragina2020pgm}, a representative learned index framework that recursively applies $\epsilon$-PLA to predict index positions. 
    We extend the original implementation of PGM-Index~\cite{pgm} to support arbitrary $\epsilon$-PLA fitting algorithms.  
\end{itemize}

All methods are implemented in C++ and compiled using \texttt{g++} with the \texttt{-O0} optimization flag to prevent the compiler from applying vectorization optimizations to some algorithms while not others, thereby ensuring fairness and direct comparability in the performance comparison of the PLA algorithms.
All experiments are conducted on an Ubuntu 22.04 LTS server equipped with an Intel(R) Xeon(R) Gold 6430 CPU and 512 GB of RAM.

\subsection{Datasets and Query Workloads}
We adopt 3 commonly used real datasets from a recent learned index benchmark SOSD~\cite{marcus2020SOSD}. 
\textsf{fb} is a set of user IDs randomly sampled from Facebook~\cite{van2019ESISA}. 
\textsf{books} is a dataset of the popularity of books on Amazon. 
\textsf{osm} is a set of cell IDs from OpenStreetMap~\cite{openstreetmap}. 
In addition, we also generate three synthetic datasets by sampling from uniform, normal, and lognormal distributions. 
All keys in datasets are stored as 64-bit unsigned integers (\texttt{uint64\_t} in C++). 
\cref{tab:datasets_stats} summarizes the statistics of evaluated datasets.

For the query workload used in our evaluation, we uniformly sample 1,000 query keys from each dataset, ensuring that the same set of queries is used across all experiments. 
We focus primarily on the segment count of the resulting $\epsilon$-PLA model, as well as the space overhead and query time when integrated with PGM-Index and FITing-Tree. 
Each experiment is repeated 10 times, and we report the average values of the measured metrics.

\begin{figure}[t]
    \centering
    \begin{subfigure}[t]{0.48\columnwidth}
        \includegraphics[width=\linewidth]{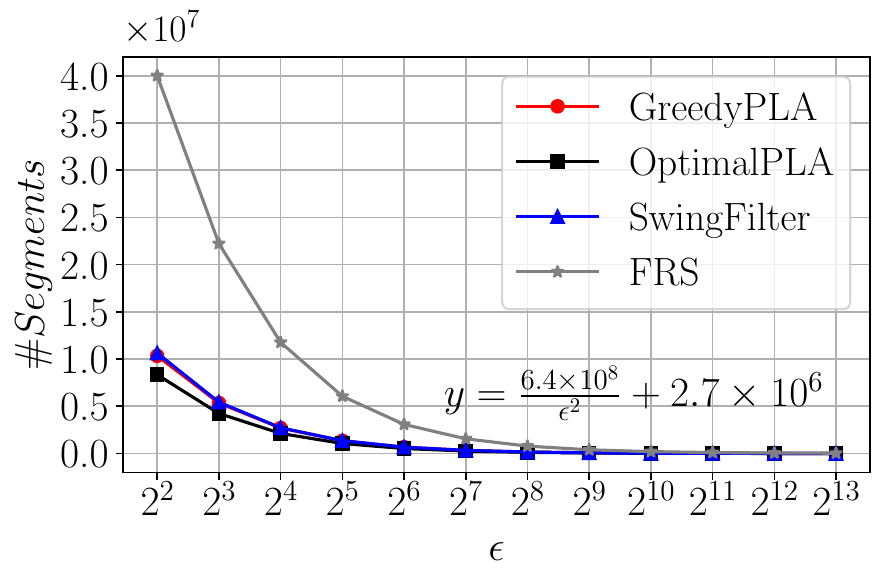}
        \caption{\textsf{fb}: \#Segments vs. $\epsilon$}
    \end{subfigure}
    \hfill
    \begin{subfigure}[t]{0.48\columnwidth}
        \includegraphics[width=\linewidth]{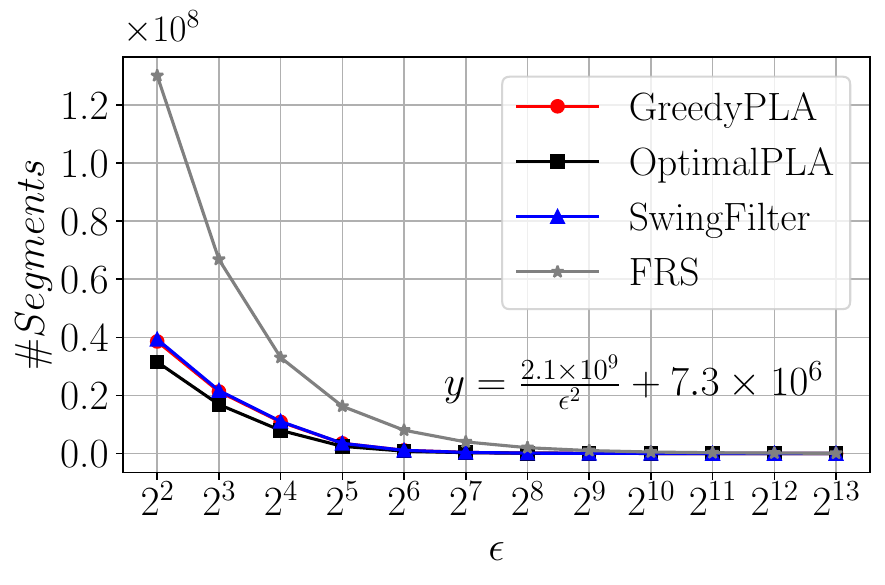}
        \caption{\textsf{books}: \#Segments vs. $\epsilon$}
    \end{subfigure}

    \begin{subfigure}[t]{0.48\columnwidth}
        \includegraphics[width=\linewidth]{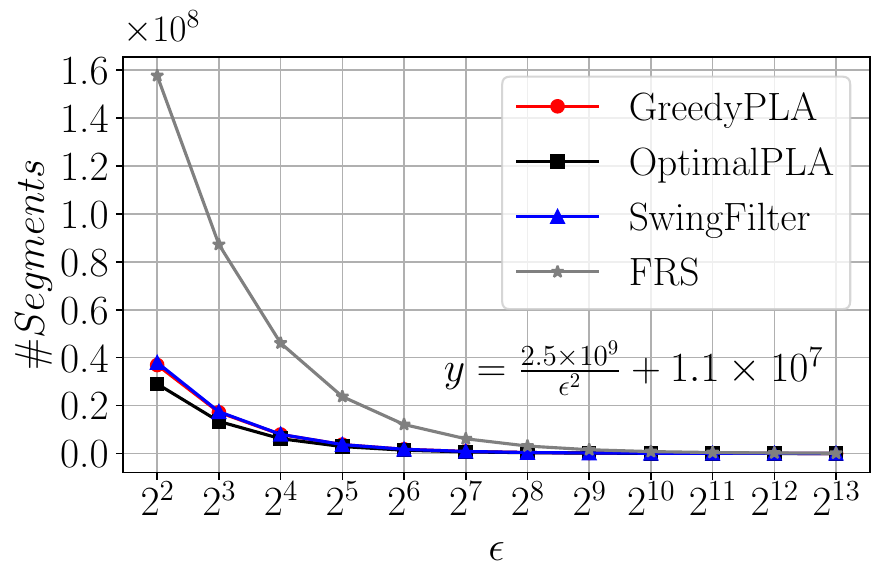}
        \caption{\textsf{osm}: \#Segments vs. $\epsilon$}
    \end{subfigure}
    \hfill
    \begin{subfigure}[t]{0.48\columnwidth}
        \includegraphics[width=\linewidth]{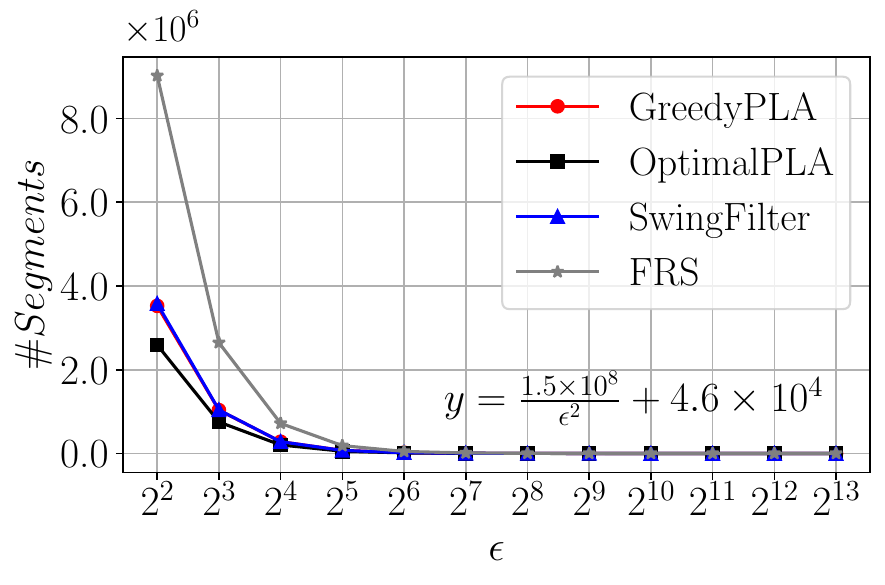}
        \caption{\textsf{uniform}: \#Segments vs. $\epsilon$}
    \end{subfigure}

    \begin{subfigure}[t]{0.48\columnwidth}
        \includegraphics[width=\linewidth]{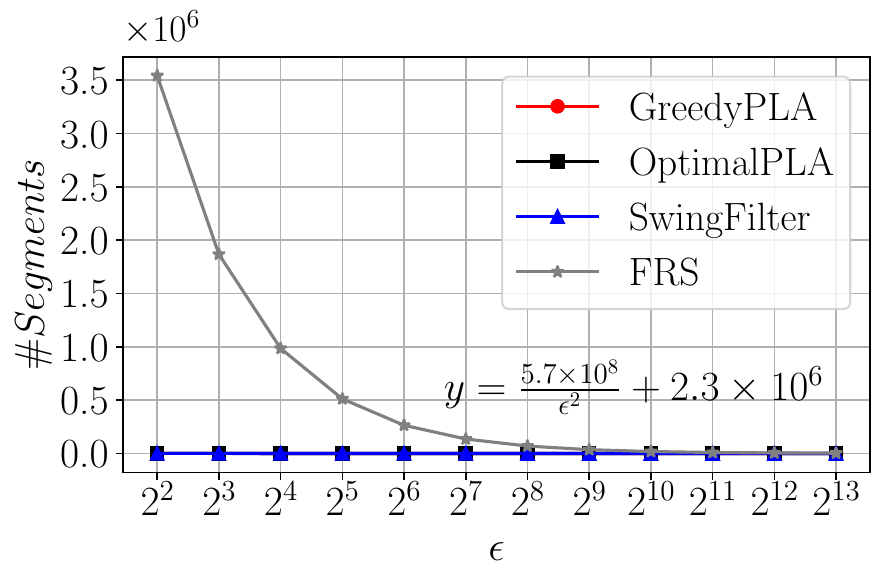}
        \caption{\textsf{normal}: \#Segments vs. $\epsilon$}
    \end{subfigure}
    \hfill
    \begin{subfigure}[t]{0.48\columnwidth}
        \includegraphics[width=\linewidth]{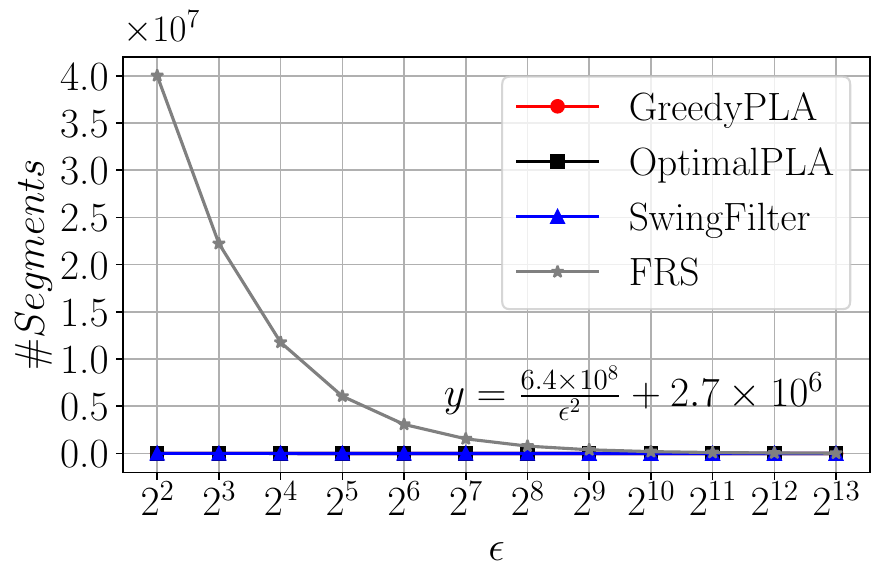}
        \caption{\textsf{lognormal}: Segments vs. $\epsilon$}
    \end{subfigure}

    \caption{Segment count w.r.t.~$\epsilon\in\{2^2, 2^3,\cdots, 2^{13}\}$ for different $\epsilon$-PLA fitting algorithms. The fitted curve for FRS is shown in the figure as $\text{\#Segments} = a \cdot \epsilon^{-2} + b$. Note that, $\text{\#segments}\propto n/\text{coverage}$.}
    \label{fig:pla_segment_cnt}
    \vspace{-2ex}
\end{figure}

\section{Experimental Results}\label{sec:exp_results}

In this section, we present the benchmark results addressing the research questions outlined earlier. 
Specifically, \cref{subsec:exp_standalone} reports the standalone evaluation of $\epsilon$-PLA fitting algorithms (\textbf{RQ1} and \textbf{RQ2}). 
\cref{subsec:exp_fit} and \cref{subsec:exp_pgm} show the results when integrated with FITing-Tree and PGM-Index, respectively (\textbf{RQ3}). 
Finally, \cref{subsec:exp_parameter} analyzes the impact of the error parameter $\epsilon$ (\textbf{RQ4}).

\begin{figure}[t]
    \centering
    \begin{subfigure}[t]{0.48\columnwidth}
        \includegraphics[width=\linewidth]{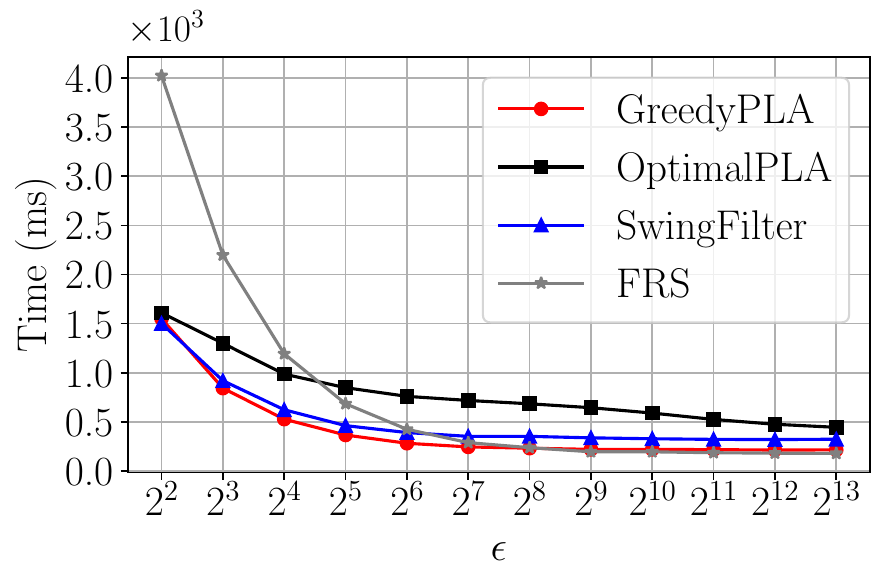}
        \caption{\textsf{fb}: Build Time vs. $\epsilon$}
    \end{subfigure}
    \hfill
    \begin{subfigure}[t]{0.48\columnwidth}
        \includegraphics[width=\linewidth]{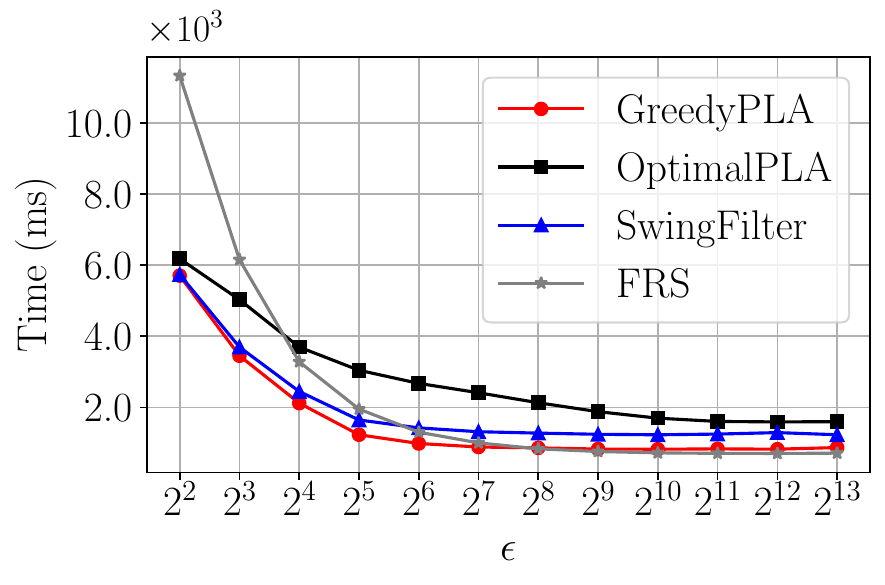}
        \caption{\textsf{books}: Build Time vs. $\epsilon$}
    \end{subfigure}
    
    \begin{subfigure}[t]{0.48\columnwidth}
        \includegraphics[width=\linewidth]{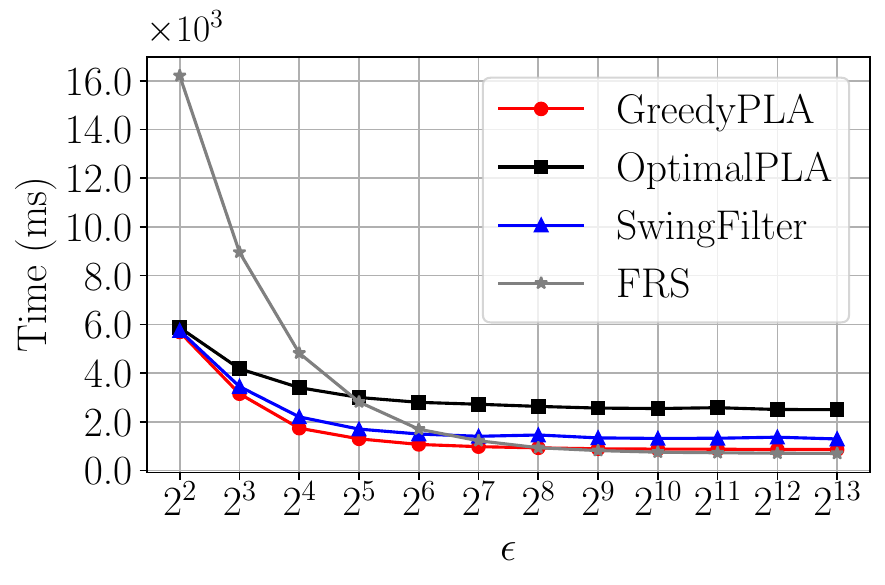}
        \caption{\textsf{osm}: Build Time vs. $\epsilon$}
    \end{subfigure}
    \hfill
    \begin{subfigure}[t]{0.48\columnwidth}
        \includegraphics[width=\linewidth]{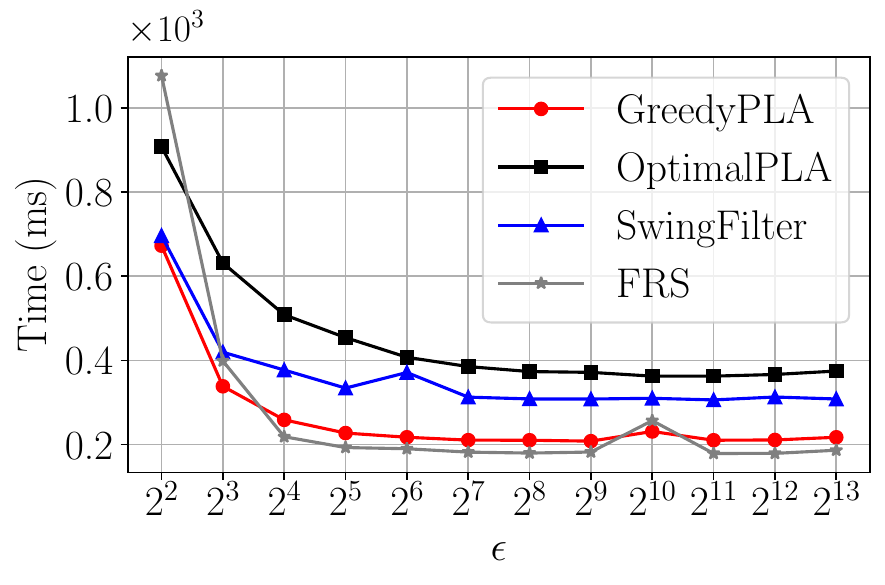}
        \caption{\textsf{uniform}: Build Time vs. $\epsilon$}
    \end{subfigure}
    
    \begin{subfigure}[t]{0.48\columnwidth}
        \includegraphics[width=\linewidth]{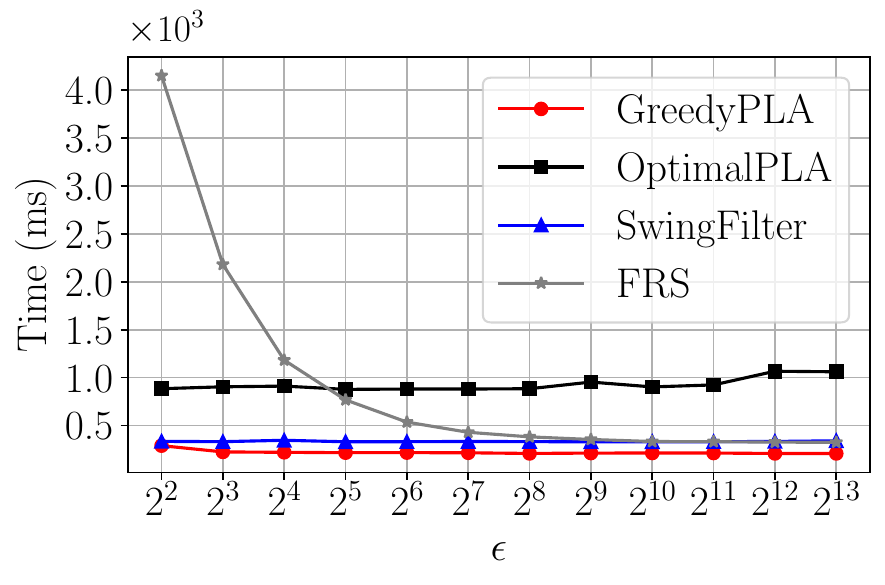}
        \caption{\textsf{normal}: Build Time vs. $\epsilon$}
    \end{subfigure}
    \hfill
    \begin{subfigure}[t]{0.48\columnwidth}
        \includegraphics[width=\linewidth]{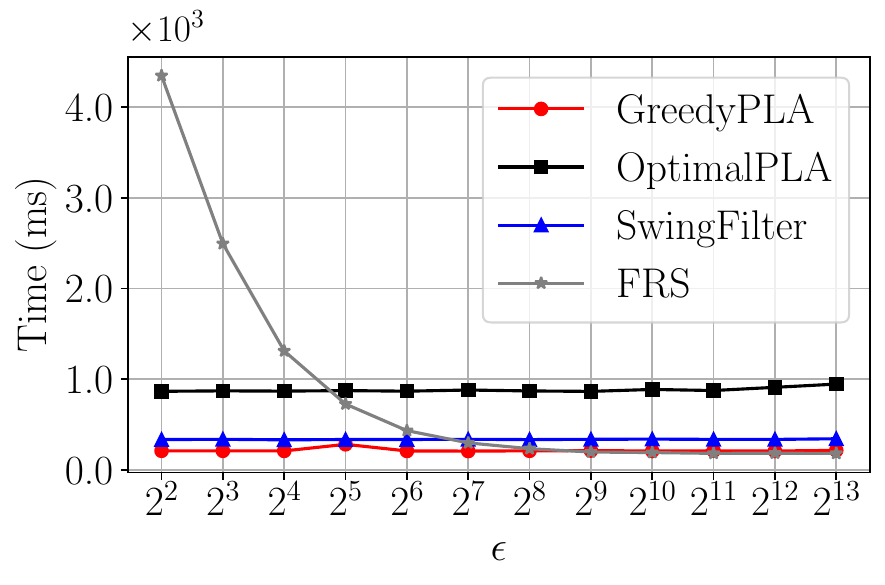}
        \caption{\textsf{lognormal}: Build Time vs. $\epsilon$}
    \end{subfigure}
    \caption{Construction time w.r.t.~$\epsilon\in\{2^2, 2^3,\cdots, 2^{13}\}$ for different $\epsilon$-PLA fitting algorithms.}
    \vspace{-2ex}
    \label{fig:pla_construction_time}
\end{figure}

\subsection{Standalone Evaluation Results}\label{subsec:exp_standalone}
\subsubsection{$\epsilon$-PLA Segment Count (\textbf{RQ1})}

\cref{fig:pla_segment_cnt} reports the number of segments produced by each $\epsilon$-PLA fitting algorithm when varying the error bound $\epsilon$ from $2^2$ to $2^{13}$. 
The experimental results clearly show that, for FRS, the number of segments is inversely proportional to $\epsilon^2$ across both real and synthetic datasets, aligning with the lower bounds established in \cref{theorem:index_coverage_uniform} and \cref{theorem:index_coverage_general}. 

When compared to other $\epsilon$-PLA algorithms, the experimental results clearly show that the number of segments produced by FRS consistently upper-bounds those generated by other methods. 
For example, on the \textsf{fb} dataset, FRS yields approximately $\mathbf{29.34}\times$ to $\mathbf{41.31}\times$ more segments than OptimalPLA, and $\mathbf{24.05}\times$ to $\mathbf{28.89}\times$ more than GreedyPLA and SwingFilter. 
These findings align with the theoretical analysis in \cref{theorm:superiority_of_greedy}, which demonstrates the superiority of SwingFilter and GreedyPLA over FRS in terms of segment coverage. 
In addition, an interesting observation is that the optimal algorithm shows a significant advantage when the error bound $\epsilon$ is small. 
However, when $\epsilon$ exceeds $2^7=128$, both optimal and greedy $\epsilon$-PLA algorithms achieve similar segment counts on both real and synthetic datasets.

\textbf{Takeaways.} 
We empirically validate the derived bounds for a wide range of $\epsilon$-PLA algorithms. 
In practice, OptimalPLA can reduce the number of segments by up to $\mathbf{1.4}\times$ compared to greedy algorithms such as SwingFilter and GreedyPLA. 
However, this performance gap narrows as $\epsilon$ grows; when $\epsilon>128$, both optimal and greedy algorithms produce a similar number of segments.

\subsubsection{$\epsilon$-PLA Construction Time}
We further evaluate the construction time for different $\epsilon$-PLA algorithms, and the results are presented in \cref{fig:pla_construction_time}. 
Although all the compared $\epsilon$-PLA algorithms have a construction time complexity of $O(n)$ (see \cref{tab:pla_algorithms}), their actual wall-clock time varies notably with different $\epsilon$ values. 
For example, when $\epsilon$ is small (e.g., $\epsilon\leq 2^{5}$), the simple FRS algorithm takes significantly longer time than the much more sophisticated OptimalPLA. 
This is because FRS generates a large number of segments under small $\epsilon$, and the overhead of storing these segments into in-memory data structures (e.g., \texttt{std::vector}) dominates the total construction time. 
When $\epsilon$ increases, the segment counts produced by all algorithms converge, this overhead diminishes, and FRS eventually outperforms OptimalPLA due to its simplicity.  

A similar trend can be observed for the other two greedy algorithms. 
SwingFilter, GreedyPLA, and OptimalPLA demonstrate comparable performance for small $\epsilon$ values (e.g., $\epsilon=2^2$). 
However, as $\epsilon$ increases, the greedy algorithm's efficiency advantage becomes more significant, achieving up to a $\mathbf{2.0}\times$ speedup over OptimalPLA.

\textbf{Takeaways.} 
GreedyPLA and SwingFilter can achieve a segment count comparable to OptimalPLA when the error bound $\epsilon$ is large, while saving up to half of the construction time, demonstrating strong potential for practical use. 
Notably, FRS is excluded from this discussion, as it is a constructive algorithm designed only to assist analysis. 

\begin{figure}[t]
    \centering
    \begin{subfigure}[t]{0.48\columnwidth}
        \includegraphics[width=\linewidth]{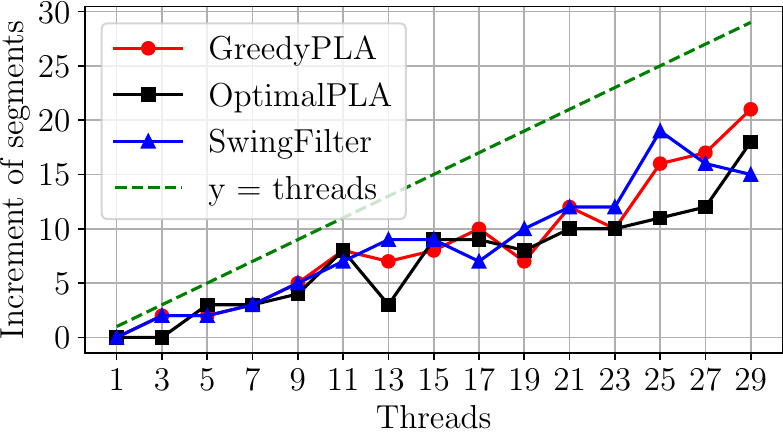}
        \caption{\textsf{fb}: Increment of \#Segments vs. \#Tthreads}
    \end{subfigure}
    \hfill
    \begin{subfigure}[t]{0.48\columnwidth}
        \includegraphics[width=\linewidth]{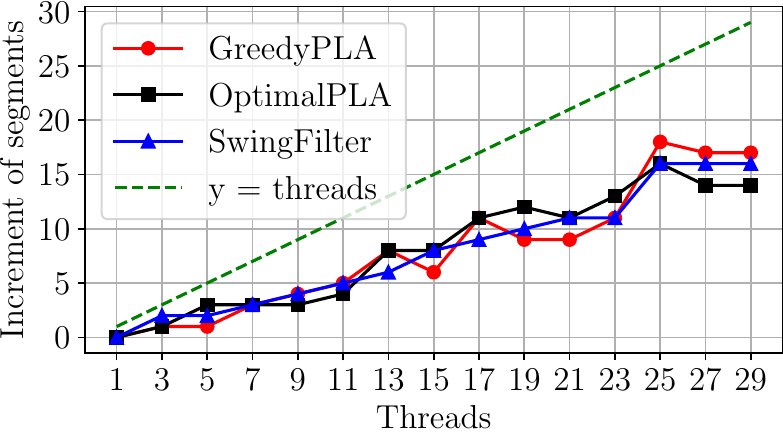}
        \caption{\textsf{books}: Increment of \#Segments vs. \#Threads}
    \end{subfigure}

    \begin{subfigure}[t]{0.48\columnwidth}
        \includegraphics[width=\linewidth]{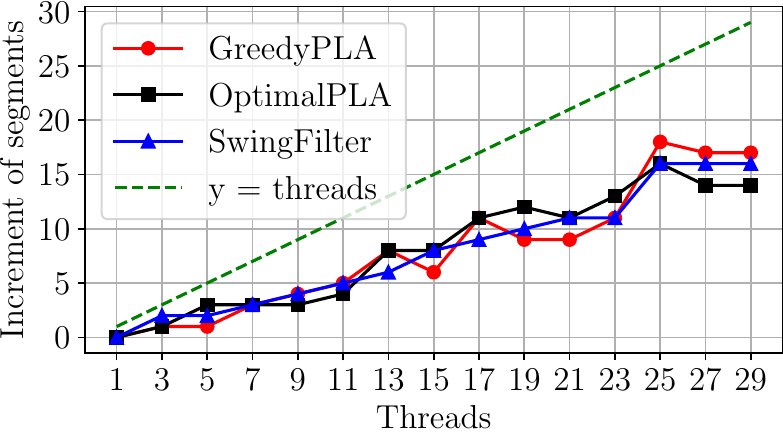}
        \caption{\textsf{osm}: Increment of \#Segments vs. \#Threads}
    \end{subfigure}
    \hfill
    \begin{subfigure}[t]{0.48\columnwidth}
        \includegraphics[width=\linewidth]{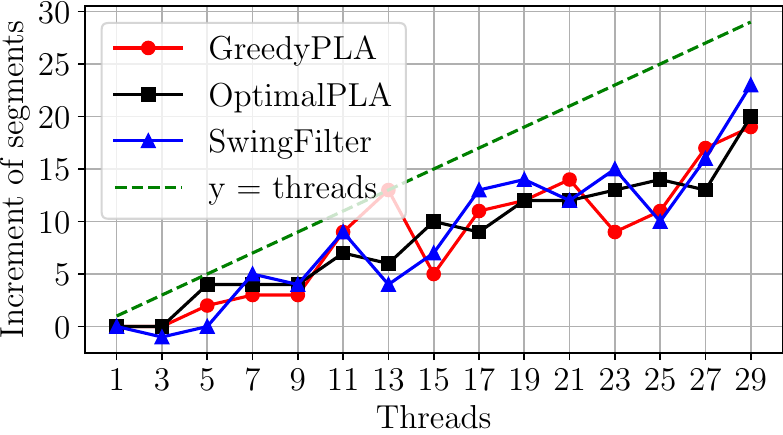}
        \caption{\textsf{uniform}: Increment of \#Segments vs. \#Threads}
    \end{subfigure}

    \begin{subfigure}[t]{0.48\columnwidth}
        \includegraphics[width=\linewidth]{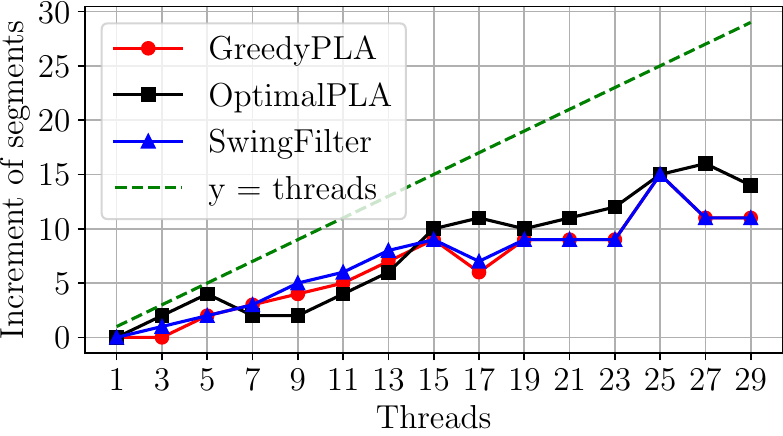}
        \caption{\textsf{normal}: Increment of \#Segments vs. \#Threads}
    \end{subfigure}
    \hfill
    \begin{subfigure}[t]{0.48\columnwidth}
        \includegraphics[width=\linewidth]{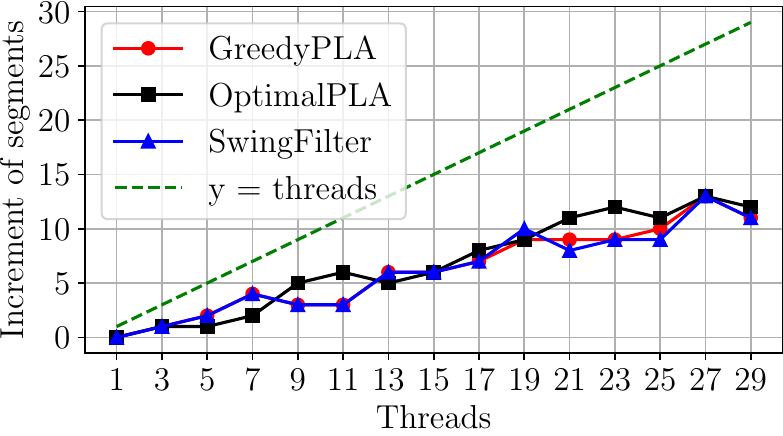}
        \caption{\textsf{lognormal}: Increment of \#Segments vs. \#Threads}
    \end{subfigure}

    \caption{Increment of segment count w.r.t.~the number of threads. The green dashed line marks $y=\text{\#Threads}$.}
    \label{fig:segment_multi_thread}

\end{figure}

\subsubsection{Parallel $\epsilon$-PLA Evaluation (\textbf{RQ2})}

The previous experiments focus on the performance of \textbf{serial} $\epsilon$-PLA algorithms under single-threaded execution. 
We now turn to a more practical setting by examining their behavior in a multi-threaded environment. 
As detailed in \cref{subsec:impl}, we adopt a straightforward data-parallel strategy that partitions the input data and applies the PLA algorithm independently to each partition.
This inevitably introduces additional segments due to data partition. 
To quantify the impact of multi-threading on $\epsilon$-PLA fitting quality, we report the increase in segment count by varying the number of threads (i.e., data partitions), as shown in \cref{fig:segment_multi_thread}. 

The results clearly show that, for both optimal and greedy algorithms, the number of additional segments introduced by multi-threading is upper bounded by the number of threads. 
We now provide a formal proof of this property for the optimal $\epsilon$-PLA algorithm. 

\begin{theorem}[Upper Bound of Segment Increase for Parallel OptimalPLA]\label{theorem:threads}
    Given an input key set $\mathcal{K}$, suppose that the serial OptimalPLA produces $m$ segments over $\mathcal{K}$. 
    When using $t$ threads to partition and process the data independently, the parallel version produces \textbf{at most} $m + t - 1$ segments. 
\end{theorem}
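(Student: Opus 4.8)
The plan is to argue that the serial OptimalPLA solution, when restricted to each of the $t$ data partitions, yields a valid $\epsilon$-PLA for that partition, and then to bound how many segments that restriction uses in total. Concretely, let the $m$ segments produced by serial OptimalPLA partition the index range $\{1,\dots,n\}$ into $m$ consecutive blocks, and let the $t$ threads split $\mathcal{K}$ at boundary indices $1 = b_0 < b_1 < \cdots < b_t = n+1$, so partition $j$ covers keys with indices in $[b_{j-1}, b_j)$. The key observation is that a single serial segment $\ell$ spanning some index interval $[a, a')$ can be ``cut'' by at most $t-1$ partition boundaries; within each piece of the cut, the \emph{same} line $\ell$ still satisfies the error bound $\epsilon$ on those points (error-boundedness is pointwise, so it is inherited by any subset). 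Hence the collection of serial segments, once cut at the $t-1$ interior partition boundaries, gives a feasible $\epsilon$-PLA for each partition with a total of at most $m + (t-1)$ segments across all partitions.

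The second step is to invoke optimality of OptimalPLA on each partition: for partition $j$, the parallel algorithm runs OptimalPLA on the keys of that partition in isolation and therefore produces the \emph{minimum} possible number of segments for that partition, which is no more than the number of ``cut'' serial segments assigned to partition $j$. Summing over $j=1,\dots,t$, the total number of parallel segments is at most the total number of cut serial segments, which we already bounded by $m + t - 1$. This gives the claim.

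The main obstacle — really the only subtlety — is making the ``cutting'' argument precise: I need to verify that when a serial segment $\ell$ that originally covered indices $[a,a')$ is split at a partition boundary, each resulting half is still a legitimate segment in the $\epsilon$-PLA sense of \cref{def:pla}, i.e., that restricting the domain of a line that satisfies $|\ell(x_i)-y_i|\le\epsilon$ on a set of points preserves that inequality on any subset. This is immediate since the constraint is a conjunction of per-point inequalities, but it is worth stating explicitly. I also need the bookkeeping that $t$ partitions have exactly $t-1$ interior boundaries, and that each boundary increases the segment count by at most one (it can split at most one serial segment, namely the one straddling it, and in the degenerate case where a boundary coincides with an existing serial-segment boundary it adds nothing). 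An edge case to mention briefly: if a partition is so small that it contains fewer than the points needed for a segment, it still contributes at least one (possibly trivial) segment, which is consistent with the bound since each of the $m$ serial segments is cut into at least one piece.

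\begin{remark}
The same cutting-and-optimality argument shows the weaker inequality ``parallel segments $\le m + t - 1$'' for any $\epsilon$-PLA fitting algorithm that is at least as good as the cut serial solution on each partition; for greedy algorithms one instead appeals to \cref{theorm:superiority_of_greedy}-style reasoning, which is why the empirical curves in \cref{fig:segment_multi_thread} stay below the line $y=\text{\#Threads}$ for greedy methods as well.
\end{remark}
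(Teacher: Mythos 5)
Your proposal is correct and follows essentially the same argument as the paper's proof: restrict the serial optimal segmentation to each chunk (valid because the error bound is pointwise), note that the $t-1$ interior boundaries add at most one segment each, and invoke per-chunk optimality of OptimalPLA to conclude. Your version is simply a more careful, fully spelled-out rendering of the paper's reasoning, including the edge cases the paper leaves implicit.
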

\begin{proof}
Let $p$ denote the number of segments produced by parallel OptimalPLA. 
Observe that any segmentation of the full input sequence (including the serial OptimalPLA result) induces a valid segmentation for each chunk. 
However, chunk-local OptimalPLA computes the \textbf{minimum} number of segments per chunk, so it cannot produce \textbf{more than one} extra segment per boundary compared to the serial segmentation. 
As there are $t-1$ boundaries, the total increase in segment count is at most $t-1$. 
Thus, $p \leq m + (t - 1)$. 
\end{proof}

\textbf{Takeaways.} 
The empirical and theoretical analyses in this section show that the impact of multi-threading on the results of the optimal $\epsilon$-PLA algorithm is bounded by the number of threads. 
In practice, the total number of segments is typically much larger than the number of threads (i.e., $m\gg t$), making this overhead negligible. 
Notably, while \cref{theorem:threads} relies on the optimality guarantees of OptimalPLA, the results can be extended to greedy algorithms by examining the chunk boundaries using a similar approach. 

\begin{figure}[t]
    \centering
    \begin{subfigure}[t]{0.48\columnwidth}
        \includegraphics[width=\linewidth]{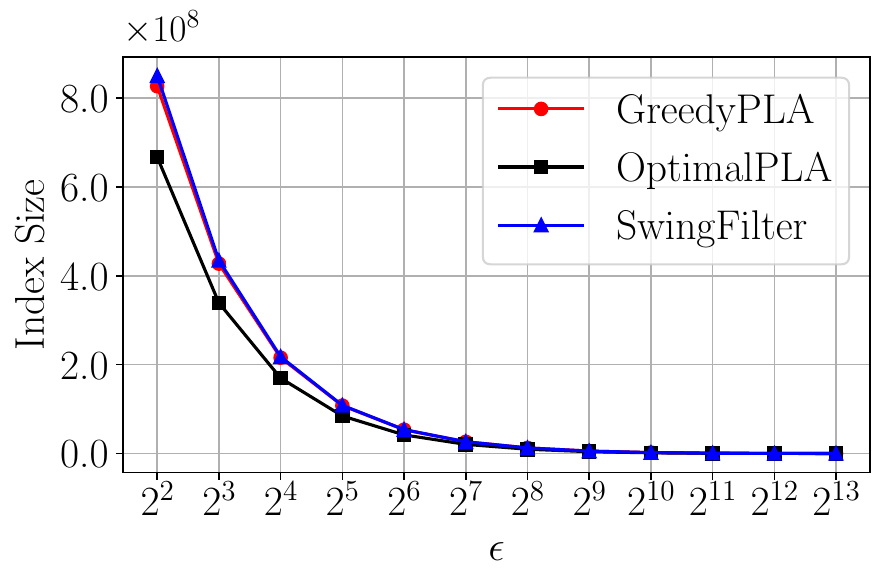}
        \caption{\textsf{fb}: Index size vs. $\epsilon$}
    \end{subfigure}
    \hspace{-0.5em}
    \begin{subfigure}[t]{0.48\columnwidth}
        \includegraphics[width=\linewidth]{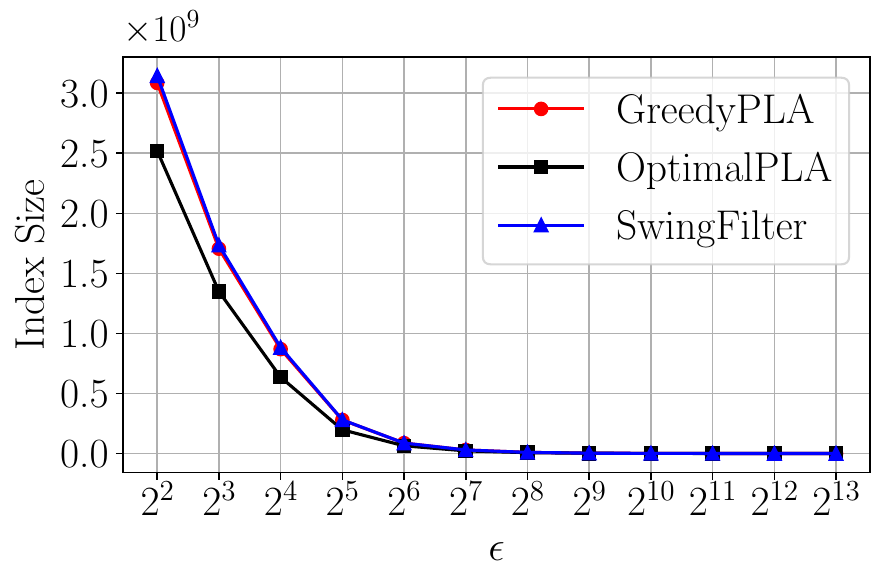}
        \caption{\textsf{books}: Index size vs. $\epsilon$}
    \end{subfigure}
    
    \begin{subfigure}[t]{0.48\columnwidth}
        \includegraphics[width=\linewidth]{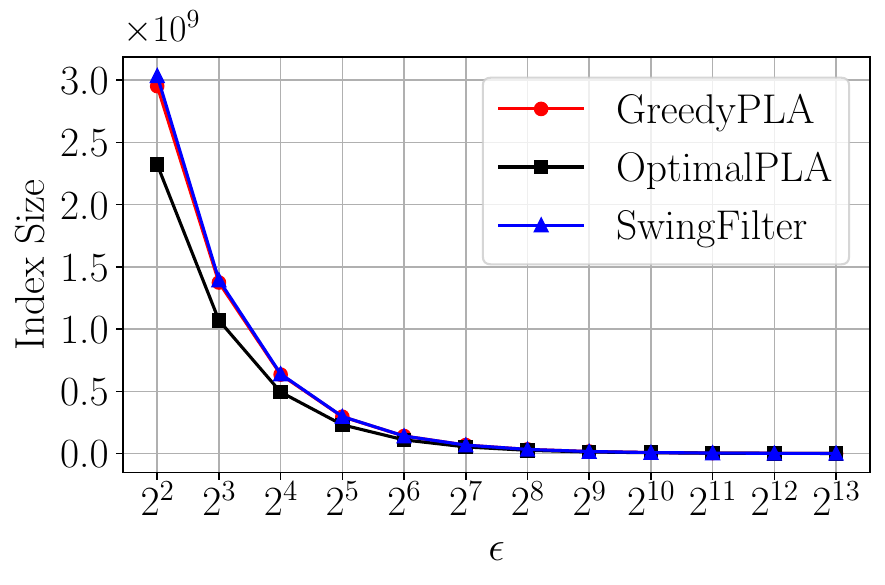}
        \caption{\textsf{osm}: Index size vs. $\epsilon$}
    \end{subfigure}
    \hspace{-0.5em}
    \begin{subfigure}[t]{0.48\columnwidth}
        \includegraphics[width=\linewidth]{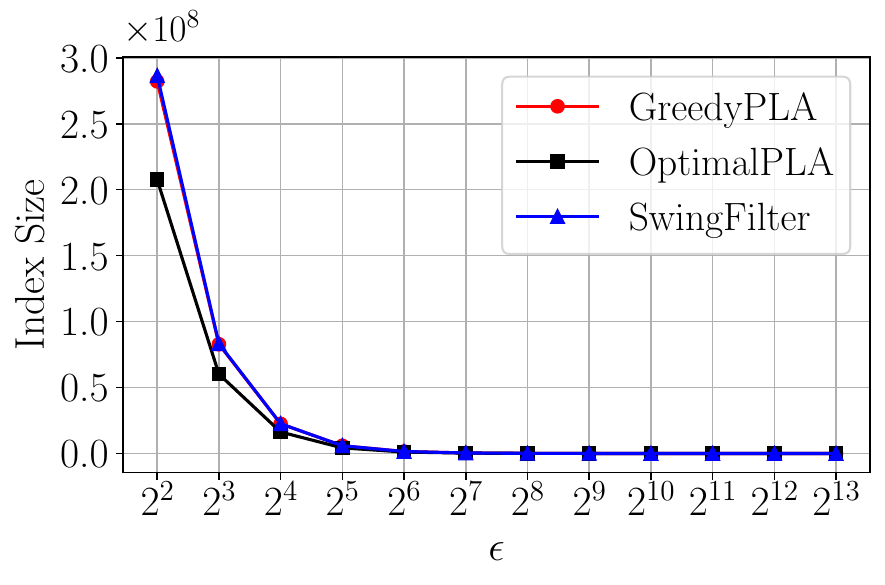}
        \caption{\textsf{uniform}: Index size vs. $\epsilon$}
    \end{subfigure}
    
    \begin{subfigure}[t]{0.48\columnwidth}
        \includegraphics[width=\linewidth]{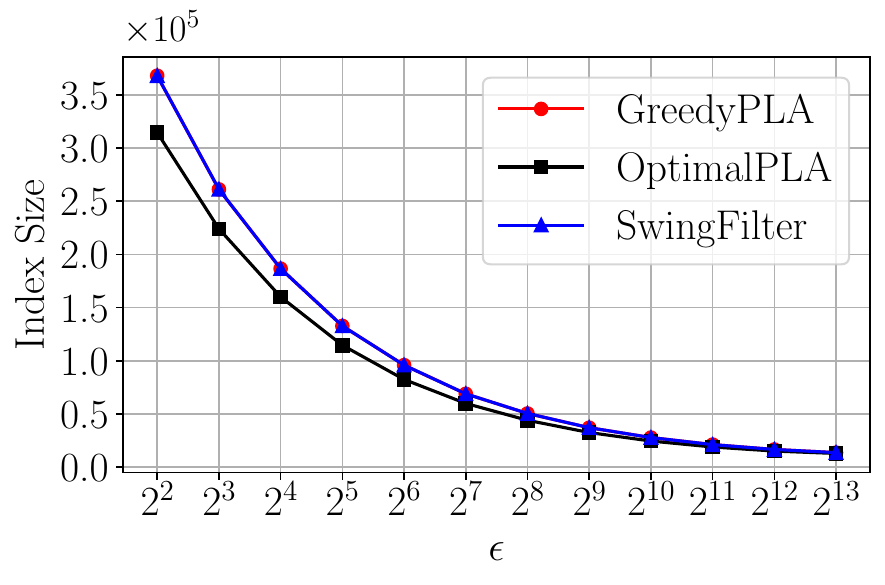}
        \caption{\textsf{normal}: Index size vs. $\epsilon$}
    \end{subfigure}
    \hspace{-0.5em}
    \begin{subfigure}[t]{0.48\columnwidth}
        \includegraphics[width=\linewidth]{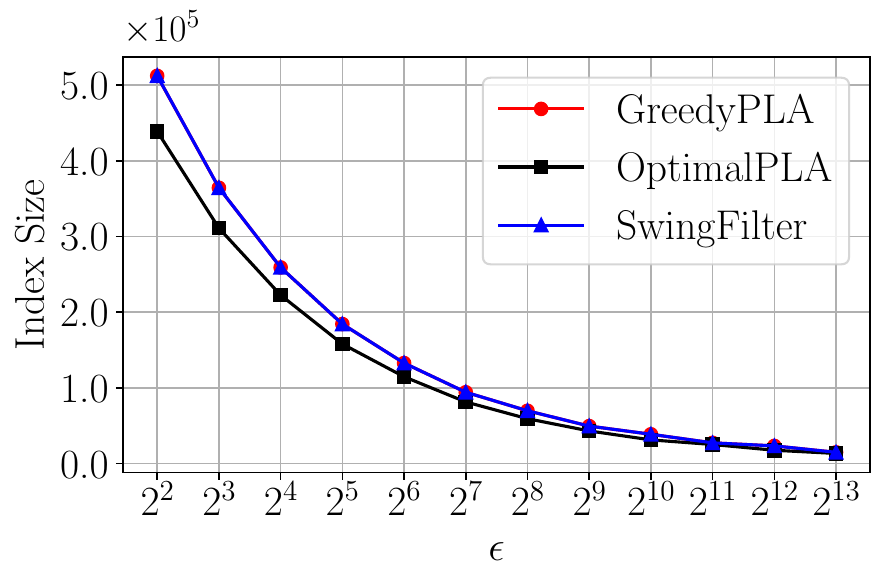}
        \caption{\textsf{lognormal}: Index size vs. $\epsilon$}
    \end{subfigure}
    \caption{Index size of FIT (FITing-Tree + $\epsilon$-PLA) w.r.t.~$\epsilon\in\{2^2, 2^3,\cdots, 2^{13}\}$ for different $\epsilon$-PLA algorithms. The fanout of the internal B$^+$-tree is fixed to 16, so that each internal tree node occupies 256 bytes.}
    \label{fig:fit_indexS}
\end{figure}

\begin{figure}[t]
    \centering
    \begin{subfigure}[t]{0.48\columnwidth}
        \includegraphics[width=\linewidth]{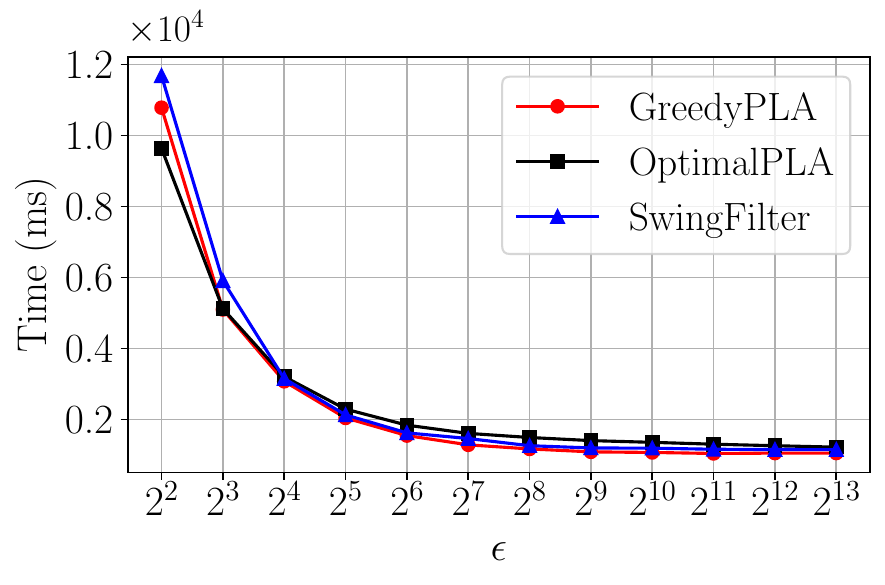}
        \caption{\textsf{fb}: Build Time vs. $\epsilon$}
    \end{subfigure}
    \hspace{-0.5em}
    \begin{subfigure}[t]{0.48\columnwidth}
        \includegraphics[width=\linewidth]{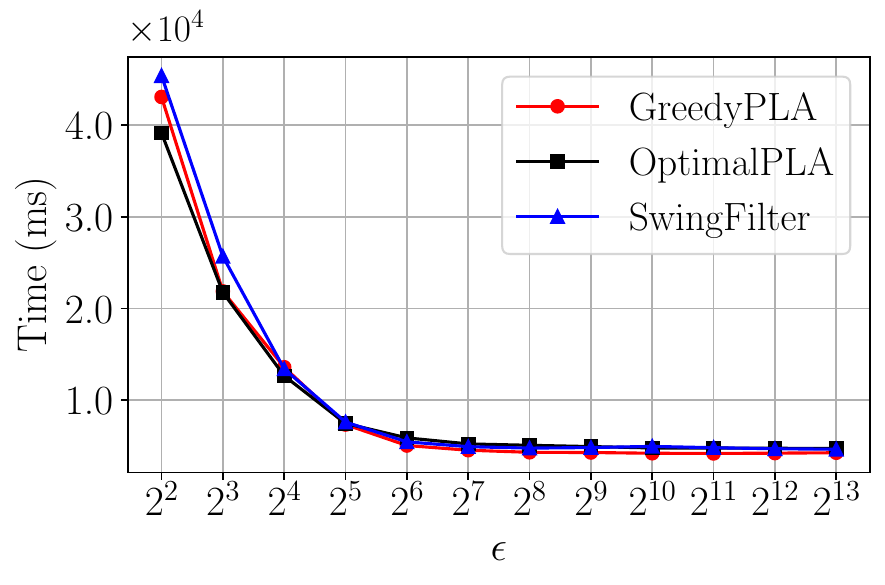}
        \caption{\textsf{books}: Build Time vs. $\epsilon$}
    \end{subfigure}
    
    \begin{subfigure}[t]{0.48\columnwidth}
        \includegraphics[width=\linewidth]{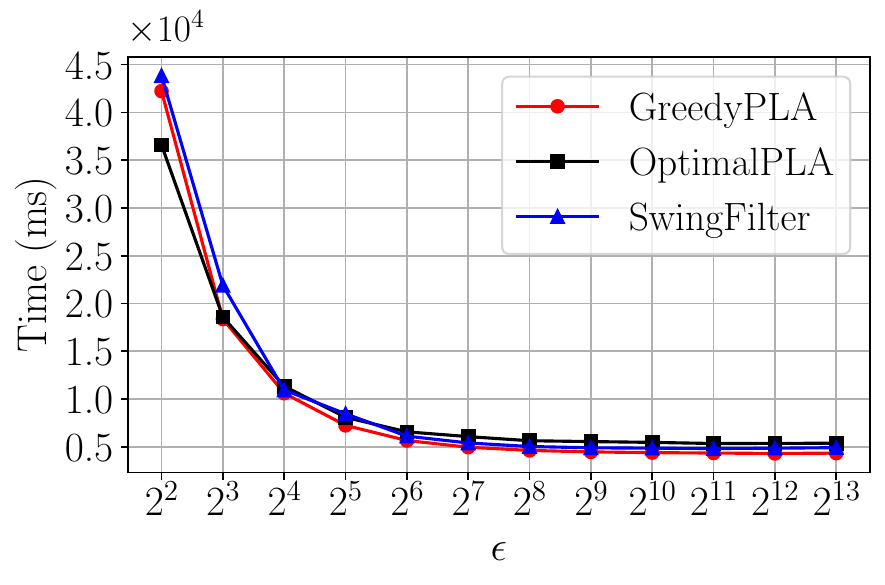}
        \caption{\textsf{osm}: Build Time vs. $\epsilon$}
    \end{subfigure}
    \hspace{-0.5em}
    \begin{subfigure}[t]{0.48\columnwidth}
        \includegraphics[width=\linewidth]{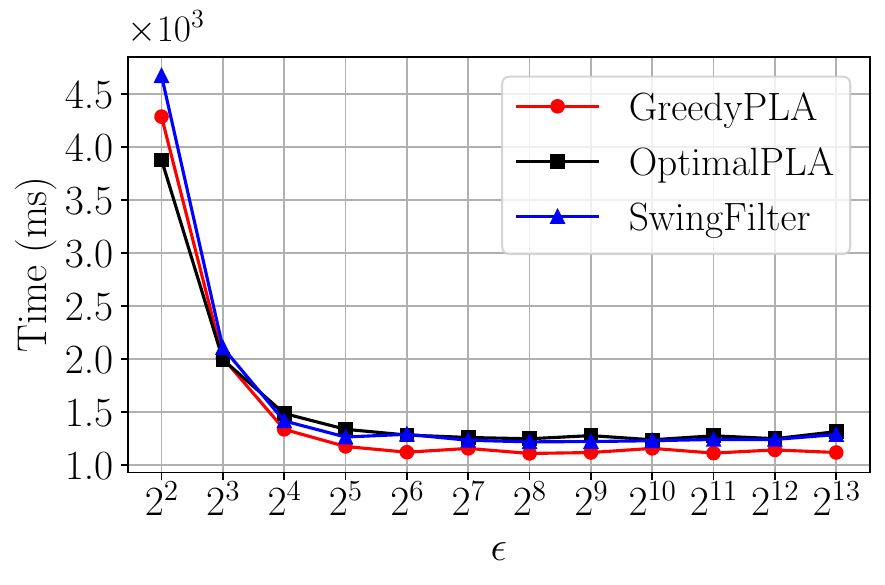}
        \caption{\textsf{uniform}: Build Time vs. $\epsilon$}
    \end{subfigure}
    
    \begin{subfigure}[t]{0.48\columnwidth}
        \includegraphics[width=\linewidth]{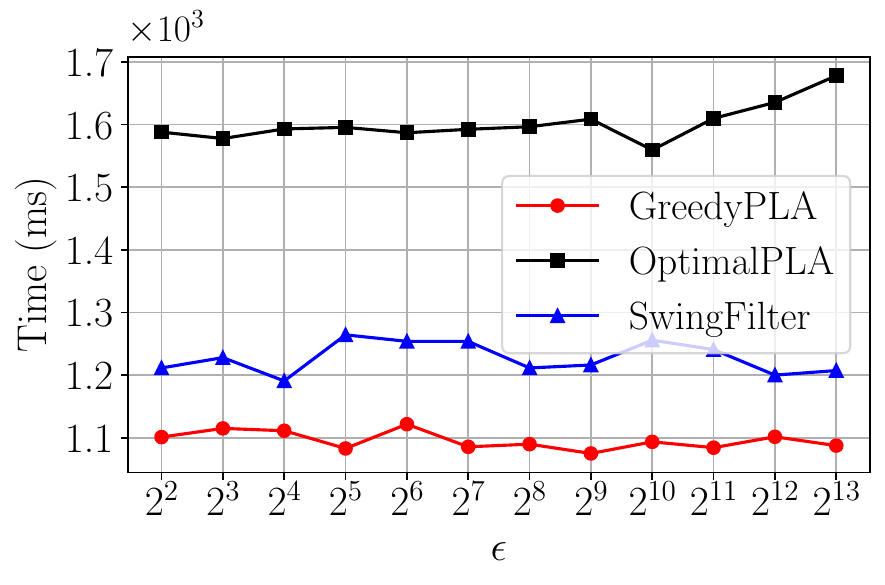}
        \caption{\textsf{normal}: Build Time vs. $\epsilon$}
    \end{subfigure}
    \hspace{-0.5em}
    \begin{subfigure}[t]{0.48\columnwidth}
        \includegraphics[width=\linewidth]{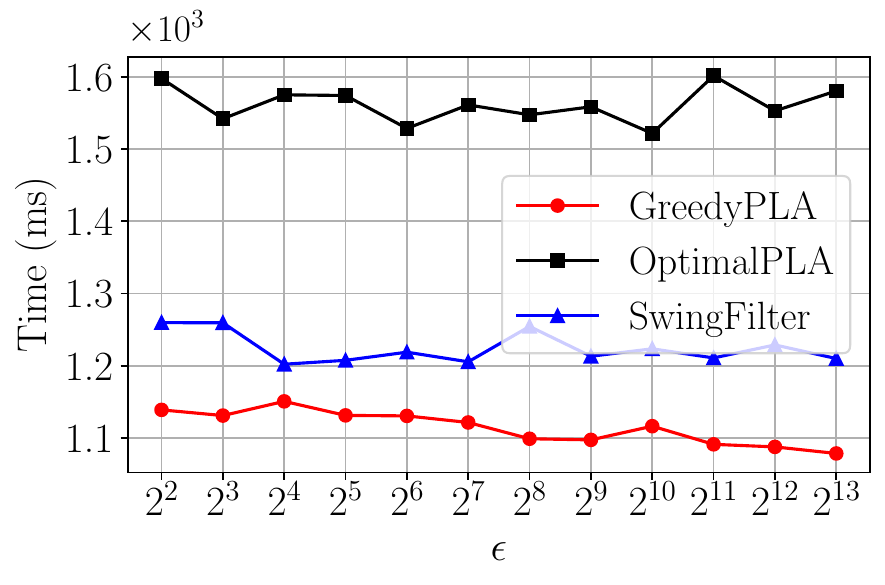}
        \caption{\textsf{lognormal}: Build Time vs. $\epsilon$}
    \end{subfigure}
    \caption{Construction time of FIT (FITing-Tree + $\epsilon$-PLA) w.r.t.~$\epsilon\in\{2^2, 2^3,\cdots, 2^{13}\}$ for different $\epsilon$-PLA algorithms. }
    \label{fig:fit_btime}
\end{figure}

\begin{figure}[t]
    \centering
    \begin{subfigure}[t]{0.48\columnwidth}
        \includegraphics[width=\linewidth]{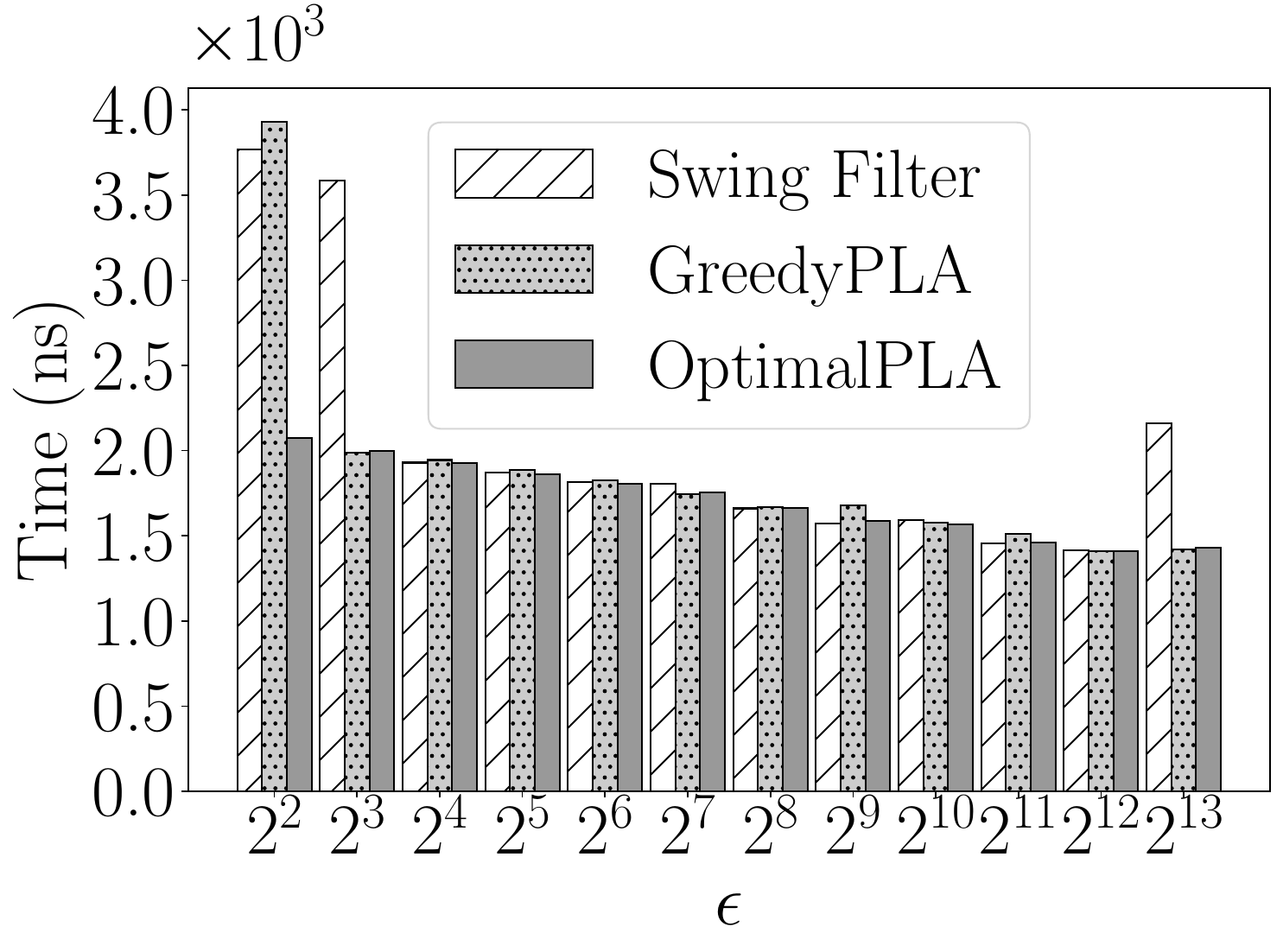}
        \caption{\textsf{fb}: Query Time vs. $\epsilon$}
    \end{subfigure}
    \hspace{-0.5em}
    \begin{subfigure}[t]{0.48\columnwidth}
        \includegraphics[width=\linewidth]{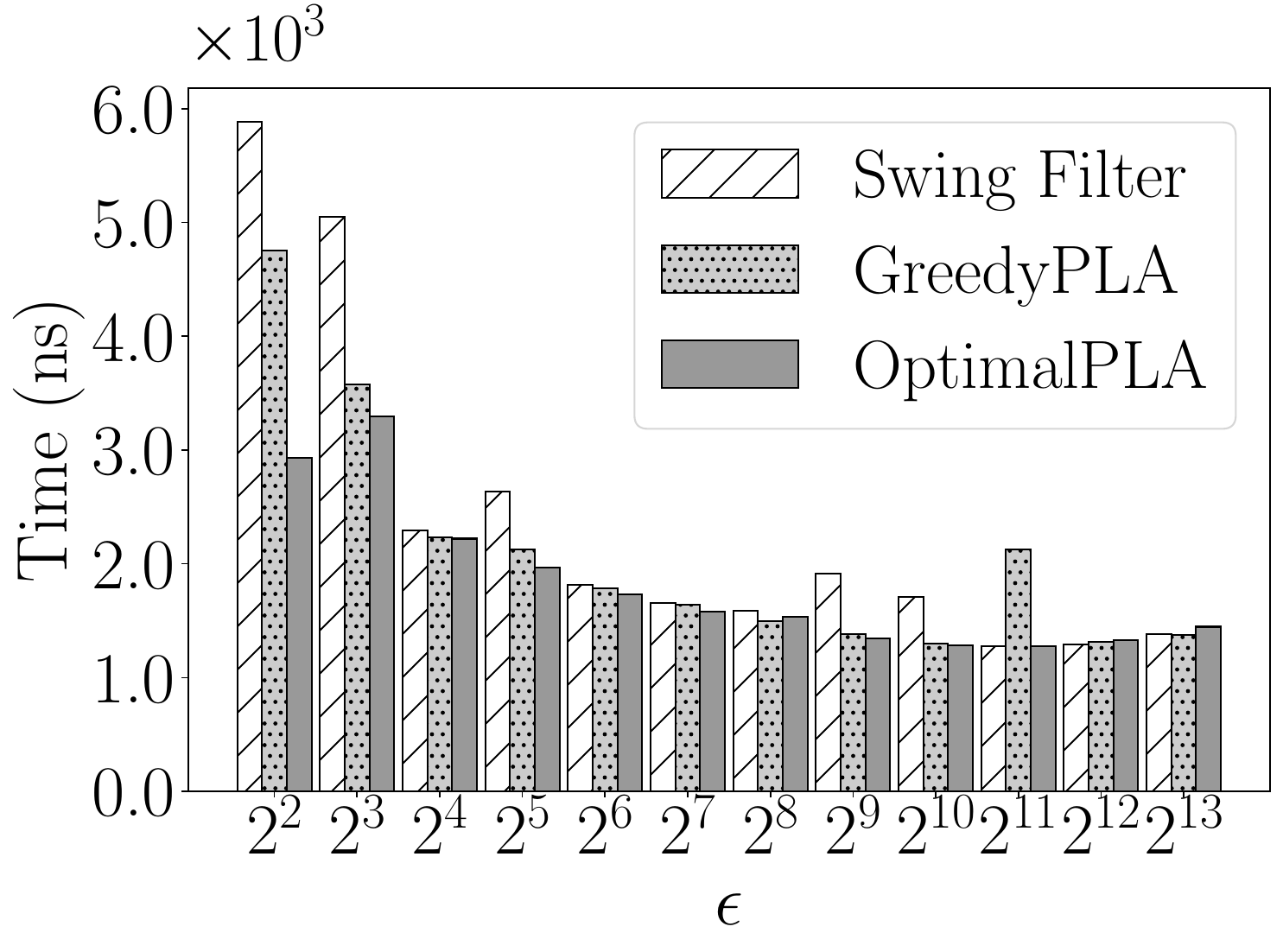}
        \caption{\textsf{books}: Query Time vs. $\epsilon$}
    \end{subfigure}
    
    \begin{subfigure}[t]{0.48\columnwidth}
        \includegraphics[width=\linewidth]{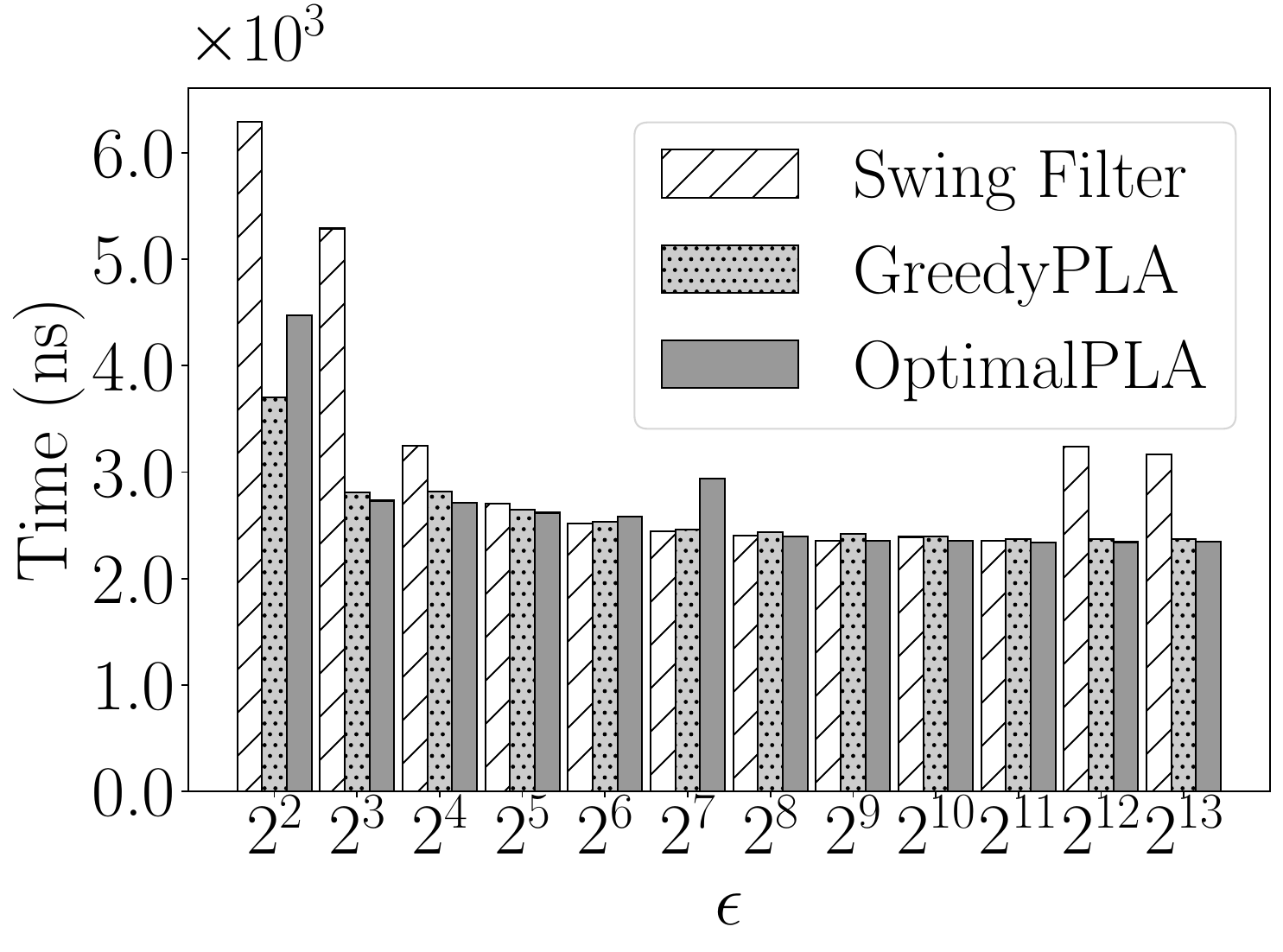}
        \caption{\textsf{osm}: Query Time vs. $\epsilon$}
    \end{subfigure}
    \hspace{-0.5em}
    \begin{subfigure}[t]{0.48\columnwidth}
        \includegraphics[width=\linewidth]{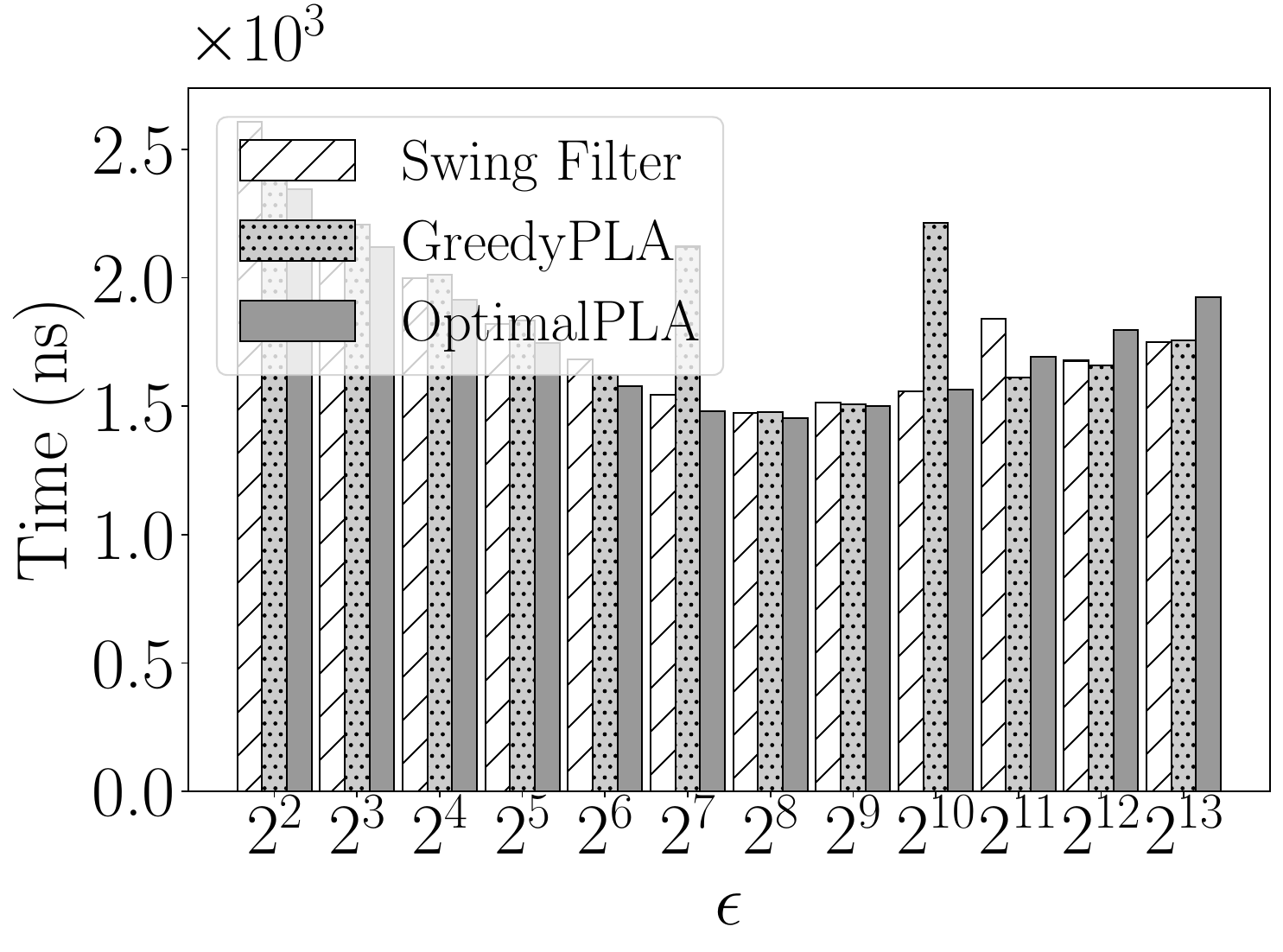}
        \caption{\textsf{uniform}: Query Time vs. $\epsilon$}
    \end{subfigure}
    
    \begin{subfigure}[t]{0.48\columnwidth}
        \includegraphics[width=\linewidth]{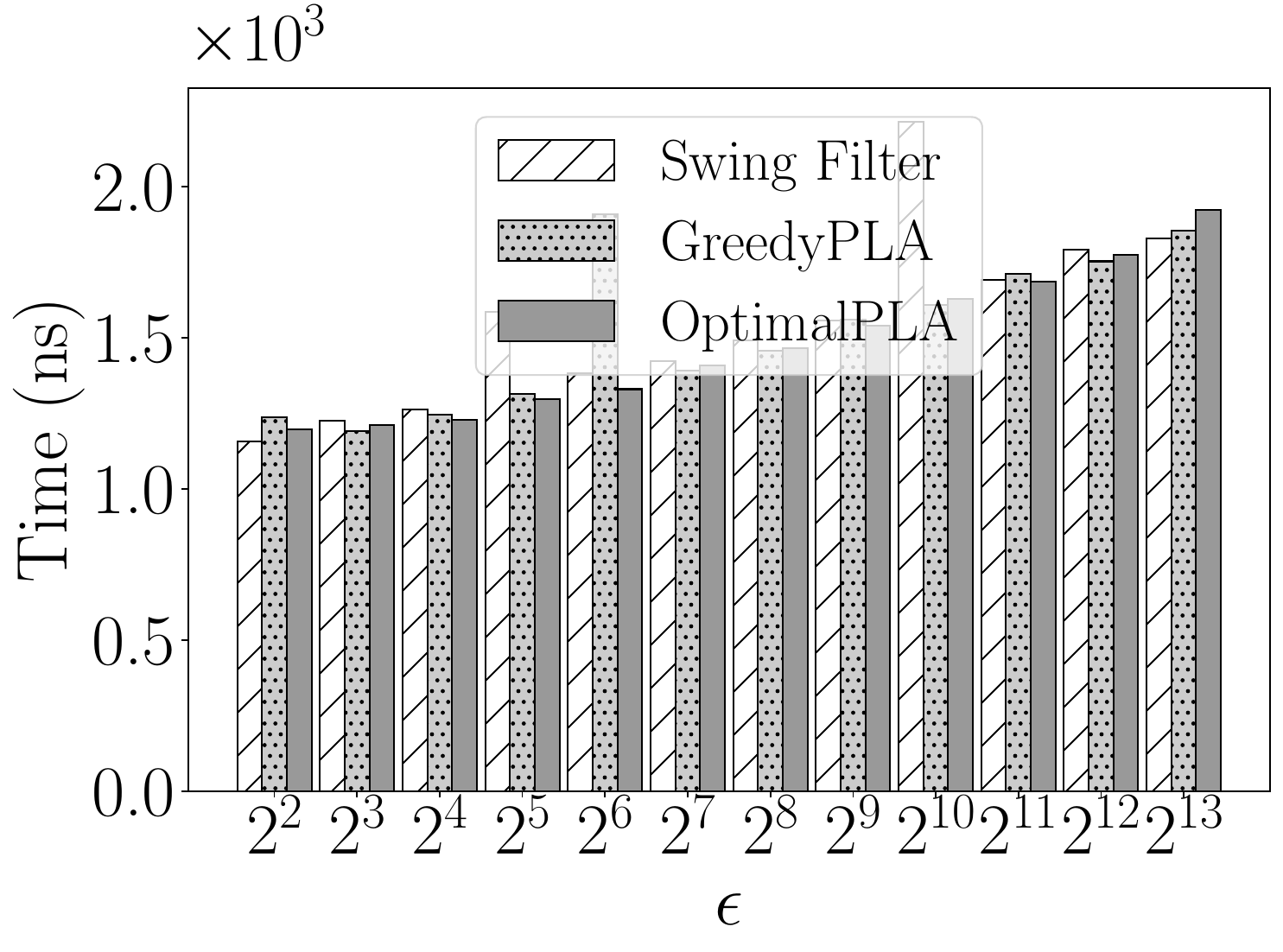}
        \caption{\textsf{normal}: Query Time vs. $\epsilon$}
    \end{subfigure}
    \hspace{-0.5em}
    \begin{subfigure}[t]{0.48\columnwidth}
        \includegraphics[width=\linewidth]{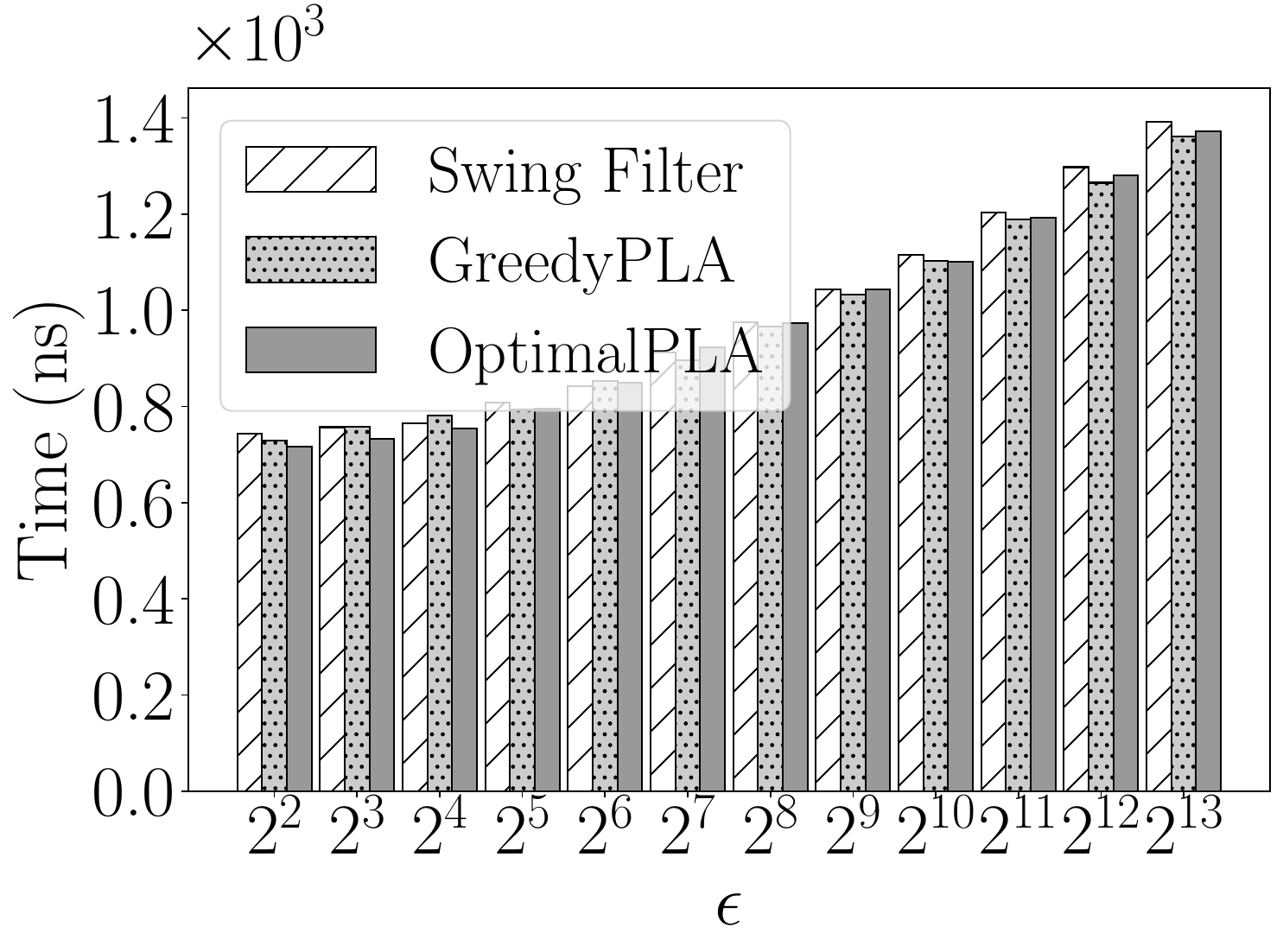}
        \caption{\textsf{lognormal}: Query Time vs. $\epsilon$}
    \end{subfigure}

    \caption{Query processing time of FIT (FITing-Tree + $\epsilon$-PLA) w.r.t.~$\epsilon\in\{2^2, 2^3,\cdots, 2^{13}\}$ for different $\epsilon$-PLA algorithms.}
    \label{fig:fit_qtime}
\end{figure}

\subsection{FIT Evaluation Results (\textbf{RQ3})}\label{subsec:exp_fit}
\subsubsection{FIT Index Size and Construction Time} 
\cref{fig:fit_indexS} reports the index size of FIT constructed using different $\epsilon$-PLA algorithms, with $\epsilon$ ranging from $2^2$ to $2^{13}$. 
The results show that GreedyPLA and SwingFilter incur approximately $\mathbf{1.23}\times$ to $\mathbf{1.38}\times$ more space overhead than OptimalPLA. 
This can be attributed to the number of segments produced by different $\epsilon$-PLA algorithms (see \cref{subsec:exp_standalone}). 
Assuming the fanout of the internal B$^+$-tree is $B$, the total space overhead of FIT is $O(m + m/B)$, where $m$ denotes the number of segments generated by a given $\epsilon$-PLA algorithm. 

We next compare the construction time of FIT using different $\epsilon$-PLA algorithms. 
As shown in \cref{fig:fit_btime}, although GreedyPLA and SwingFilter outperform OptimalPLA in the standalone setting (\cref{subsec:exp_standalone}), they exhibit slightly longer construction times on datasets such as \textsf{fb}, \textsf{osm}, and \textsf{books} when $\epsilon$ is small. 
This is because greedy methods tend to generate more segments, resulting in more leaf nodes and greater overhead in building the internal B$^+$-tree structure, thus increasing total construction time.
As $\epsilon$ increases, the number of segments produced by greedy methods decreases, reducing this overhead and eventually allowing SwingFilter and GreedyPLA to outperform OptimalPLA in FIT. 
Notably, on highly skewed datasets such as \textsf{normal} and \textsf{lognormal}, GreedyPLA consistently achieves up to a $\mathbf{1.45}\times$ speedup over OptimalPLA, due to the significantly fewer segments produced on such datasets. 

\textbf{Takeaways.} 
GreedyPLA and SwingFilter consistently generate more segments across varying $\epsilon$, leading to larger FIT index size and increased building time, particularly when $\epsilon$ is small. 
Therefore, for FIT, to fully leverage the efficiency advantages of greedy $\epsilon$-PLA methods, $\epsilon$ should be set to a relatively large value (e.g., $2^5$ or higher).

\subsubsection{FIT Query Processing Time} 
We now examine the query processing performance of FITing-Trees constructed using different $\epsilon$-PLA algorithms. 
As shown in \cref{fig:fit_qtime}, the query time generally follows a U-shaped trend. 
Initially, greedy methods underperform compared to OptimalPLA due to their higher segment counts, which leads to a deeper internal B$^+$-tree. 
However, as $\epsilon$ increases, the performance gap narrows. 
This is because OptimalPLA, while globally minimizing the number of segments, often produces larger average residuals, resulting in higher last-mile search costs. 
In contrast, GreedyPLA and SwingFilter generate more segments with smaller residuals, leading to more accurate position predictions and ultimately better query efficiency.
Notably, a similar behavior is observed in PGM-Index, and we defer detailed discussions to \cref{subsec:exp_pgm}. 

\begin{figure}[t]
    \centering
    \begin{subfigure}[t]{0.48\columnwidth}
        \includegraphics[width=\linewidth]{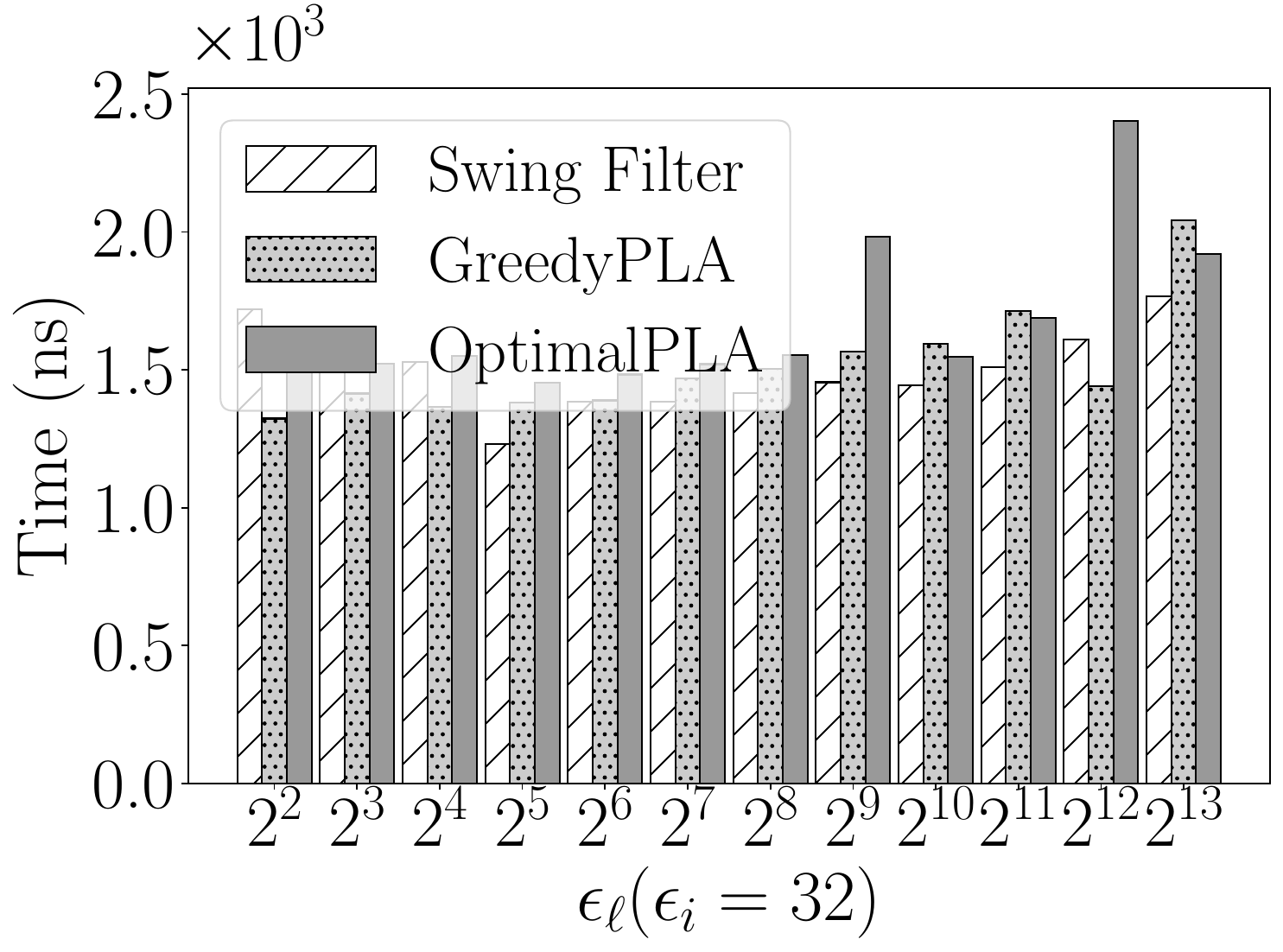}
        \caption{\textsf{fb}: Query Time vs. $\epsilon$}
    \end{subfigure}
    \hfill
    \begin{subfigure}[t]{0.48\columnwidth}
        \includegraphics[width=\linewidth]{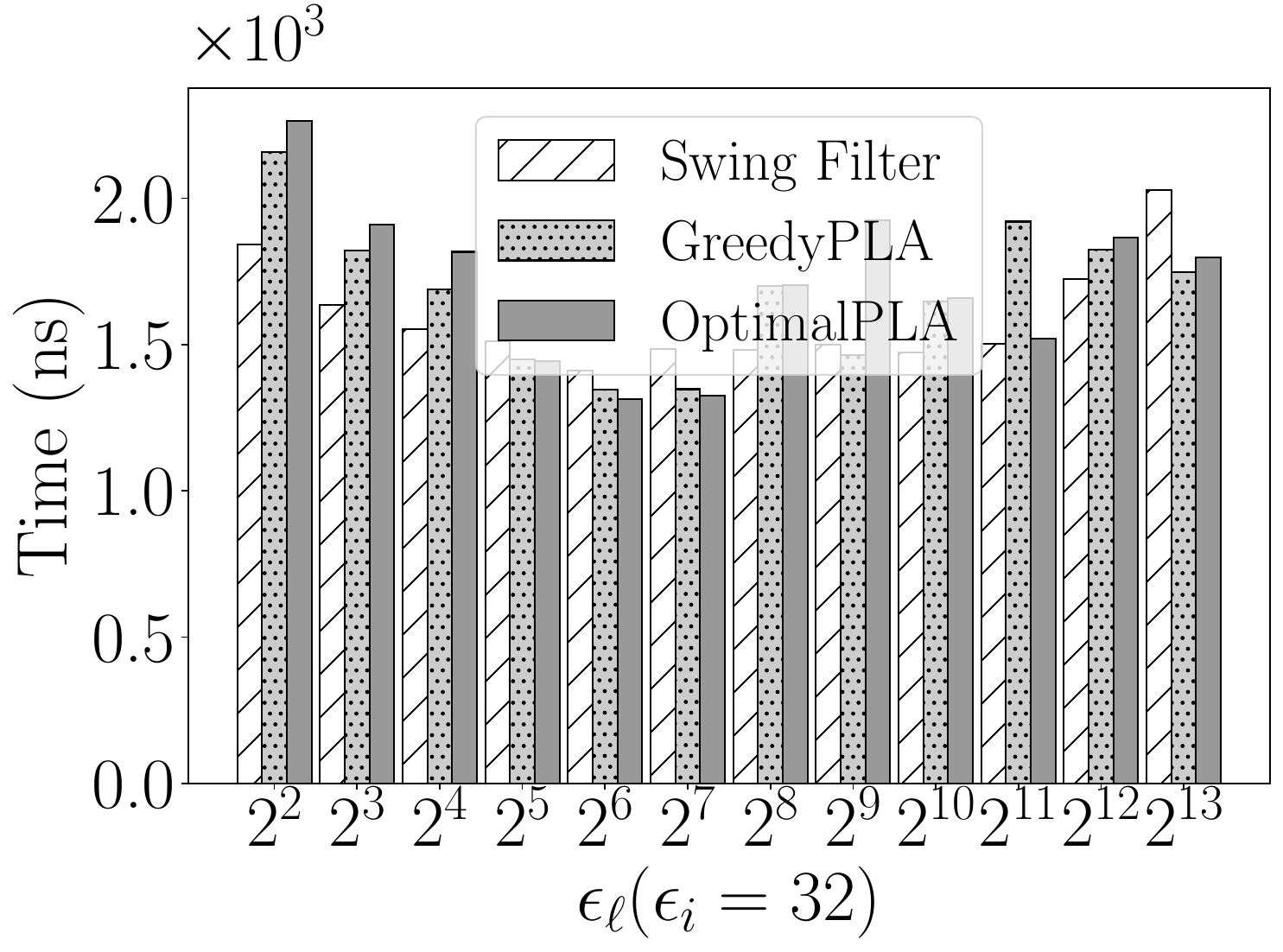}
        \caption{\textsf{books}: Query Time vs. $\epsilon$}
    \end{subfigure}
    
    \begin{subfigure}[t]{0.48\columnwidth}
        \includegraphics[width=\linewidth]{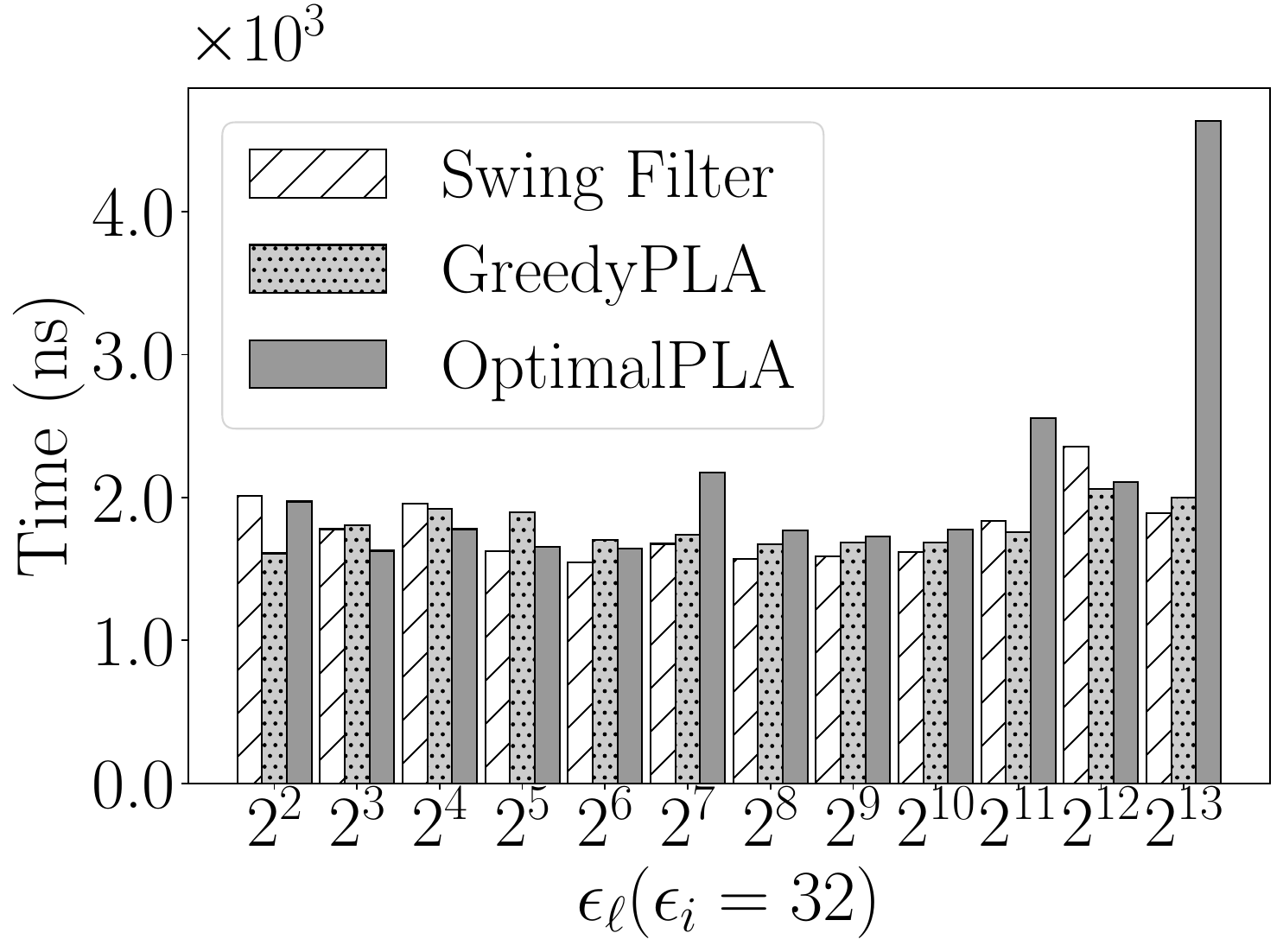}
        \caption{\textsf{osm}: Query Time vs. $\epsilon$}
    \end{subfigure}
    \hfill
        \begin{subfigure}[t]{0.48\columnwidth}
        \includegraphics[width=\linewidth]{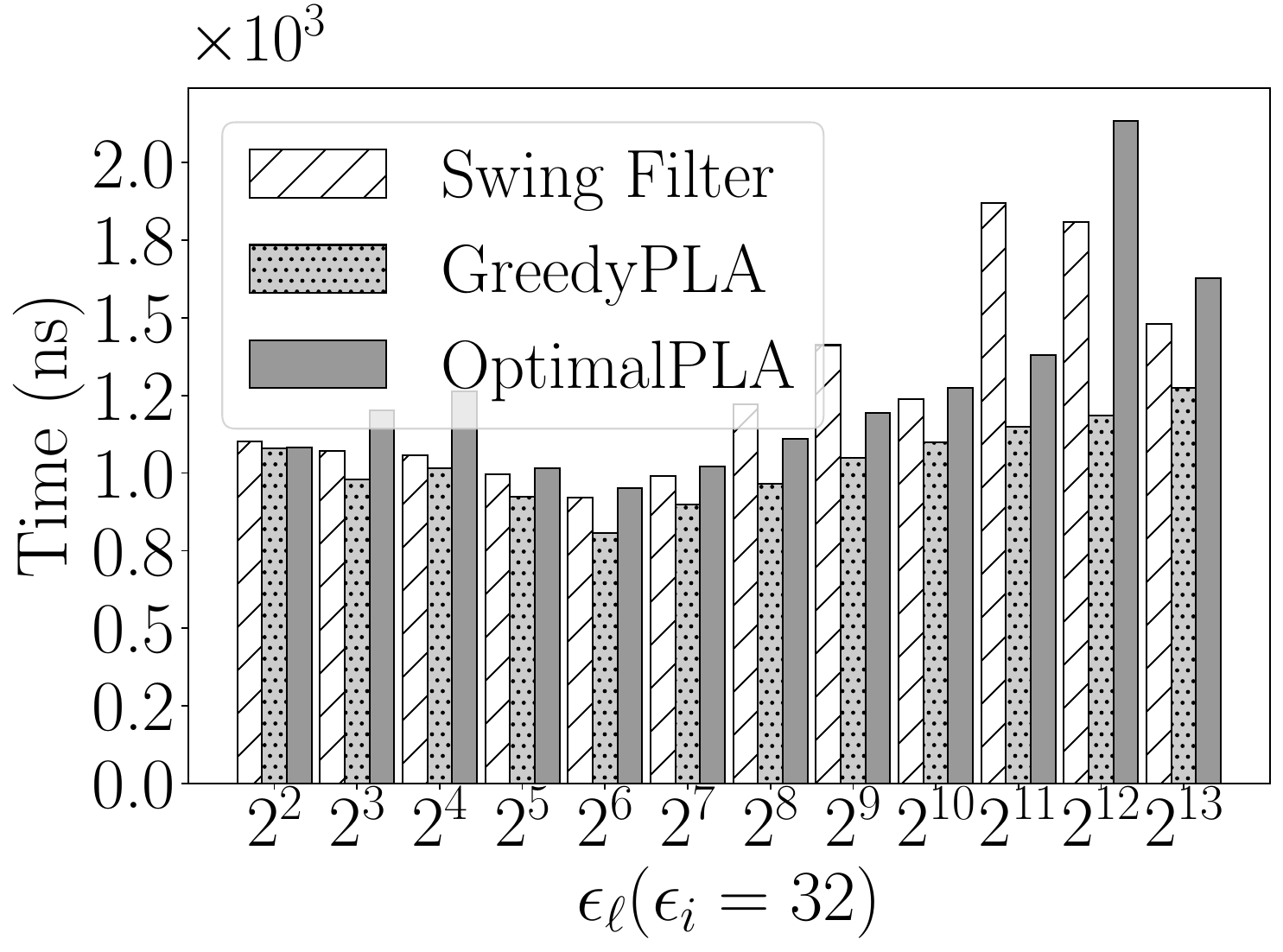}
        \caption{\textsf{uniform}: Query Time vs. $\epsilon$}
    \end{subfigure}

    \begin{subfigure}[t]{0.48\columnwidth}
        \includegraphics[width=\linewidth]{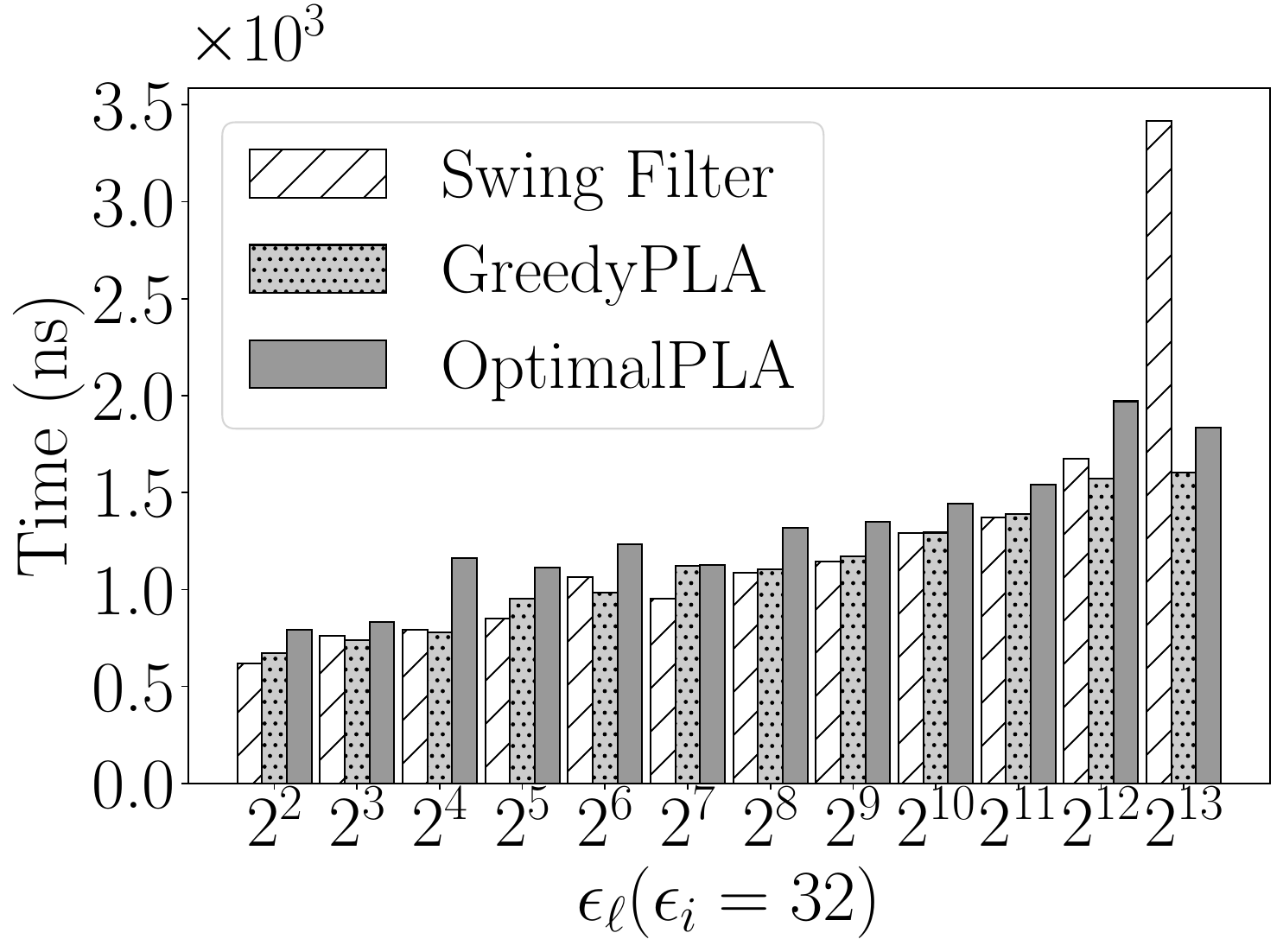}
        \caption{\textsf{normal}: Query Time vs. $\epsilon$}
    \end{subfigure}
    \hfill
        \begin{subfigure}[t]{0.48\columnwidth}
        \includegraphics[width=\linewidth]{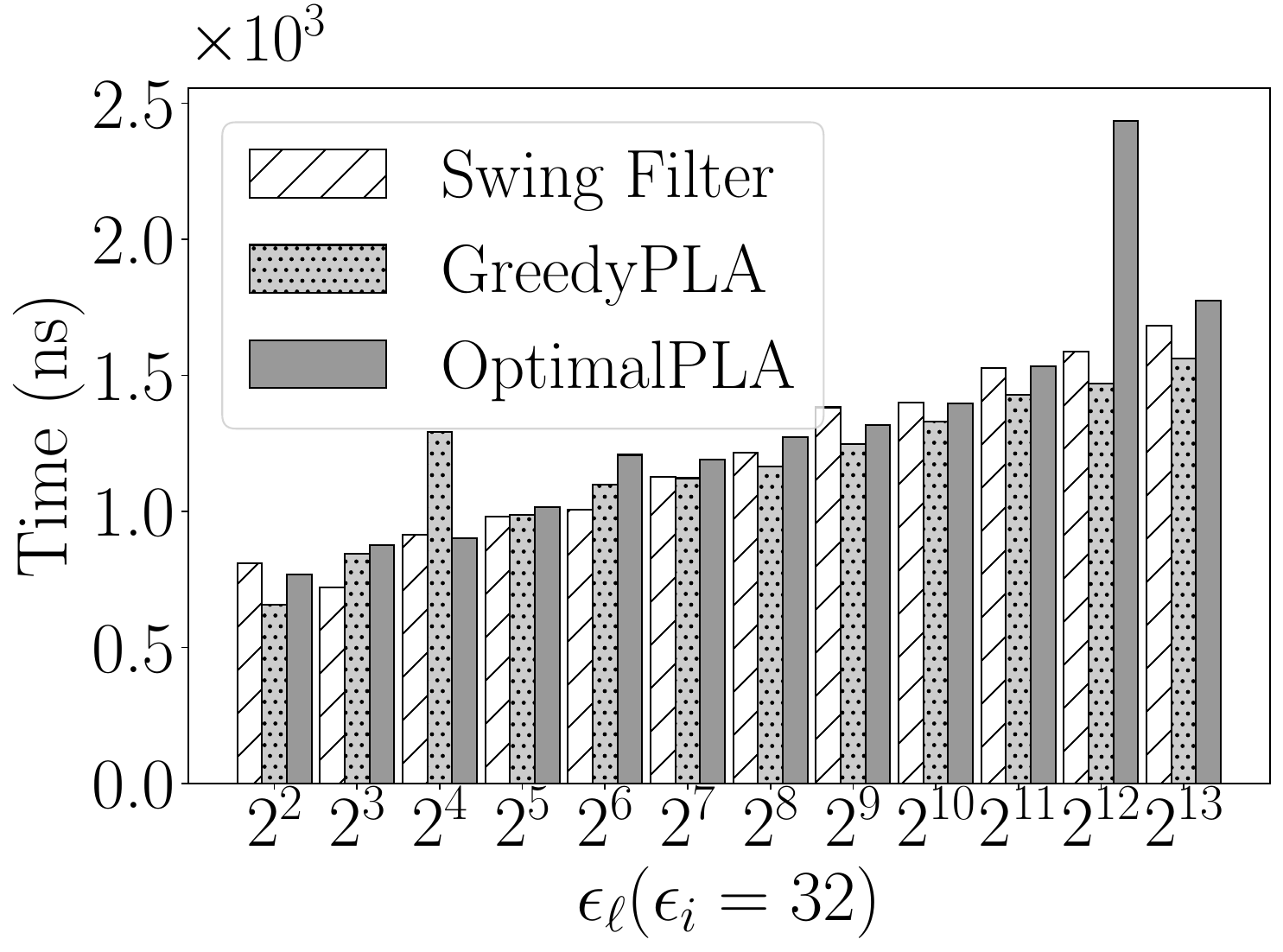}
        \caption{\textsf{lognormal}: Query Time vs. $\epsilon$}
    \end{subfigure}
    
    \caption{Query processing time of PGM (PGM-Index + $\epsilon$-PLA) w.r.t.~$\epsilon\in\{2^2, 2^3,\cdots, 2^{13}\}$ for different $\epsilon$-PLA fitting algorithms.}
     \label{fig:pgm_qtime}
\end{figure}

\begin{figure}[t]
    \centering
    \begin{subfigure}[t]{0.48\columnwidth}
        \includegraphics[width=\linewidth]{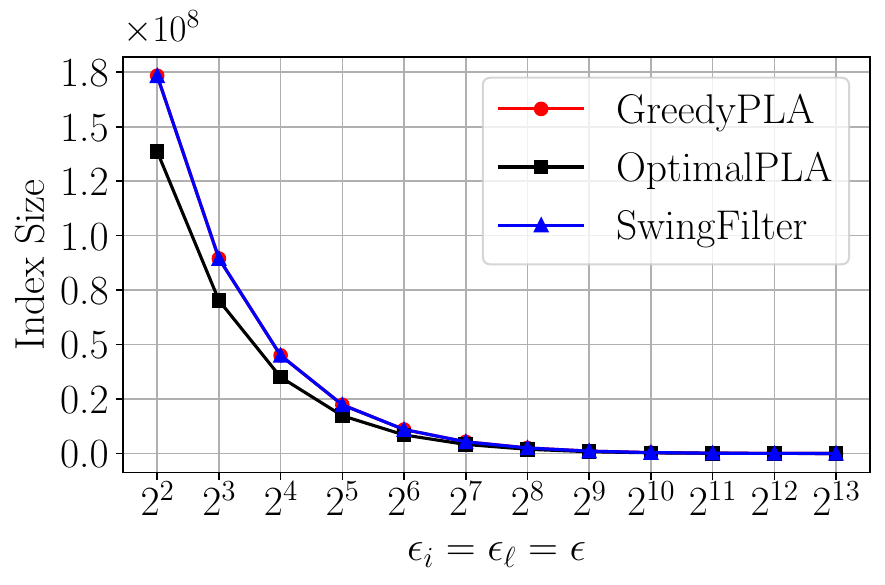}
        \caption{\textsf{fb}: Index Size vs. $\epsilon$}
    \end{subfigure}
    \hfill  
    \begin{subfigure}[t]{0.48\columnwidth}
        \includegraphics[width=\linewidth]{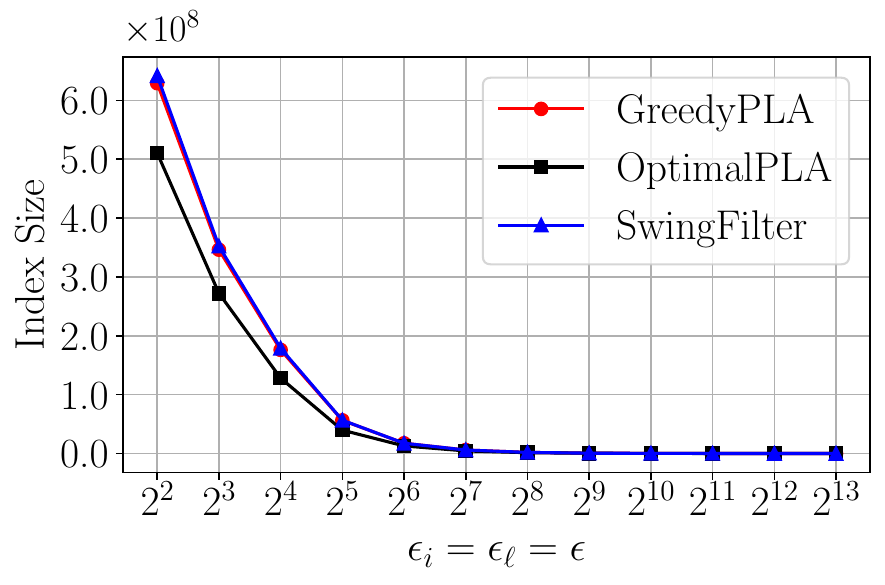}
        \caption{\textsf{books}: Index Size vs. $\epsilon$}
    \end{subfigure}
    
    \begin{subfigure}[t]{0.48\columnwidth}
        \includegraphics[width=\linewidth]{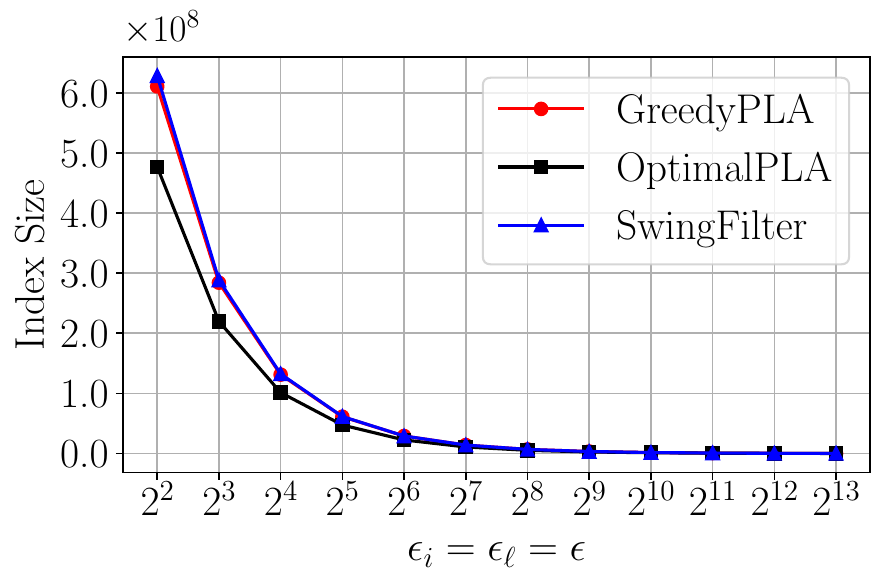}
        \caption{\textsf{osm}: Index Size vs. $\epsilon$}
    \end{subfigure}
    \hfill
    \begin{subfigure}[t]{0.48\columnwidth}
        \includegraphics[width=\linewidth]{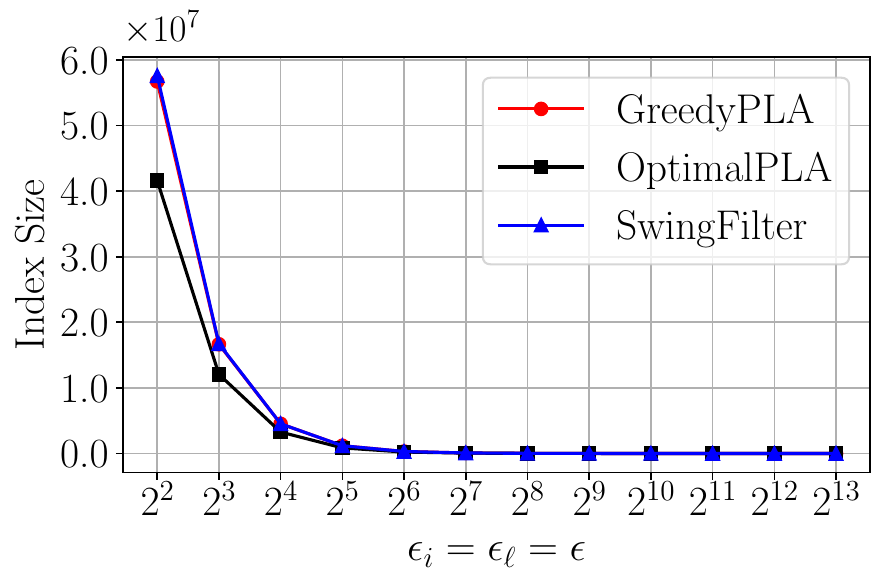}
        \caption{\textsf{uniform}: Index Size vs. $\epsilon$}
    \end{subfigure}

    \begin{subfigure}[t]{0.48\columnwidth}
        \includegraphics[width=\linewidth]{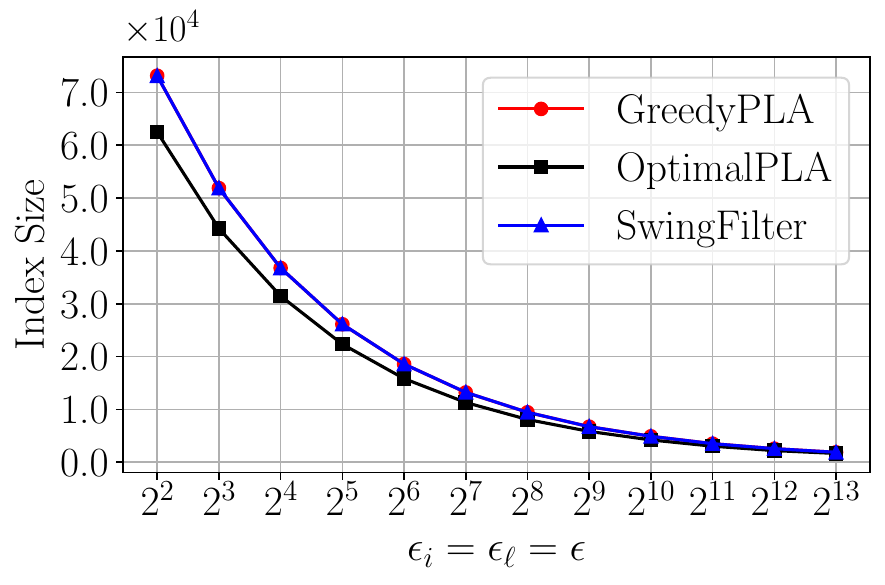}
        \caption{\textsf{normal}: Index Size vs. $\epsilon$}
    \end{subfigure}
    \hfill
    \begin{subfigure}[t]{0.48\columnwidth}
        \includegraphics[width=\linewidth]{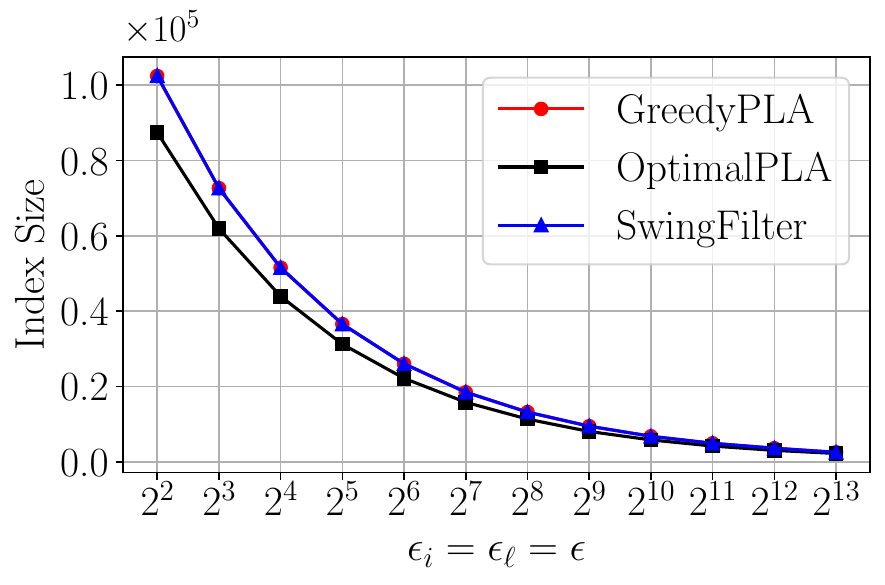}
        \caption{\textsf{lognormal}: Index Size vs. $\epsilon$}
    \end{subfigure}
    \caption{Index size time of PGM (PGM-Index + $\epsilon$-PLA) w.r.t.~$\epsilon\in\{2^2, 2^3,\cdots, 2^{13}\}$ for different $\epsilon$-PLA fitting algorithms.}
    \label{fig:pgm_indexS}
\end{figure}

\begin{figure}[t]
    \centering
    \begin{subfigure}[t]{0.48\columnwidth}
        \includegraphics[width=\linewidth]{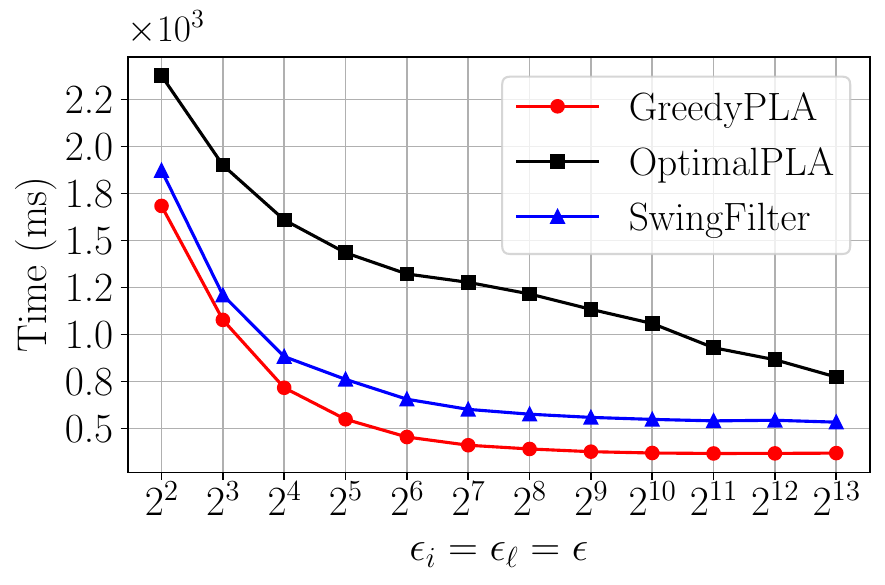}
        \caption{\textsf{fb}: Build Time vs. $\epsilon$}
    \end{subfigure}
    \hfill
    \begin{subfigure}[t]{0.48\columnwidth}
        \includegraphics[width=\linewidth]{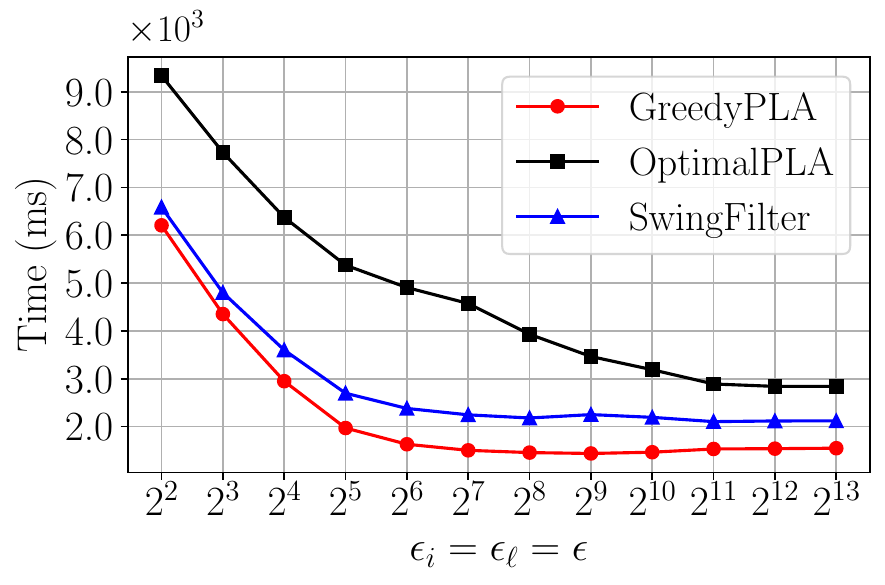}
        \caption{\textsf{books}: Build Time vs. $\epsilon$}
    \end{subfigure}

    \begin{subfigure}[t]{0.48\columnwidth}
        \includegraphics[width=\linewidth]{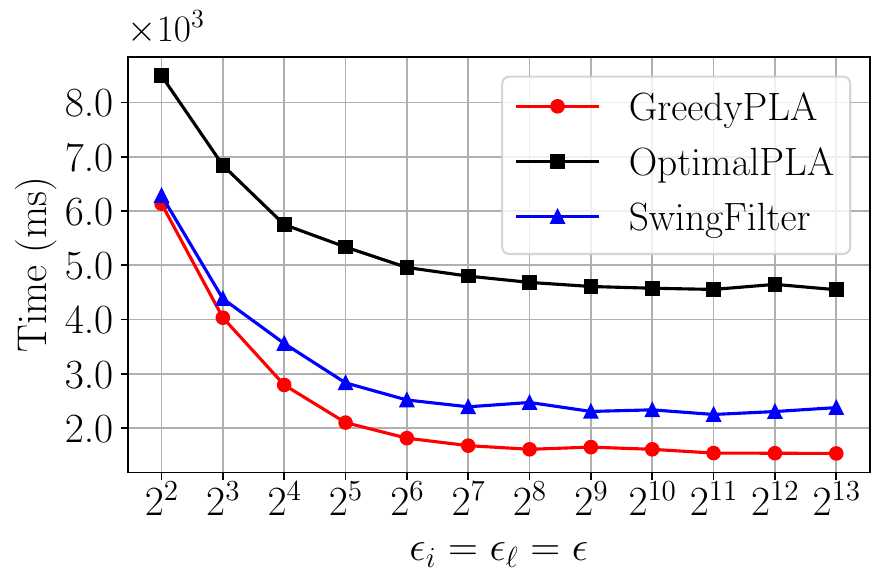}
        \caption{\textsf{osm}: Build Time vs. $\epsilon$}
    \end{subfigure}
    \hfill
    \begin{subfigure}[t]{0.48\columnwidth}
        \includegraphics[width=\linewidth]{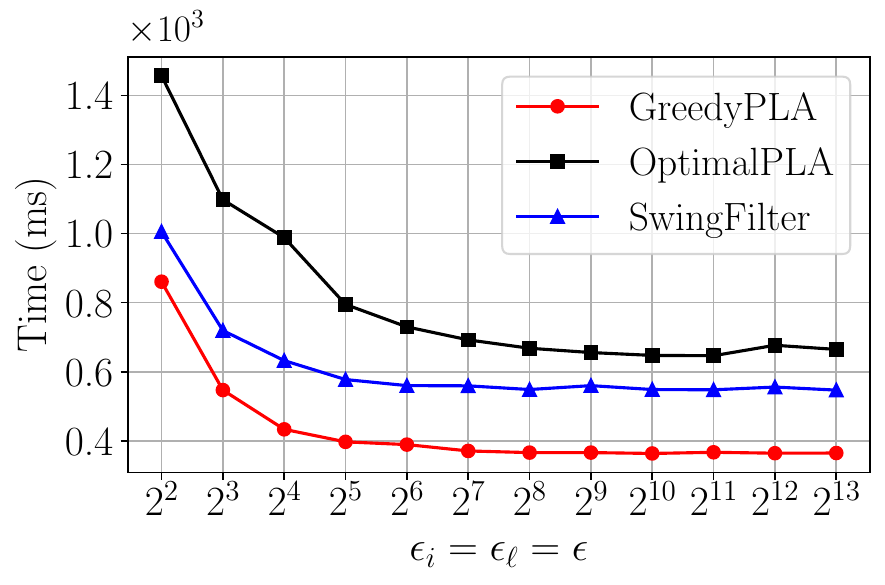}
        \caption{\textsf{uniform}: Build Time vs. $\epsilon$}
    \end{subfigure}

    \begin{subfigure}[t]{0.48\columnwidth}
        \includegraphics[width=\linewidth]{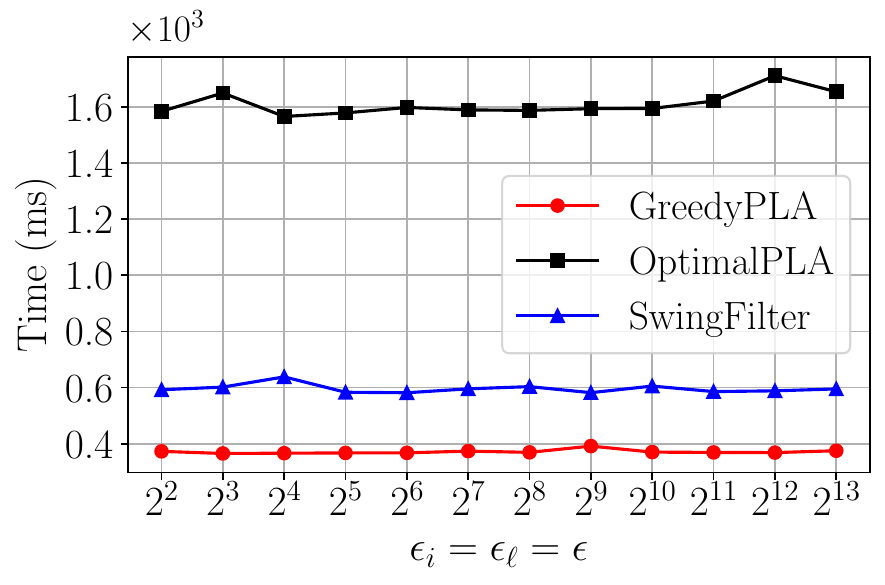}
        \caption{\textsf{normal}: Build Time vs. $\epsilon$}
    \end{subfigure}
    \hfill
    \begin{subfigure}[t]{0.48\columnwidth}
        \includegraphics[width=\linewidth]{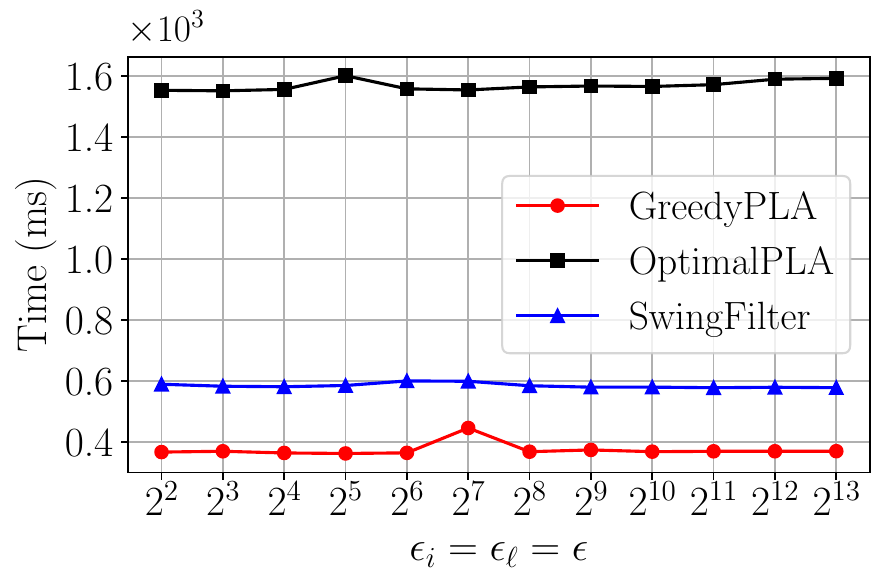}
        \caption{\textsf{lognormal}: Build Time vs. $\epsilon$}
    \end{subfigure}
    \caption{Construction time of PGM (PGM-Index + $\epsilon$-PLA) w.r.t.~$\epsilon\in\{2^2, 2^3,\cdots, 2^{13}\}$ for different $\epsilon$-PLA fitting algorithms.}
    \label{fig:pgm_btime}
\end{figure}

\begin{table}[t]
\centering
\caption{Maximum average residual over all the internal layers of PGM-Index when searching for a set of query keys. 
We fix $\epsilon_i$ at 32 and vary $\epsilon_\ell$ on \textsf{fb} dataset. 
Notably, the last-mile error $\epsilon_\ell$ doesn't affect the internal searching time.} 
\label{tab:residual_compare}
\begin{tabular}{c|ccc}
\toprule
\textbf{$\epsilon_\ell$} & \textbf{OptimalPLA} & \textbf{GreedyPLA} & \textbf{SwingFilter} \\
\midrule
$2^2$   & 14.1936 & 13.1399 & 13.2728 \\
$2^3$   & 13.3582 & 12.4875 & 12.7311 \\
$2^4$  & 12.9545 & 12.1711 & 12.1628 \\
$2^5$  & 14.3894 & 11.8181 & 11.6613 \\
$2^6$  & 14.0896 & 11.4472 & 11.4814 \\
$2^7$  & 15.5535 & 11.2243 & 11.0557 \\
$2^8$  & 12.7284 & 10.5285 & 10.5566 \\
$2^9$  & 15.0138 & 9.8464  & 10.5327 \\
\bottomrule
\end{tabular}
\end{table}

\subsection{PGM Evaluation Results (\textbf{RQ3})}\label{subsec:exp_pgm}

Unlike FITing-Tree, which applies a single error parameter for $\epsilon$-PLA, PGM-Index employs two independent error parameters: one for the internal $\epsilon$-PLA model(s) ($\epsilon_i$) and another for the last-level $\epsilon$-PLA model ($\epsilon_\ell$). 
For internal search settings, we set $2^5$ as a threshold, meaning that when $\epsilon_i\leq2^5$ we use linear search and switch to binary search otherwise.

\subsubsection{PGM Index Size and Construction Time}
\cref{fig:pgm_indexS} reports the index size of PGM-Index with different $\epsilon$-PLA algorithms when varying the error bound $\epsilon$. 
Note that we set the internal error bound $\epsilon_i$ and last-mile error bound $\epsilon_\ell$ to be the same as $\epsilon$. 
The results clearly show the superiority of OptimalPLA as it guarantees the minimum number of segments generated for each layer. 
The index size of the PGM-Index constructed by SwingFilter and GreedyPLA is $\mathbf{1.24}\times$ to $\mathbf{1.38}\times$ larger than that achieved with OptimalPLA. 
These findings align with the results in \cref{subsec:exp_standalone}. 

We then compare the construction time of the PGM-Index. 
In contrast to their performance on FITing-Tree (see \cref{fig:fit_indexS}, \cref{fig:fit_btime}, \cref{fig:fit_qtime}), greedy algorithms achieve more significant improvements in index construction time when applied to PGM-Index. 
For instance, in \cref{fig:pgm_btime}, on the \textsf{books} dataset, GreedyPLA provides up to $\mathbf{3}\times$ speedup over OptimalPLA. 
The reason is that, unlike FITing-Tree, which invokes the $\epsilon$-PLA only once for leaf-level index construction, PGM-Index recursively applies $\epsilon$-PLA throughout all levels of the index. 
As a result, the efficiency advantage of greedy algorithms becomes more pronounced in the context of PGM-Index.

\textbf{Takeaways.}
Greedy $\epsilon$-PLA algorithms achieve significantly faster index construction in the PGM-Index compared to the FITing-Tree, due to PGM-Index's greater dependence on the efficiency of the $\epsilon$-PLA algorithm. 
Notably, this improvement becomes more pronounced on skewed datasets. 

\subsubsection{PGM Query Processing Time}
We now focus on the query efficiency of PGM-Index based on different $\epsilon$-PLA fitting methods. 
When searching for a key, we fix $\epsilon_i =2^5$ for simplicity and uniformity. 
As shown in \cref{fig:pgm_qtime}, there is a U-shaped trend similar to \cref{subsec:exp_fit}.
On each dataset, the $\epsilon$-PLA methods show similar patterns, which will be discussed in \cref{subsec:exp_parameter}. 
However, OptimalPLA underperforms slightly over GreedyPLA and SwingFilter when $\epsilon_\ell = \epsilon_i = 2^5$, as shown in \cref{tab:pgm_statiscs}. 
This might be a counterintuitive phenomenon because all the $\epsilon$-PLA methods share almost the same height and the exact searching range (i.e., same $\epsilon_i$ and $\epsilon_\ell$). 
To explain this, we calculate the average residual between the predicted indexes and the real indexes within each layer. 
We fix $\epsilon_i = 2^5$ and report the largest residual among all the layers with different $\epsilon_\ell$ in \cref{tab:residual_compare}. 
Notably, the average residual of OptimalPLA is often larger than greedy methods over all $\epsilon_\ell$, leading to a higher query latency inside internal segments. 

\textbf{Takeaways.}
While OptimalPLA minimizes the number of segments, it often leads to larger intra-segment residuals compared to greedy $\epsilon$-PLA algorithms, highlighting a trade-off between compression efficacy and prediction precision that directly impacts the search latency of PGM-Index. 

\begin{table}[h]
    \centering
    \caption{Index Size, build time, and query processing time comparison for PGM and FIT w.r.t.~$\epsilon\in\{2^2, 2^6\}$ across different $\epsilon$-PLA algorithms (OPT: OptimalPLA, GRY: GreedyPLA, SWF: SwingFilter) on \textsf{books}. For PGM, $\epsilon_i$ is fixed to $2^5$ and we vary $\epsilon_\ell$. }
    \label{tab:pgm_fit_comparison}
    \begin{adjustbox}{width=\columnwidth,center}
    \begin{tabular}{c|c|c|c|c|c|c|c}
    \toprule
        \multirow{2}{*}{} & \multirow{2}{*}{$\epsilon$} & \multicolumn{3}{c|}{\textbf{PGM}}& \multicolumn{3}{c}{\textbf{FIT}} \\
         & & OPT & GRY & SWF & OPT & GRY & SWF \\
        \midrule
        \multirow{5}{1.4cm}{\centering Index Size \\ (MB)} & $2^2$ & 480.85 & 589.01 & 600.47 & 2403.45 & 2943.43 & 3000.66 \\
                    & $2^3$ & 257.31 & 325.46 & 330.99 & 1286.31 & 1626.74 & 1654.36 \\
                    & $2^4$ & 121.22 & 165.80 & 168.34 & 606.04  & 828.78  & 841.49  \\
                    & $2^5$ & 37.61  & 53.25  & 53.41  & 188.01  & 266.22  & 267.03  \\
                    & $2^6$ & 12.17  & 16.42  & 16.48  & 60.82   & 82.10   & 82.37 \\     
        \hline
        \multirow{5}{1.4cm}{\centering Build Time \\ (s)}  & $2^2$ & 9.36 & 6.25 & 6.52 & 39.16 & 43.10 & 45.41 \\
                    & $2^3$ & 7.66 & 4.31 & 4.79 & 21.77 & 21.88 & 25.69 \\
                    & $2^4$ & 6.30 & 2.77 & 3.53 & 12.66 & 13.60 & 13.42 \\
                    & $2^5$ & 5.38 & 1.98 & 2.71 & 7.47  & 7.35  & 7.62 \\
                    & $2^6$ & 4.91 & 1.61 & 2.35 & 5.89 & 5.06 & 5.46 \\ 
        \hline
        \multirow{5}{1.4cm}{\centering Query Time \\ (\textmu s)}  & $2^2$ & 2.50 & 2.80 & 2.17 & 2.93 & 4.75 & 5.89 \\
                    & $2^3$ & 2.27 & 2.48 & 2.46 & 3.29 & 3.58 & 5.05 \\
                    & $2^4$ & 2.46 & 1.79 & 2.38 & 2.22 & 2.23 & 2.29 \\
                    & $2^5$ & 1.99 & 1.98 & 2.27 & 1.97 & 2.13 & 2.64 \\
         & $2^6$ & 1.77 & 1.55 & 1.96 & 1.73 & 1.79 & 1.81 \\   
        \bottomrule
    \end{tabular}
    \end{adjustbox}
\end{table}

\begin{table*}[ht]
\centering
\caption{Internal details of the PGM-Index using different $\epsilon$-PLA methods on \textsf{fb}, \textsf{books}, and \textsf{osm} datasets ($\epsilon_i = \epsilon_\ell = \epsilon$). The numbers of segments in the last and intermediate layers, listed from bottom to top, are reported in Segments(Last) and Segments(Others), respectively.}
\small
\renewcommand{\arraystretch}{1.2}
\begin{tabular}{c|l|c|c|c|c|c|c}
\toprule
\textbf{Dataset} & \textbf{Method} & $\epsilon$ & \textbf{Height} & \textbf{Segments (Last)} & \textbf{Segments (Others)} & \textbf{Memory (MB)} & \textbf{Query Time (ns)} \\
\midrule
\multirow{12}{*}{fb} & \multirow{4}{*}{OptimalPLA} & $2^2$ & 5 & 8338506  & $\left\{317190,3603,12,2\right\}$ & 132.13 & 1150.21 \\
                     &                             & $2^3$ & 4 & 4235686  & $\left\{69573,92,2\right\}$       & 65.69  & 1472.59 \\
                     &                             & $2^4$ & 4 & 2120493  & $\left\{9324,3,2\right\}$         & 32.50  & 1843.41  \\
                     &                             & $2^5$ & 4 & 1055318  & $\left\{790,4,2\right\}$          & 16.12  & 1963.18 \\
\cline{2-8}
                     & \multirow{4}{*}{GreedyPLA}  & $2^2$ & 6 & 10333876 & $\left\{492381,10322,53,4,2\right\}$ & 165.35 & 1487.99 \\
                     &                             & $2^3$ & 5 & 5341466  & $\left\{119665,293,4,2\right\}$   & 83.33  & 1620.26 \\
                     &                             & $2^4$ & 4 & 2695175  & $\left\{18663,11,2\right\}$       & 41.42  & 1565.07 \\
                     &                             & $2^5$ & 4 & 1347023  & $\left\{1696,4,2\right\}$         & 20.58  & 2047.63 \\
\cline{2-8}
                     & \multirow{4}{*}{SwingFilter}& $2^2$ & 6 & 10628797 & $\left\{521399,11554,64,4,2\right\}$ & 170.27 & 1332.43 \\
                     &                             & $2^3$ & 5 & 5431833  & $\left\{124224,306,4,2\right\}$   & 84.78  & 1525.67 \\
                     &                             & $2^4$ & 4 & 2721025  & $\left\{19092,11,2\right\}$       & 41.82  & 1533.47 \\
                     &                             & $2^5$ & 4 & 1354207  & $\left\{1721,4,2\right\}$         & 20.69  & 1891.13 \\
\midrule
\multirow{12}{*}{osm} & \multirow{4}{*}{OptimalPLA} & $2^2$ & 6 & 29028283 & $\left\{745706,19720,586,23,2\right\}$ & 454.61 & 1371.74 \\
                      &                             & $2^3$ & 5 & 13385178 & $\left\{169542,2460,45,2\right\}$      & 206.86 & 1612.50 \\
                      &                             & $2^4$ & 5 & 6175398  & $\left\{39681,315,4,2\right\}$         & 94.84  & 1968.69 \\
                      &                             & $2^5$ & 4 & 2887532  & $\left\{9726,43,2\right\}$             & 44.20  & 2980.33 \\
\cline{2-8}
                      & \multirow{4}{*}{GreedyPLA}  & $2^2$ & 7 & 36903916 & $\left\{1203364,39339,1441,58,3,2\right\}$ & 582.17 & 1745.94 \\
                      &                             & $2^3$ & 6 & 17180689 & $\left\{276360,4872,97,5,3\right\}$        & 266.42 & 1567.37 \\
                      &                             & $2^4$ & 5 & 7935265  & $\left\{64184,640,11,2\right\}$            & 122.12 & 2139.56 \\
                      &                             & $2^5$ & 4 & 3702941  & $\left\{15629,81,2\right\}$                & 56.74  & 2186.38 \\
\cline{2-8}
                      & \multirow{4}{*}{SwingFilter}& $2^2$ & 7 & 37950515 & $\left\{1270704,42449,1576,66,4,2\right\}$ & 598.98 & 2190.30 \\
                      &                             & $2^3$ & 6 & 17434082 & $\left\{284225,5073,103,5,2\right\}$       & 270.33 & 1985.02 \\
                      &                             & $2^4$ & 5 & 7995084  & $\left\{65118,648,12,2\right\}$            & 123.02 & 2245.46 \\
                      &                             & $2^5$ & 4 & 3718071  & $\left\{15741,82,2\right\}$                & 56.98  & 2018.77 \\
\midrule
\multirow{12}{*}{books} & \multirow{4}{*}{OptimalPLA} & $2^2$ & 5 & 31502388 & $\left\{408728,1242,10,2\right\}$ & 487.06 & 1328.93 \\
                        &                             & $2^3$ & 4 & 16859909 & $\left\{46531,60,2\right\}$       & 257.96 & 1992.28 \\
                        &                             & $2^4$ & 4 & 7943412  & $\left\{4098,7,2\right\}$         & 121.27 & 1933.76 \\
                        &                             & $2^5$ & 3 & 2464239  & $\left\{278,2\right\}$            & 37.61  & 1991.28 \\
\cline{2-8}
                        & \multirow{4}{*}{GreedyPLA}  & $2^2$ & 5 & 38580038 & $\left\{749695,4072,29,2\right\}$ & 600.17 & 1380.49 \\
                        &                             & $2^3$ & 5 & 21321956 & $\left\{99384,172,4,2\right\}$    & 326.93 & 1858.17 \\
                        &                             & $2^4$ & 4 & 10862911 & $\left\{10603,13,2\right\}$       & 165.90 & 1523.33 \\
                        &                             & $2^5$ & 4 & 3489298  & $\left\{638,3,2\right\}$          & 53.25  & 1977.94 \\
\cline{2-8}
                        & \multirow{4}{*}{SwingFilter}& $2^2$ & 5 & 39330211 & $\left\{791500,4463,27,2\right\}$ & 612.17 & 1845.98 \\
                        &                             & $2^3$ & 5 & 21683940 & $\left\{791500,4463,27,2\right\}$ & 332.38 & 1736.26 \\
                        &                             & $2^4$ & 4 & 11029494 & $\left\{10992,12,2\right\}$       & 168.45 & 1743.90 \\
                        &                             & $2^5$ & 4 & 3499927  & $\left\{640,3,2\right\}$          & 53.40  & 2270.63 \\
\bottomrule
\end{tabular}
\label{tab:pgm_statiscs}
\end{table*}

\subsection{Effects of Hyperparameters (\textbf{RQ4})}\label{subsec:exp_parameter}
\begin{figure*}[h]
    \centering

    \begin{subfigure}[t]{\textwidth}
        \centering
        \hfill
        \begin{subfigure}[t]{0.32\textwidth}
            \includegraphics[width=\linewidth]{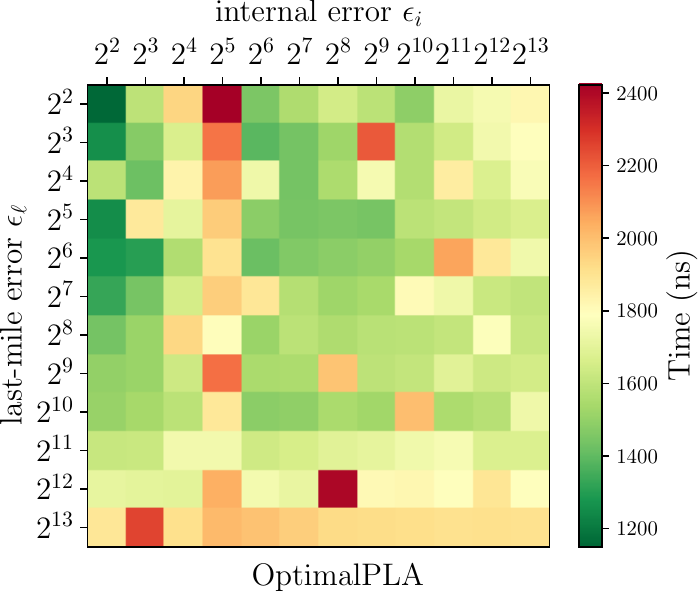}
        \end{subfigure}
        \hfill
        \begin{subfigure}[t]{0.32\textwidth}
            \includegraphics[width=\linewidth]{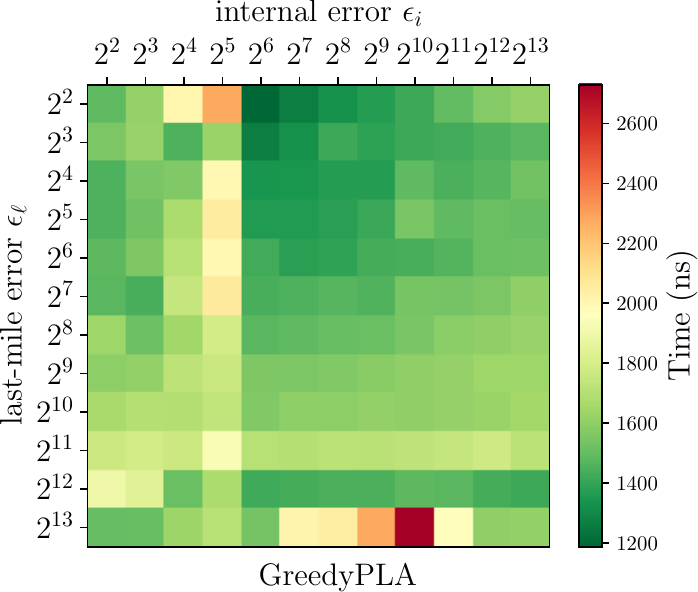}
        \end{subfigure}
        \hfill
        \begin{subfigure}[t]{0.32\textwidth}
            \includegraphics[width=\linewidth]{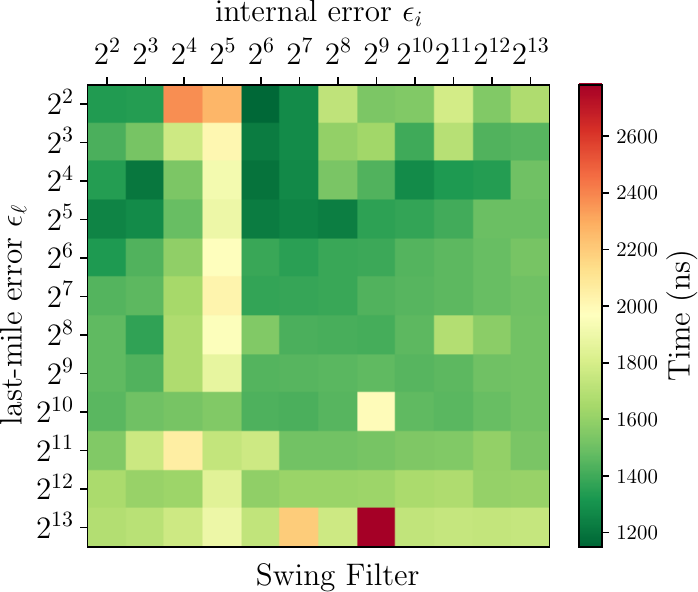}
        \end{subfigure}
        \caption*{\textsf{fb}: Query Time vs. $\epsilon_\ell$ \& $\epsilon_i$} 
    \end{subfigure}

    \begin{subfigure}[t]{\textwidth}
        \centering
        \hfill
        \begin{subfigure}[t]{0.32\textwidth}
            \includegraphics[width=\linewidth]{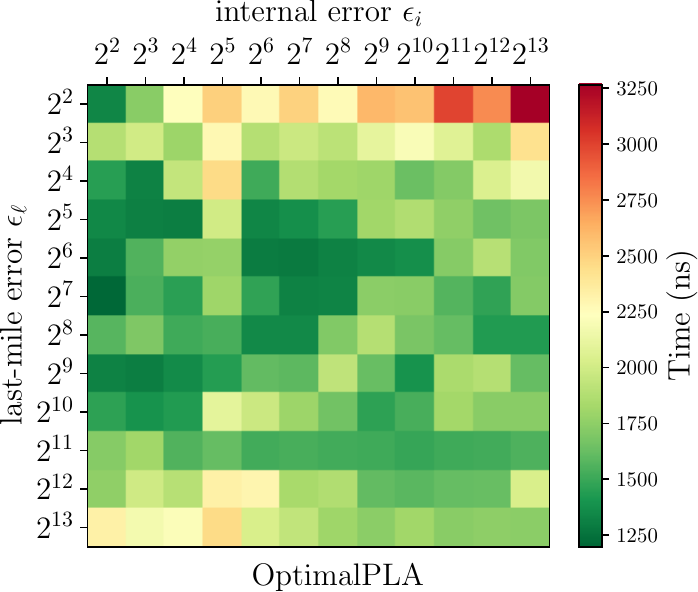}
        \end{subfigure}
        \hfill
        \begin{subfigure}[t]{0.32\textwidth}
            \includegraphics[width=\linewidth]{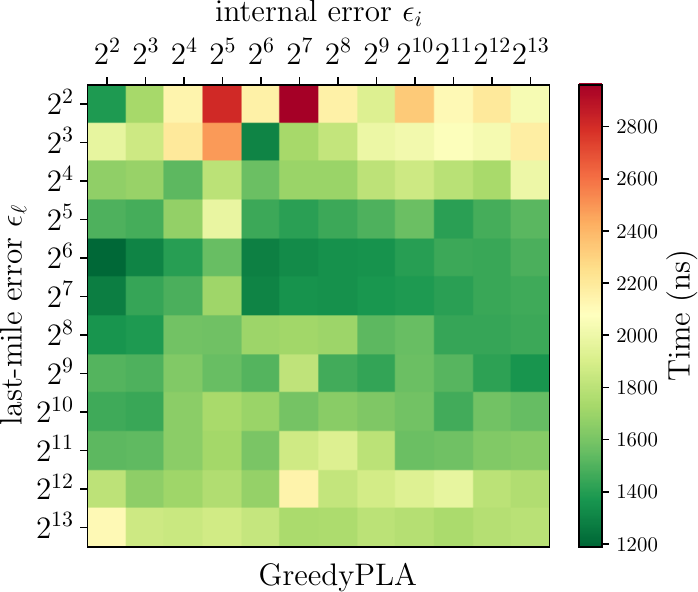}
        \end{subfigure}
        \hfill
        \begin{subfigure}[t]{0.32\textwidth}
            \includegraphics[width=\linewidth]{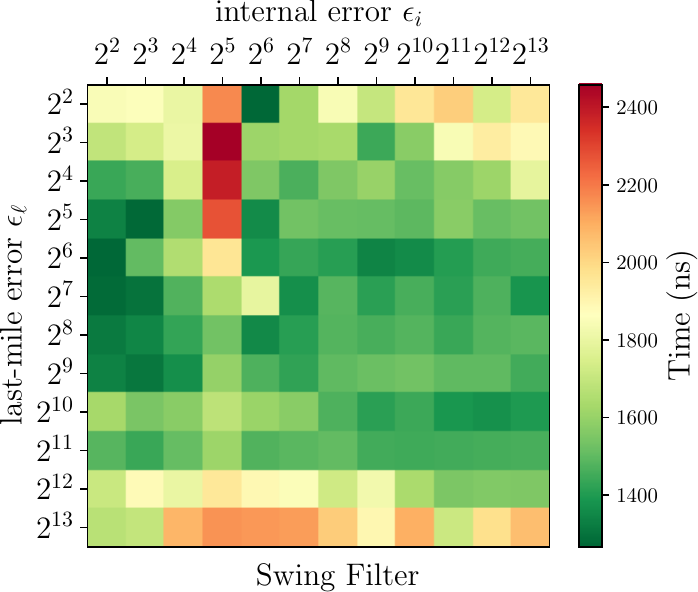}
        \end{subfigure}
        \caption*{\textsf{books}: Query Time vs. $\epsilon_\ell$ \& $\epsilon_i$} 
    \end{subfigure}

    \begin{subfigure}[t]{\textwidth}
        \centering
        \hfill
        \begin{subfigure}[t]{0.32\textwidth}
            \includegraphics[width=\linewidth]{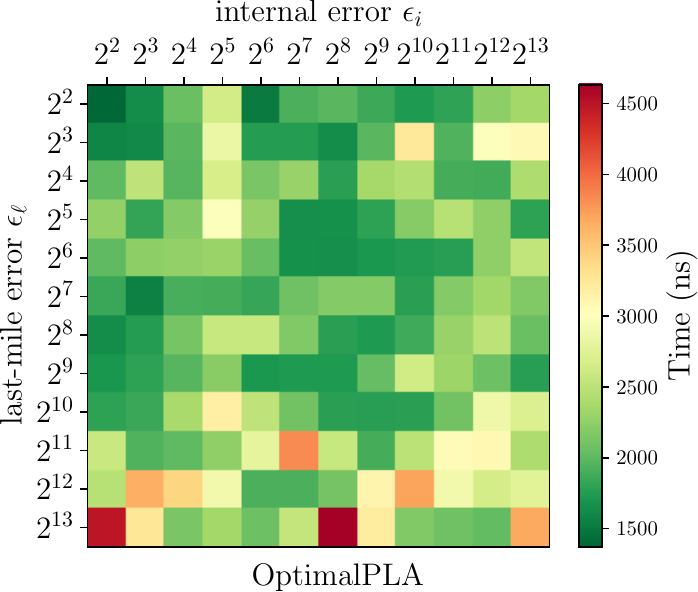}
        \end{subfigure}
        \hfill
        \begin{subfigure}[t]{0.32\textwidth}
            \includegraphics[width=\linewidth]{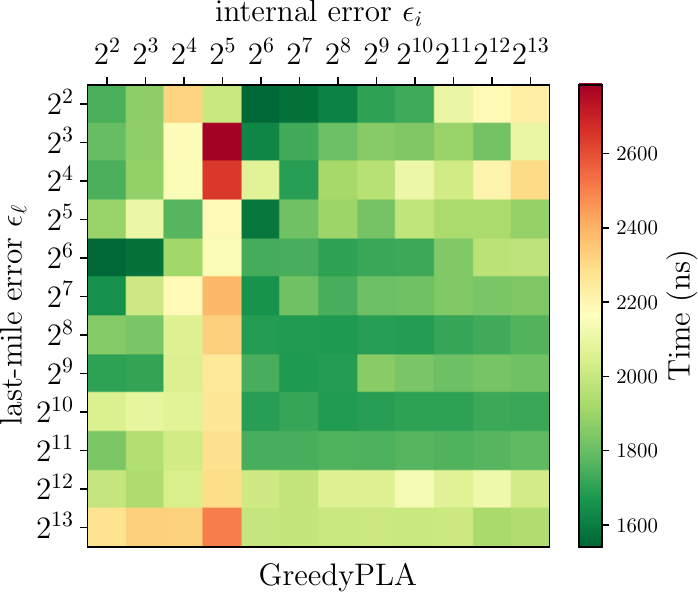}
        \end{subfigure}
        \hfill
        \begin{subfigure}[t]{0.32\textwidth}
            \includegraphics[width=\linewidth]{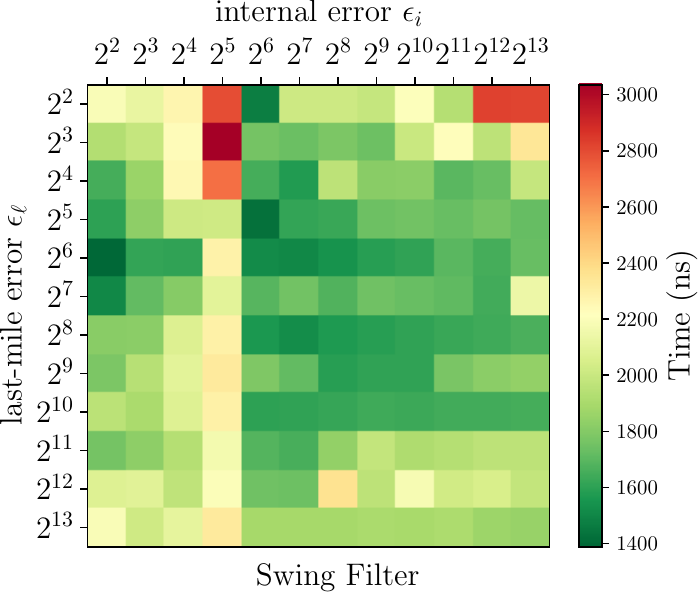}
        \end{subfigure}
        \caption*{\textsf{osm}: Query Time vs. $\epsilon_\ell$ \& $\epsilon_i$} 
    \end{subfigure}
    
    \caption{Query time of PGM-Index with varying $\epsilon_\ell$ and $\epsilon_i$ based on different $\epsilon$-PLA methods on \textsf{fb}, \textsf{books}, \textsf{osm} datasets}
    \label{fig:heatmap1}
    \vspace{-1ex}
\end{figure*}

\begin{figure*}[h]
    \centering

    \begin{subfigure}[t]{\textwidth}
        \centering
        \hfill
        \begin{subfigure}[t]{0.32\textwidth}
            \includegraphics[width=\linewidth]{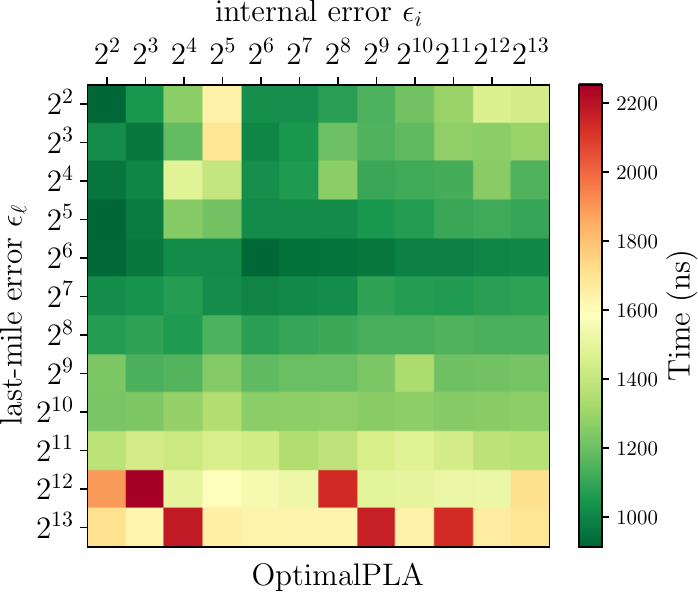}
        \end{subfigure}
        \hfill
        \begin{subfigure}[t]{0.32\textwidth}
            \includegraphics[width=\linewidth]{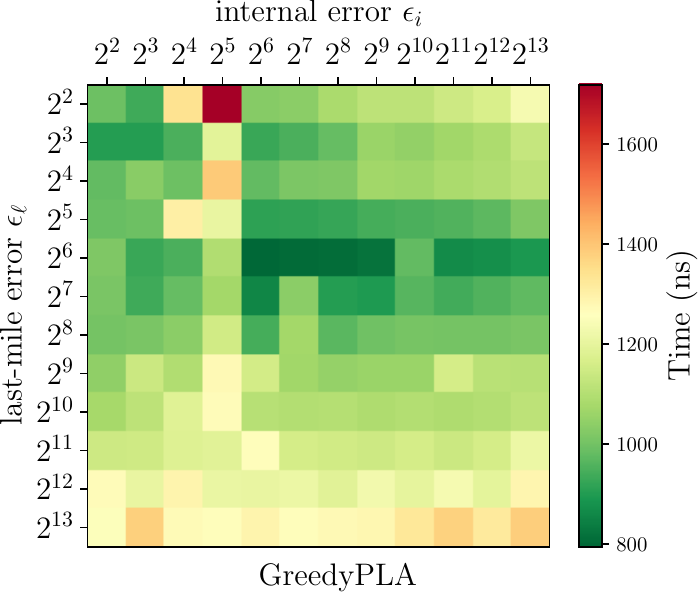}
        \end{subfigure}
        \hfill
        \begin{subfigure}[t]{0.32\textwidth}
            \includegraphics[width=\linewidth]{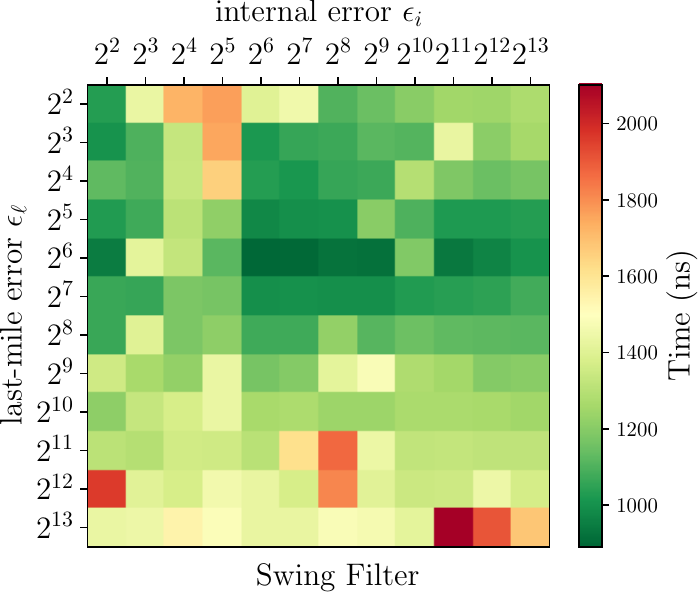}
        \end{subfigure}
        \caption*{\textsf{uniform}: Query Time vs. $\epsilon_\ell$ \& $\epsilon_i$} 
    \end{subfigure}

    \begin{subfigure}[t]{\textwidth}
        \centering
        \hfill
        \begin{subfigure}[t]{0.32\textwidth}
            \includegraphics[width=\linewidth]{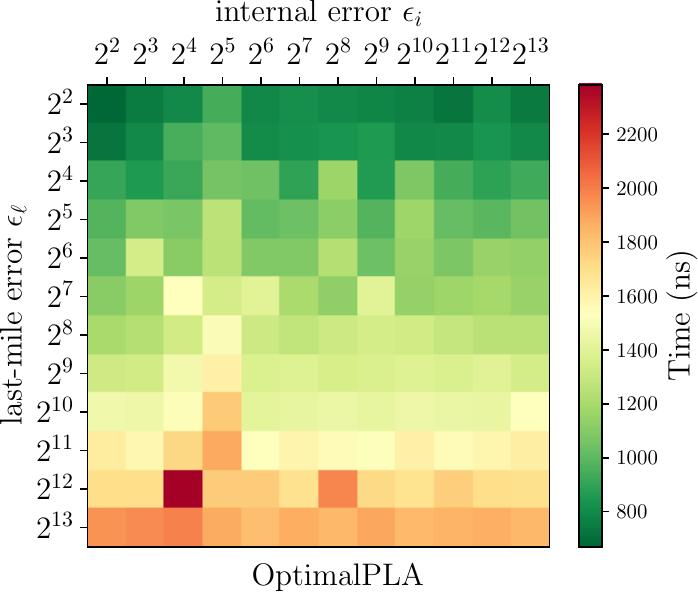}
        \end{subfigure}
        \hfill
        \begin{subfigure}[t]{0.32\textwidth}
            \includegraphics[width=\linewidth]{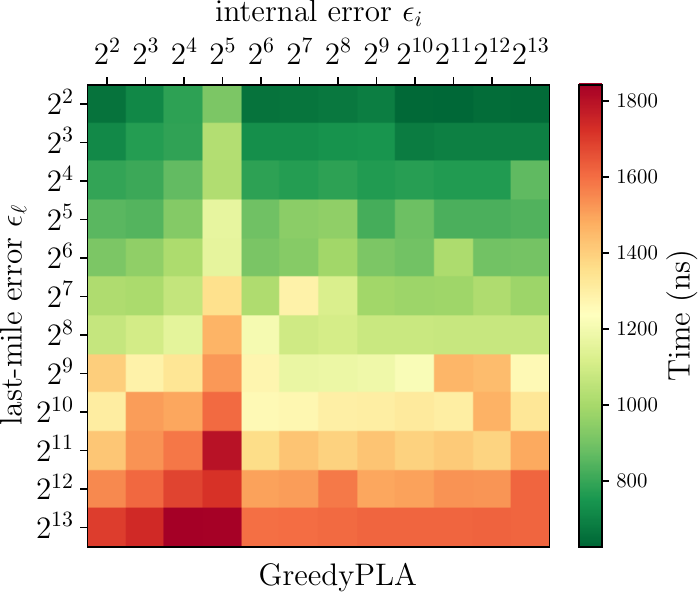}
        \end{subfigure}
        \hfill
        \begin{subfigure}[t]{0.32\textwidth}
            \includegraphics[width=\linewidth]{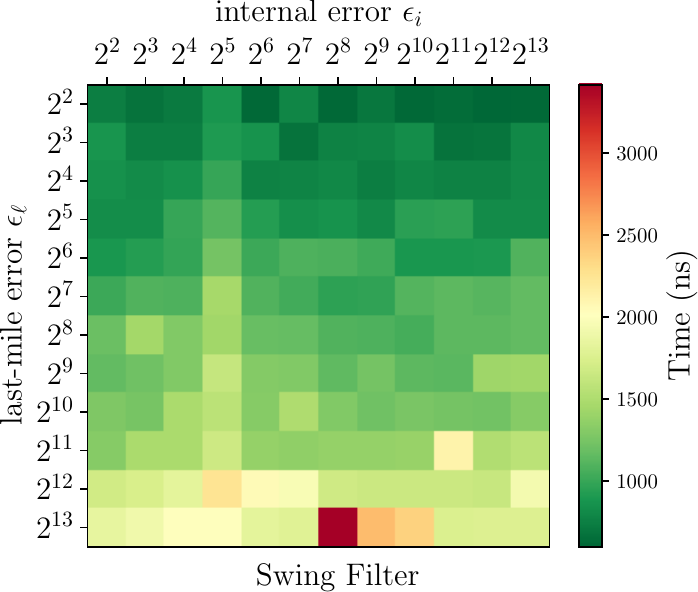}
        \end{subfigure}
        \caption*{\textsf{normal}: Query Time vs. $\epsilon_\ell$ \& $\epsilon_i$} 
    \end{subfigure}

    \begin{subfigure}[t]{\textwidth}
        \centering
        \hfill
        \begin{subfigure}[t]{0.32\textwidth}
            \includegraphics[width=\linewidth]{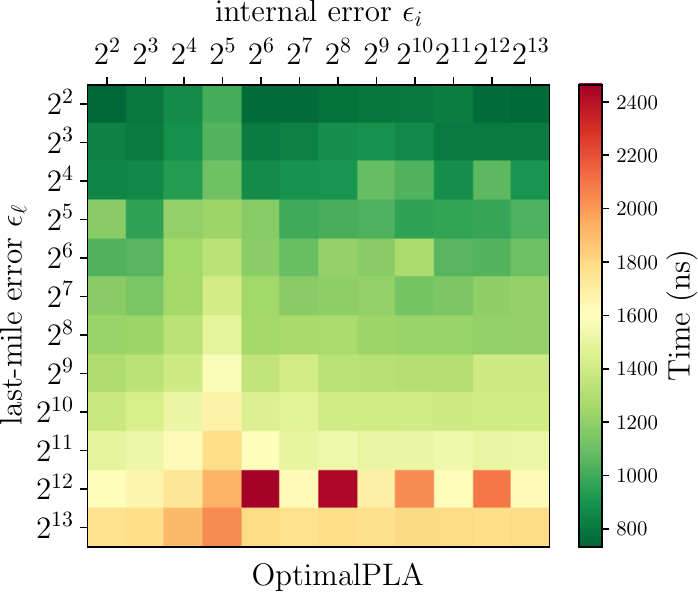}
        \end{subfigure}
        \hfill
        \begin{subfigure}[t]{0.32\textwidth}
            \includegraphics[width=\linewidth]{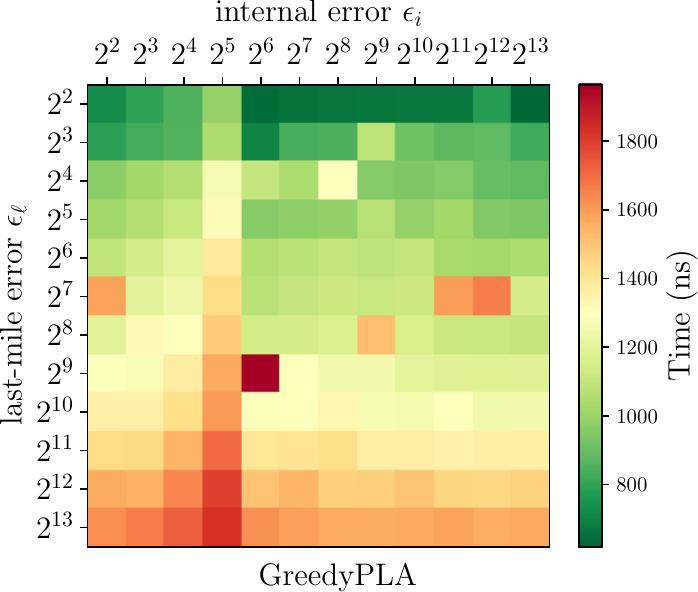}
        \end{subfigure}
        \hfill
        \begin{subfigure}[t]{0.32\textwidth}
            \includegraphics[width=\linewidth]{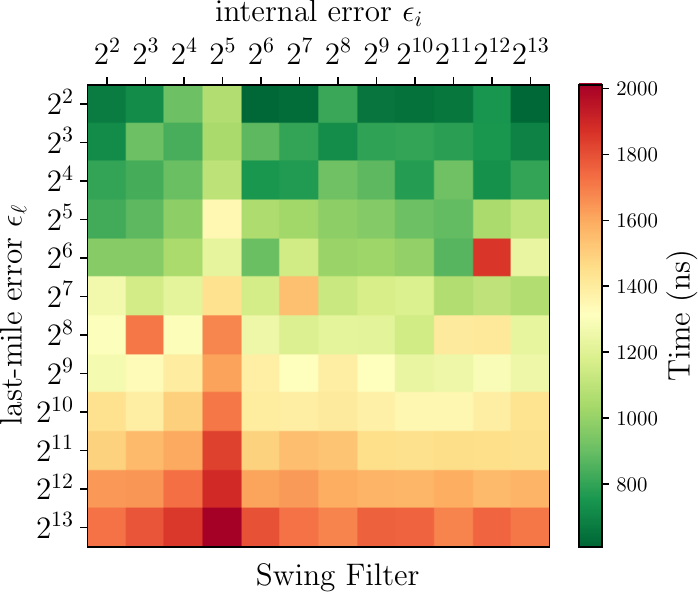}
        \end{subfigure}
        \caption*{\textsf{lognormal}: Query Time vs. $\epsilon_\ell$ \& $\epsilon_i$} 
    \end{subfigure}
    
    \caption{Query time of PGM-Index with varying $\epsilon_\ell$ and $\epsilon_i$ based on different $\epsilon$-PLA methods on synthetic datasets}
    \label{fig:heatmap2}
    \vspace{-1ex}
\end{figure*}

\subsubsection{Evaluation Results of FIT}
We first discuss how $\epsilon$ affects the performance of FITing-Tree. 
As shown in \cref{fig:fit_indexS} and \cref{fig:fit_btime}, when $\epsilon$ increases, both the construction time and index size decrease rapidly. 
This can be attributed to the significant decrease in leaf nodes, indicating that FITing-Tree performs better when applying PLA algorithms that generate fewer segments. 

\subsubsection{Evaluation Results of PGM}
We now focus on the performance of the PGM-Index under varying settings of the internal error $\epsilon_i$ and last-mile error $\epsilon_\ell$.

\cref{fig:pgm_indexS} clearly shows that the index size of PGM-Index is dominated by $\epsilon_\ell$. The memory of the last layer takes up $98\%-99\%$ of the total memory. The $\epsilon_i$ can be negligible when evaluating the index size of PGM-Index.

As for the query time, there are three key factors: \ding{182} the search range, determined by the internal error $\epsilon_i$ and the last-mile error $\epsilon_\ell$, \ding{183} the height of the PGM-Index, and \ding{184} the search method used to locate the target key.

As shown in \cref{fig:heatmap1} and \cref{fig:heatmap2}, when $\epsilon_i$ is fixed and $\epsilon_\ell$ increases, query latency first decreases and then increases, with a turning point at $\epsilon_\ell=2^5$. Before this point, increasing $\epsilon_\ell$ reduces the overall height of the PGM-Index, shortening the search path and reducing query latency. Beyond $\epsilon_\ell = 2^5$, the index height stabilizes, and the expanding search range within segments becomes the dominant factor, resulting in the U-shaped trend observed in \cref{fig:pgm_qtime}.

Notably, when $\epsilon_\ell$ is fixed and $\epsilon_i$ varies, we use linear search if $\epsilon_i\leq2^5$ and binary search otherwise. As shown in \cref{fig:heatmap1} and \cref{fig:heatmap2}, query latency drops noticeably at $\epsilon_i=2^{6}$, highlighting the significant performance impact of switching search strategies.

\subsubsection{Comparison Between FIT and PGM}
We further compare the space-time trade-offs of FIT and PGM in terms of index construction time, memory footprint, and query latency. As shown in \cref{tab:pgm_fit_comparison}, the flattened structure of PGM-Index~\cite{liu2024learned} achieves a smaller index size and faster construction time compared to FITing-Tree. Moreover, the deeper internal B$^+$-tree reduces the query efficiency of FITing-Tree compared to PGM-Index under a small $\epsilon$. As $\epsilon$ increases, the overall height of FITing-Tree diminishes, resulting in comparable querying latency to PGM-Index.
However, PGM-Index exhibits higher sensitivity to the performance of $\epsilon$-PLA methods compared to FITing-Tree. (e.g., when $\epsilon\in\{{2^4,2^5,2^6}\}$, the building time of PGM-Index varies more significantly across different $\epsilon$-PLA algorithms, whereas FITing-Tree shows smaller fluctuations.)

\section{Related Work}\label{sec:related_work}
In this section, we survey and discuss related works from three aspects: 
piecewise linear approximation, learned index, and other learned data sketches.

\textbf{Piecewise Linear Approximation.} 
Efficient PLA models are critical for scientific data processing and time series representation. 
Early PLA techniques primarily focused on minimizing the overall approximation error~\cite{appel1983adaptive,bellman1961approximation,pang2013computing,garofalakis2005wavelet}, but often lacked guarantees on per-point error bounds. 
In contrast, error-bounded PLA algorithms ($\epsilon$-PLA) offer stronger guarantees and typically achieve better space efficiency~\cite{zhao2020mixedsegments}, making them more suitable for applications such as learned indexes. 
Based on their design strategies, existing $\epsilon$-PLA algorithms can be broadly categorized as follows: 

\noindent
\ding{182}~\textit{Optimal PLA.} 
Several works~\cite{bellman1961approximation,terzi2006efficientalg,wu2021optimalsegmented} achieve optimal segmentations for specific error metrics via dynamic programming, which takes super-linear time complexity. 
To address scalability, ParaOptimal~\cite{ORourke81} and its variants like SlideFilter~\cite{elmeleegy2009online} and OptimalPLA~\cite{XiePZZD14} provide error-bounded guarantees while maintaining linear time complexity. 

\noindent
\ding{183}~\textit{Suboptimal PLA.}
Heuristic algorithms like \cite{Keogh2003Segsurvey,liu2008fsw,hu2019csmrtp} adopt either greedy or rule-based strategies to segment time series data. 
More recent greedy algorithms like SwingFilter~\cite{elmeleegy2009online}, GreedyPLA~\cite{XiePZZD14}, and SwingRR~\cite{Lin2020SwingRR} aim to minimize the number of segments while satisfying the error constraint. 

Despite the distinction between optimal and suboptimal strategies, existing work does not clearly establish which approach is superior in the context of learned index design. 
Our work addresses this gap by systematically evaluating their trade-offs in terms of space and construction/query efficiency when integrated into learned index structures.

\textbf{Learned Index Structures.}
Learned indexes model the cumulative distribution function (CDF) over sorted keys using machine learning techniques, offering better space-time trade-offs~\cite{liu2024learned} than traditional tree-based indexes such as B$^+$-Tree~\cite{Comer1979B+-Tree}, FAST~\cite{Kim2010Fast}, and Wormhole~\cite{Wu2019Wormhole}. 
RMI~\cite{kraska2018case} pioneers this paradigm with a static recursive model hierarchy. 
ALEX~\cite{ding2020alex} improves dynamic adaptability via a gap-aware array, while XIndex~\cite{Tang2020XIndex} supports concurrent writes through two-phase compaction. 
FITing-Tree~\cite{galakatos2019fiting} replaces B$^+$-Tree leaves with linear segments, and FINEdex~\cite{Li2021FINEdex} adopts a flattened layout to enhance dynamic performance. 
PGM-Index~\cite{ferragina2020pgm} further improves space efficiency by recursively applying $\epsilon$-PLA to build a fully flattened index. 
Recent advances also target hardware-specific optimizations, including GPU-friendly GIndex~\cite{Liu2024GIndex}, cache-aware FINEdex-Cache~\cite{Zhang2022Cache}, and disk-optimized structures~\cite{Zhang2024Disk}, expanding the applicability of learned indexes in diverse system settings. 

In this work, we focus primarily on PGM-Index and FITing-Tree, as they represent two distinct paradigms on learned index design: fully model-based indexing and model-augmented conventional index structures, respectively. 

\textbf{Other Learned Data Structures.} 
Beyond data indexing, machine learning techniques have also been applied to enhance a variety of data structures. 
Following the taxonomy proposed in~\cite{Ferragina2020LearnedDataStruct}, we categorize learned data structures into three groups based on their application scenarios: 

\noindent
\ding{182} \textit{Data-aware Hashing:} Representative methods include Semantic Hashing~\cite{Ruslan2009Semantichashing} and Spectral Hashing~\cite{Weiss2008Spectralhashing}, which pioneered the use of learned projection vectors to replace random projections in hashing. 

\noindent
\ding{183} \textit{Approximate Membership:} 
Typical examples are the Learned Bloom Filter~\cite{kraska2018case} and its extensions~\cite{Mitzenmacher2018SandwichingBloom, Rae2019Meta, Liu2020Stable}, which learn mapping functions for approximate set membership. 

\noindent
\ding{184} \textit{Frequency Estimation:} 
These methods estimate key frequencies without storing all keys explicitly. 
A representative approach is the Learned Count-Min Sketch~\cite{Hsu2019LearningFrequency}, which pre-trains a model to identify high-frequency items and assigns them exact counters.

Given that $\epsilon$-PLA serves as a simple yet powerful fitting model across a wide range of applications, it is promising to explore its integration with the aforementioned learned data structures, potentially enabling new theoretical insights or empirical performance improvements.
And we leave these for our future work.

\section{Conclusion}\label{sec:conclusion}
This paper revisits error-bounded Piecewise Linear Approximation ($\epsilon$-PLA), a fundamental component in modern learned index structures. 
We contribute a theoretical lower bound of $\Omega(\kappa \cdot \epsilon^2)$ on segment coverage, offering a clearer understanding of the effectiveness of existing fitting algorithms. 
Complementing this analysis, we conduct a comprehensive empirical study of state-of-the-art $\epsilon$-PLA algorithms in two representative learned index frameworks: PGM-Index and FITing-Tree. 
Our findings uncover nuanced trade-offs between error bound, index size, construction cost, and query performance. 
Notably, while optimal algorithms minimize space usage, greedy algorithms may outperform in query and construction time under suitable error bounds. 
These results motivate more principled algorithm selection and parameter tuning when building learned index structures. 
Beyond indexing, we believe $\epsilon$-PLA holds promise for broader use in other learned data structures, which are left for future exploration.



\bibliographystyle{IEEEtran}
\bibliography{ref}

\begin{thebibliography}{10}
\providecommand{\url}[1]{#1}
\csname url@samestyle\endcsname
\providecommand{\newblock}{\relax}
\providecommand{\bibinfo}[2]{#2}
\providecommand{\BIBentrySTDinterwordspacing}{\spaceskip=0pt\relax}
\providecommand{\BIBentryALTinterwordstretchfactor}{4}
\providecommand{\BIBentryALTinterwordspacing}{\spaceskip=\fontdimen2\font plus
\BIBentryALTinterwordstretchfactor\fontdimen3\font minus \fontdimen4\font\relax}
\providecommand{\BIBforeignlanguage}[2]{{%
\expandafter\ifx\csname l@#1\endcsname\relax
\typeout{** WARNING: IEEEtran.bst: No hyphenation pattern has been}%
\typeout{** loaded for the language `#1'. Using the pattern for}%
\typeout{** the default language instead.}%
\else
\language=\csname l@#1\endcsname
\fi
#2}}
\providecommand{\BIBdecl}{\relax}
\BIBdecl

\bibitem{kraska2018case}
T.~Kraska, A.~Beutel, E.~H. Chi, J.~Dean, and N.~Polyzotis, ``The case for learned index structures,'' in \emph{Proceedings of the 2018 international conference on management of data}, 2018, pp. 489--504.

\bibitem{mitzenmacher2018model}
M.~Mitzenmacher, ``A model for learned bloom filters and optimizing by sandwiching,'' \emph{Advances in Neural Information Processing Systems}, vol.~31, 2018.

\bibitem{qi2020effectively}
J.~Qi, G.~Liu, C.~S. Jensen, and L.~Kulik, ``Effectively learning spatial indices,'' \emph{Proceedings of the VLDB Endowment}, vol.~13, no.~12, pp. 2341--2354, 2020.

\bibitem{marcus2020benchmarking}
R.~Marcus, A.~Kipf, A.~van Renen, M.~Stoian, S.~Misra, A.~Kemper, T.~Neumann, and T.~Kraska, ``Benchmarking learned indexes,'' \emph{Proceedings of the VLDB Endowment}, vol.~14, no.~1, pp. 1--13, 2020.

\bibitem{wongkham2022updatable}
C.~Wongkham, B.~Lu, C.~Liu, Z.~Zhong, E.~Lo, and T.~Wang, ``Are updatable learned indexes ready?'' \emph{Proceedings of the VLDB Endowment}, vol.~15, no.~11, pp. 3004--3017, 2022.

\bibitem{ferragina2020pgm}
P.~Ferragina and G.~Vinciguerra, ``The pgm-index: a fully-dynamic compressed learned index with provable worst-case bounds,'' \emph{Proceedings of the VLDB Endowment}, vol.~13, no.~8, pp. 1162--1175, 2020.

\bibitem{lecun2015deep}
Y.~LeCun, Y.~Bengio, and G.~Hinton, ``Deep learning,'' \emph{nature}, vol. 521, no. 7553, pp. 436--444, 2015.

\bibitem{ferragina2020learned}
P.~Ferragina, F.~Lillo, and G.~Vinciguerra, ``Why are learned indexes so effective?'' in \emph{International Conference on Machine Learning}.\hskip 1em plus 0.5em minus 0.4em\relax PMLR, 2020, pp. 3123--3132.

\bibitem{galakatos2019fiting}
A.~Galakatos, M.~Markovitch, C.~Binnig, R.~Fonseca, and T.~Kraska, ``Fiting-tree: A data-aware index structure,'' in \emph{Proceedings of the 2019 international conference on management of data}, 2019, pp. 1189--1206.

\bibitem{ChenLLD0LZ23}
D.~Chen, W.~Li, Y.~Li, B.~Ding, K.~Zeng, D.~Lian, and J.~Zhou, ``Learned index with dynamic $\epsilon$,'' in \emph{{ICLR}}.\hskip 1em plus 0.5em minus 0.4em\relax OpenReview.net, 2023.

\bibitem{dai2020wisckey}
Y.~Dai, Y.~Xu, A.~Ganesan, R.~Alagappan, B.~Kroth, A.~Arpaci-Dusseau, and R.~Arpaci-Dusseau, ``From $\{$WiscKey$\}$ to bourbon: A learned index for $\{$Log-Structured$\}$ merge trees,'' in \emph{14th USENIX Symposium on Operating Systems Design and Implementation (OSDI 20)}, 2020, pp. 155--171.

\bibitem{liu2021lhist}
Q.~Liu, Y.~Shen, and L.~Chen, ``Lhist: towards learning multi-dimensional histogram for massive spatial data,'' in \emph{2021 IEEE 37th International Conference on Data Engineering (ICDE)}.\hskip 1em plus 0.5em minus 0.4em\relax IEEE, 2021, pp. 1188--1199.

\bibitem{liu2022hap}
------, ``{HAP}: an efficient hamming space index based on augmented pigeonhole principle,'' in \emph{Proceedings of the 2022 International Conference on Management of Data}, 2022, pp. 917--930.

\bibitem{liu2025bittuner}
Q.~Liu, Y.~Luo, M.~Cui, S.~Han, J.~Peng, J.~Li, and L.~Chen, ``Bittuner: A toolbox for automatically configuring learned data compressors,'' in \emph{2025 IEEE 41st International Conference on Data Engineering (ICDE)}.\hskip 1em plus 0.5em minus 0.4em\relax IEEE Computer Society, 2025, pp. 4548--4551.

\bibitem{liu2024ldc}
Q.~Liu, S.~Han, J.~Liao, J.~Li, J.~Peng, J.~Du, and L.~Chen, ``Learned data compression: Challenges and opportunities for the future,'' \emph{arXiv preprint arXiv:2412.10770}, 2024.

\bibitem{bayer1970organization}
R.~Bayer and E.~McCreight, ``Organization and maintenance of large ordered indices,'' in \emph{Proceedings of the 1970 ACM SIGFIDET (Now SIGMOD) Workshop on Data Description, Access and Control}, 1970, pp. 107--141.

\bibitem{ORourke81}
J.~O'Rourke, ``An on-line algorithm for fitting straight lines between data ranges,'' \emph{Commun. {ACM}}, vol.~24, no.~9, pp. 574--578, 1981.

\bibitem{XiePZZD14}
Q.~Xie, C.~Pang, X.~Zhou, X.~Zhang, and K.~Deng, ``Maximum error-bounded piecewise linear representation for online stream approximation,'' \emph{{VLDB} J.}, vol.~23, no.~6, pp. 915--937, 2014.

\bibitem{cameron1966piece}
S.~H. Cameron, ``Piece-wise linear approximations,'' \emph{Technical note CSTN-106, Computer Sciences Division, IIT Research Institute, Chicago, IL}, 1966.

\bibitem{chiarot2023time}
G.~Chiarot and C.~Silvestri, ``Time series compression survey,'' \emph{ACM Computing Surveys}, vol.~55, no.~10, pp. 1--32, 2023.

\bibitem{elmeleegy2009online}
H.~Elmeleegy, A.~K. Elmagarmid, E.~Cecchet, W.~G. Aref, and W.~Zwaenepoel, ``Online piece-wise linear approximation of numerical streams with precision guarantees,'' \emph{Proceedings of the VLDB Endowment}, vol.~2, no.~1, pp. 145--156, 2009.

\bibitem{boffa2022learned}
A.~Boffa, P.~Ferragina, and G.~Vinciguerra, ``A learned approach to design compressed rank/select data structures,'' \emph{ACM Transactions on Algorithms (TALG)}, vol.~18, no.~3, pp. 1--28, 2022.

\bibitem{pibiri2020techniques}
G.~E. Pibiri and R.~Venturini, ``Techniques for inverted index compression,'' \emph{ACM Computing Surveys (CSUR)}, vol.~53, no.~6, pp. 1--36, 2020.

\bibitem{gardiner1985handbook}
C.~W. Gardiner, ``Handbook of stochastic methods for physics, chemistry and the natural sciences,'' \emph{Springer series in synergetics}, 1985.

\bibitem{redner2001guide}
S.~Redner, \emph{A guide to first-passage processes}.\hskip 1em plus 0.5em minus 0.4em\relax Cambridge university press, 2001.

\bibitem{gentle2009computational}
J.~E. Gentle, \emph{Computational statistics}.\hskip 1em plus 0.5em minus 0.4em\relax Springer, 2009, vol. 308.

\bibitem{cardone2022entropic}
M.~Cardone, A.~Dytso, and C.~Rush, ``Entropic central limit theorem for order statistics,'' \emph{IEEE Transactions on Information Theory}, vol.~69, no.~4, pp. 2193--2205, 2022.

\bibitem{akhremtsev2016fast}
Y.~Akhremtsev and P.~Sanders, ``Fast parallel operations on search trees,'' in \emph{2016 IEEE 23rd International Conference on High Performance Computing (HiPC)}.\hskip 1em plus 0.5em minus 0.4em\relax IEEE, 2016, pp. 291--300.

\bibitem{pgm}
``{PGM-Index},'' \url{https://github.com/gvinciguerra/PGM-index}, accessed: 2025-06-10.

\bibitem{stx}
``{STX B+ Tree C++} template classes,'' \url{https://panthema.net/2007/stx-btree/}, accessed: 2025-06-10.

\bibitem{marcus2020SOSD}
R.~Marcus, A.~Kipf, A.~van Renen, M.~Stoian, S.~Misra, A.~Kemper, T.~Neumann, and T.~Kraska, ``Benchmarking learned indexes,'' \emph{Proc. VLDB Endow.}, vol.~14, no.~1, p. 1–13, Sep. 2020.

\bibitem{van2019ESISA}
\BIBentryALTinterwordspacing
P.~Van~Sandt, Y.~Chronis, and J.~M. Patel, ``Efficiently searching in-memory sorted arrays: Revenge of the interpolation search?'' in \emph{Proceedings of the 2019 International Conference on Management of Data}, ser. SIGMOD '19, New York, NY, USA, 2019, p. 36–53. [Online]. Available: \url{https://doi.org/10.1145/3299869.3300075}
\BIBentrySTDinterwordspacing

\bibitem{openstreetmap}
\BIBentryALTinterwordspacing
{openstreetmap}, ``Openstreetmap,'' [n.d.]. [Online]. Available: \url{https://www.openstreetmap.org/}
\BIBentrySTDinterwordspacing

\bibitem{liu2024learned}
Q.~Liu, S.~Han, Y.~Qi, J.~Peng, J.~Li, L.~Lin, and L.~Chen, ``Why are learned indexes so effective but sometimes ineffective?'' \emph{arXiv preprint arXiv:2410.00846}, 2024.

\bibitem{appel1983adaptive}
U.~Appel and A.~V. Brandt, ``Adaptive sequential segmentation of piecewise stationary time series,'' \emph{Information sciences}, vol.~29, no.~1, pp. 27--56, 1983.

\bibitem{bellman1961approximation}
R.~Bellman, ``On the approximation of curves by line segments using dynamic programming,'' \emph{Communications of the ACM}, vol.~4, no.~6, p. 284, 1961.

\bibitem{pang2013computing}
C.~Pang, Q.~Zhang, X.~Zhou, D.~Hansen, S.~Wang, and A.~Maeder, ``Computing unrestricted synopses under maximum error bound,'' \emph{Algorithmica}, vol.~65, pp. 1--42, 2013.

\bibitem{garofalakis2005wavelet}
M.~Garofalakis and A.~Kumar, ``Wavelet synopses for general error metrics,'' \emph{ACM Transactions on Database Systems (TODS)}, vol.~30, no.~4, pp. 888--928, 2005.

\bibitem{zhao2020mixedsegments}
H.~Zhao, T.~Li, G.~Chen, Z.~Dong, M.~Bo, and C.~Pang, ``An online pla algorithm with maximum error bound for generating optimal mixed-segments,'' \emph{International Journal of Machine Learning and Cybernetics}, vol.~11, pp. 1483--1499, 2020.

\bibitem{terzi2006efficientalg}
E.~Terzi and P.~Tsaparas, ``Efficient algorithms for sequence segmentation,'' in \emph{Proceedings of the 2006 SIAM International Conference on Data Mining}.\hskip 1em plus 0.5em minus 0.4em\relax SIAM, 2006, pp. 316--327.

\bibitem{wu2021optimalsegmented}
C.-J. Wu, W.-S. Zeng, and J.-M. Ho, ``Optimal segmented linear regression for financial time series segmentation,'' in \emph{2021 International Conference on Data Mining Workshops (ICDMW)}, 2021, pp. 623--630.

\bibitem{Keogh2003Segsurvey}
E.~Keogh, S.~Chu, D.~Hart, and M.~Pazzani, ``Segmenting time series: A survey and novel approach,'' \emph{Data Mining in Time Series Databases}, vol.~57, 03 2003.

\bibitem{liu2008fsw}
X.~Liu, Z.~Lin, and H.~Wang, ``Novel online methods for time series segmentation,'' \emph{IEEE Transactions on Knowledge and Data Engineering}, vol.~20, no.~12, pp. 1616--1626, 2008.

\bibitem{hu2019csmrtp}
Y.~Hu, P.~Guan, P.~Zhan, Y.~Ding, and X.~Li, ``A novel segmentation and representation approach for streaming time series,'' \emph{IEEE Access}, vol.~7, pp. 184\,423--184\,437, 2019.

\bibitem{Lin2020SwingRR}
J.-W. Lin, S.-w. Liao, and F.-Y. Leu, ``A novel bounded-error piecewise linear approximation algorithm for streaming sensor data in edge computing,'' in \emph{Advances in Intelligent Networking and Collaborative Systems}, L.~Barolli, H.~Nishino, and H.~Miwa, Eds.\hskip 1em plus 0.5em minus 0.4em\relax Cham: Springer International Publishing, 2020, pp. 123--132.

\bibitem{Comer1979B+-Tree}
D.~Comer, ``Ubiquitous b-tree,'' \emph{ACM Comput. Surv.}, vol.~11, no.~2, p. 121–137, Jun. 1979.

\bibitem{Kim2010Fast}
C.~Kim, J.~Chhugani, N.~Satish, E.~Sedlar, A.~D. Nguyen, T.~Kaldewey, V.~W. Lee, S.~A. Brandt, and P.~Dubey, ``Fast: fast architecture sensitive tree search on modern cpus and gpus,'' in \emph{Proceedings of the 2010 ACM SIGMOD International Conference on Management of Data}, ser. SIGMOD '10.\hskip 1em plus 0.5em minus 0.4em\relax Association for Computing Machinery, 2010, p. 339–350.

\bibitem{Wu2019Wormhole}
X.~Wu, F.~Ni, and S.~Jiang, ``Wormhole: A fast ordered index for in-memory data management,'' ser. EuroSys '19.\hskip 1em plus 0.5em minus 0.4em\relax New York, NY, USA: Association for Computing Machinery, 2019.

\bibitem{ding2020alex}
J.~Ding, U.~F. Minhas, J.~Yu, C.~Wang, J.~Do, Y.~Li, H.~Zhang, B.~Chandramouli, J.~Gehrke, D.~Kossmann \emph{et~al.}, ``Alex: an updatable adaptive learned index,'' in \emph{Proceedings of the 2020 ACM SIGMOD International Conference on Management of Data}, 2020, pp. 969--984.

\bibitem{Tang2020XIndex}
C.~Tang, Y.~Wang, Z.~Dong, G.~Hu, Z.~Wang, M.~Wang, and H.~Chen, ``Xindex: a scalable learned index for multicore data storage,'' in \emph{Proceedings of the 25th ACM SIGPLAN Symposium on Principles and Practice of Parallel Programming}, ser. PPoPP '20.\hskip 1em plus 0.5em minus 0.4em\relax New York, NY, USA: Association for Computing Machinery, 2020, p. 308–320.

\bibitem{Li2021FINEdex}
P.~Li, Y.~Hua, J.~Jia, and P.~Zuo, ``Finedex: a fine-grained learned index scheme for scalable and concurrent memory systems,'' \emph{Proc. VLDB Endow.}, vol.~15, no.~2, p. 321–334, Oct. 2021.

\bibitem{Liu2024GIndex}
J.~Liu, F.~Zhang, L.~Lu, C.~Qi, X.~Guo, D.~Deng, G.~Li, H.~Zhang, J.~Zhai, H.~Zhang, Y.~Chen, A.~Pan, and X.~Du, ``G-learned index: Enabling efficient learned index on gpu,'' \emph{IEEE Transactions on Parallel and Distributed Systems}, vol.~35, no.~6, pp. 950--967, 2024.

\bibitem{Zhang2022Cache}
J.~Zhang and Y.~Gao, ``Carmi: a cache-aware learned index with a cost-based construction algorithm,'' \emph{Proc. VLDB Endow.}, vol.~15, no.~11, p. 2679–2691, Jul. 2022.

\bibitem{Zhang2024Disk}
J.~Zhang, K.~Su, and H.~Zhang, ``Making in-memory learned indexes efficient on disk,'' \emph{Proc. ACM Manag. Data}, vol.~2, no.~3, May 2024.

\bibitem{Ferragina2020LearnedDataStruct}
G.~Ferragina, Paoloand~Vinciguerra, \emph{Learned Data Structures}.\hskip 1em plus 0.5em minus 0.4em\relax Cham: Springer International Publishing, 2020, pp. 5--41.

\bibitem{Ruslan2009Semantichashing}
R.~Salakhutdinov and G.~Hinton, ``Semantic hashing,'' \emph{International Journal of Approximate Reasoning}, vol.~50, no.~7, pp. 969--978, 2009, special Section on Graphical Models and Information Retrieval.

\bibitem{Weiss2008Spectralhashing}
Y.~Weiss, A.~Torralba, and R.~Fergus, ``Spectral hashing,'' in \emph{Proceedings of the 22nd International Conference on Neural Information Processing Systems}, ser. NIPS'08.\hskip 1em plus 0.5em minus 0.4em\relax Red Hook, NY, USA: Curran Associates Inc., 2008, p. 1753–1760.

\bibitem{Mitzenmacher2018SandwichingBloom}
M.~Mitzenmacher, ``A model for learned bloom filters, and optimizing by sandwiching,'' in \emph{Proceedings of the 32nd International Conference on Neural Information Processing Systems}, ser. NIPS'18.\hskip 1em plus 0.5em minus 0.4em\relax Red Hook, NY, USA: Curran Associates Inc., 2018, p. 462–471.

\bibitem{Rae2019Meta}
J.~Rae, S.~Bartunov, and T.~Lillicrap, ``Meta-learning neural bloom filters,'' in \emph{International Conference on Machine Learning}.\hskip 1em plus 0.5em minus 0.4em\relax PMLR, 2019, pp. 5271--5280.

\bibitem{Liu2020Stable}
Q.~Liu, L.~Zheng, Y.~Shen, and L.~Chen, ``Stable learned bloom filters for data streams,'' \emph{Proceedings of the VLDB Endowment}, vol.~13, no.~12, pp. 2355--2367, 2020.

\bibitem{Hsu2019LearningFrequency}
C.-Y. Hsu, P.~Indyk, D.~Katabi, and A.~Vakilian, ``Learning-based frequency estimation algorithms.'' in \emph{International Conference on Learning Representations}, 2019.

\end{thebibliography}

\end{document}